\newcommand{\mycomment}[1]{}
\newcommand{\eproof}{\hfill $\Box$}
\newcommand{\diverge}{\to\infty}
\newcommand{\reals}{{\mathbb{R}}}
\newcommand{\pth}[1]{\left( #1 \right)}
\newcommand{\norm}[1]{\left\|{#1} \right\|}
\newcommand{\tx}{{\widetilde{x}}}
\newcommand{\calA}{{\mathcal{A}}}
\newcommand{\calC}{{\mathcal{C}}}
\newcommand{\calF}{{\mathcal{F}}}
\newcommand{\calI}{{\mathcal{I}}}
\newcommand{\calL}{{\mathcal{L}}}
\newcommand{\calN}{{\mathcal{N}}}
\newcommand{\calR}{{\mathcal{R}}}
\newcommand{\calS}{{\mathcal{S}}}
\newcommand{\calV}{{\mathcal{V}}}
\newcommand{\argmin}{{\rm argmin}}
\begin{document}

\title{Byzantine Multi-Agent Optimization -- Part I\,\thanks{This research is supported in part by National Science Foundation awards NSF 1329681 and 1421918.
Any opinions, findings, and conclusions or recommendations expressed here are those of the authors
and do not necessarily reflect the views of the funding agencies or the U.S. government.}}

\author{Lili Su \hspace*{1in} Nitin Vaidya}
\institute{Department of Electrical and Computer Engineering, and\\
Coordinated Science Laboratory\\
University of Illinois at Urbana-Champaign\\
Email:\{lilisu3, nhv\}@illinois.edu}
\maketitle

\begin{center}
Technical Report\\
~\\

June 2015\footnote{Revised in July 2015}
\end{center}
~

\centerline{\bf Abstract}

~
We study Byzantine fault-tolerant distributed optimization of a sum of convex (cost) functions with real-valued scalar
input/ouput. In particular, the goal is to optimize
a global cost function $\frac{1}{|\calN|}\sum_{i\in \calN} h_i(x)$, where $\calN$ is the set of non-faulty
agents, and $h_i(x)$ is agent $i$'s local cost function, which is initially
known only to agent $i$. In general,
when some of the agents may be Byzantine faulty, the above goal is unachievable,
because the identity of the faulty agents is not necessarily known to the non-faulty agents, and the faulty
agents may behave arbitrarily.
Since the above global cost function cannot be optimized exactly in presence of Byzantine agents,
we define a weaker version of the problem.\\

The goal for the weaker problem is to generate an output that is an optimum of a function formed as
a {\em convex combination} of local cost functions of the non-faulty agents. More precisely, for some choice
of weights $\alpha_i$ for $i\in \calN$ such that $\alpha_i\geq 0$ and $\sum_{i\in \calN}\alpha_i=1$,
the output must be an optimum of the cost function
$\sum_{i\in \calN} \alpha_ih_i(x)$. Ideally, we would like $\alpha_i=\frac{1}{|\calN|}$ for all $i\in \calN$ -- however,
this cannot be guaranteed due to the presence of faulty agents.
In fact, we show that the maximum achievable number of nonzero weights
($\alpha_i$'s) is $|\calN|-f$, where $f$ is the upper bound on the number of Byzantine agents.
In addition, we present algorithms that ensure that at least $|\calN|-f$ agents have weights that are
bounded away from 0. A low-complexity suboptimal algorithm is proposed, which ensures that at least $\lceil \frac{n}{2}\rceil-\phi$ agents have weights that are
bounded away from 0, where $n$ is the total number of agents, and $\phi$ ($\phi\le f$) is the actual number of Byzantine agents.

%

~

~

%
%
%

\newpage

\section{System Model and Problem Formulation}
\label{sec:intro}

The system under consideration is synchronous, and consists of $n$ agents connected by a complete communication network.
The set of agents is $\calV=\{1,\cdots,n\}$.
We assume that $n>3f$ for reasons that will be clearer soon.
%
%
We say that a function $h: \mathbb{R}\rightarrow \mathbb{R}$ is {\em admissible} if (i) $h(\cdot)$ is convex, and continuously
differentiable,
and (ii) the set $\arg\min_{x\in\mathbb{R}} h(x)$ containing the optima of $h(\cdot)$
is non-empty and compact (i.e., bounded and closed).
Each agent $i\in \calV$ is initially provided with an {\em admissible} local cost function $h_i: \mathbb{R}\rightarrow\mathbb{R}$.

Up to $f$ of the $n$ agents may be Byzantine faulty.
Let $\calF$ denote the set of faulty agents, and let $\calN = \calV - \calF$ denote the
set of non-faulty agents. The set $\calF$ of faulty agents may be chosen by an adversary arbitrarily. Let $|\calF|=\phi$. Note that $\phi\leq f$ and $|\calN|\geq n-f$.\\

The ideal goal here is to develop algorithms that optimize the {\em average} of
the local cost functions at the non-faulty agents, and allow the non-faulty
agents to reach consensus on an optimum $x$.
Thus, ideally, as stated in Problem 1 in Figure \ref{fig:prob},
each non-faulty agent should output an {\em identical} value $\tx\in\mathbb{R}$
that minimizes $\frac{1}{|\calN|}\sum_{i\in \calN} h_i(x)$.

\begin{figure}
\begin{tabular}{|l|l|l|}\hline
\begin{minipage}[t]{0.3\textwidth}
{\bf Problem 1}\\
\begin{align}
\nonumber
\tx\in\arg \min_{x\in\mathbb{R}}\quad \frac{1}{|\calN|}\sum_{i\in \calN} h_i(x)
\end{align}
\end{minipage}
&
\begin{minipage}[t]{0.30\textwidth}
{\bf Problem 2}\\
\begin{eqnarray*}
\tx & \in & \arg \min_{x\in\mathbb{R}}\quad \sum_{i\in \calN} \alpha_i h_i(x)\\
\text{such that} && \nonumber \\
&&\forall i\in\calN, ~ \alpha_i\geq 0 \text{~~and~~} \nonumber \\
&&\sum_{i\in \calN}\alpha_i=1 \nonumber
\end{eqnarray*}
\end{minipage}
&
\begin{minipage}[t]{0.4\textwidth}
{\bf Problem 3 with parameters $\beta,\gamma$, $\beta\geq 0$}\\
\begin{eqnarray*}
\tx & \in & \arg \min_{x\in\mathbb{R}}\quad \sum_{i\in \calN} \alpha_i h_i(x)\\
\text{such that} && \nonumber \\
&&\forall i\in\calN, ~ \alpha_i\geq 0,  \nonumber \\
&&\sum_{i\in \calN}\alpha_i=1, \text{~~and~~} \nonumber \\
&&\sum_{i\in\calN} {\bf 1}(\alpha_i>\beta) ~ \geq ~ \gamma \nonumber
\end{eqnarray*}

\end{minipage}
~\\
\hline
\end{tabular}
\caption{Problem formulations: All non-faulty agents
must output an identical value $\tx\in\mathbb{R}$ that satisfies the constraints specified in each problem formulation.}
\label{fig:prob}
\end{figure}

The presence of Byzantine faulty nodes makes it impossible to design an algorithm that can solve Problem 1 for all admissible local cost functions (this is shown formally in Appendix \ref{appendix:imposs:0}).  Therefore, we introduce a weaker version of the problem, namely, Problem 2 in Figure \ref{fig:prob}.
Problem 2 requires that the output $\tx$ be an optimum of a function formed as
a {\em convex combination} of local cost functions of the non-faulty agents. More precisely, for some choice
of weights $\alpha_i$ for $i\in \calN$ such that $\alpha_i\geq 0$ and $\sum_{i\in \calN}\alpha_i=1$,
the output must be an optimum of the weighted cost function $\sum_{i\in \calN} \alpha_i\,h_i(x)$.
%
%
Ideally, we would like $\alpha_i=\frac{1}{|\calN|}$ for all $i\in \calN$,
since that would effectively solve Problem 1. However, as noted above,
this cannot be guaranteed due to the presence of faulty agents.
Therefore, in general, all $\alpha_i$'s may not necessarily be non-zero for the
chosen solution of Problem 2.
The desired goal then is to {\em maximize} the number of weights ($\alpha_i$'s)
that are bounded away from zero.
With this in mind, we introduce the third, and final problem formulation (Problem 3)
in Figure \ref{fig:prob}.
In Problem 3,
note that
${\bf 1}\{\alpha_i>\beta\}$ is an indicator function that outputs 1 if $\alpha_i>\beta$,
and 0 otherwise.
Essentially, Problem 3 adds a constraint to Problem 2, requiring that at least $\gamma$
weights must exceed a threshold $\beta$, where $\beta\geq 0$. Thus, $\beta,\gamma$ are parameters
of Problem 3.\\

We will say that Problem 1, 2 or 3 is solvable if there exists an algorithm that will
find a solution for the problem (satisfying all its constraints) for all admissible local cost functions,
and all possible behaviors of faulty nodes.
Our problem formulations require that all non-faulty agents output identical $\tx\in\mathbb{R}$, while
satisfying the constraints imposed by the problem (as listed in Figure \ref{fig:prob}).
Thus, the traditional Byzantine consensus \cite{impossible_proof_lynch} problem, which also imposes a similar
{\em agreement} condition, is a special case
of our optimization problem.\footnote{This can be proved formally as follows.
Suppose that we want to solve the Byzantine consensus problem where the input
of agent $i$ is $a_i\in \{0,1\}$. Then defining $f_i(x)=(x-a_i)^2$ ensures that the
output $\tx$ of correct algorithms for Problems 1, 2, 3 will be in the convex hull of
the inputs at the non-faulty agents. Choose $[\,\tx\,]$ as the output for the consensus problem.}
Therefore,
the lower bound of $n>3f$ for Byzantine consensus \cite{impossible_proof_lynch} also applies
to our problem. Hence we assume that $n>3f$.\\

 We prove the following key results:
\begin{itemize}
\item (Theorem \ref{t_imposs_0}) Problem 1 is not solvable when $f>0$.
\item (Theorem \ref{ub1}) For any $\beta\geq 0$, Problem 3 is not solvable if $\gamma>|\calN|-f$.
\item (Theorems \ref{t_algo1}, \ref{t_algo2}, and \ref{converge of alg5}) Problem 3 is solvable with
$\beta\leq \frac{1}{2(|\calN|-f)}$ and $\gamma\leq |\calN|-f$.
\end{itemize}
\vskip 1.5\baselineskip
Our results in Theorems \ref{t_algo1} and \ref{t_algo2} can be strengthened to $\beta\le \frac{1}{n}$, as discussed later.\\

In our other work we explore subdifferentiable functions,  restricted functions families, and asynchronous systems, respectively. Results will be presented in other reports.
\vskip \baselineskip

The rest of the report is organized as follows. Related work is summarized in Section \ref{related work}. Impossibility results, in particular, Theorem \ref{t_imposs_0} and Theorem \ref{ub1} are presented in Section \ref{impossibility}. Achievability of $\gamma=|\calN|-f$ is proved constructively in Section \ref{algorithm smooth} wherein five algorithms are proposed. In particular, Algorithms 1, 2, 3 and 5 solve Problem 3 with $\beta=\frac{1}{2\pth{|\calN|-f}}$ and $\gamma=|\calN|-f$, and Algorithm 4 solves Problem 3 with  $\beta=\frac{1}{2\pth{n-f}}$ and $\gamma=n-2f$. The performance of Algorithm 4 is independent of $|\calN|$ and is, in general, slightly weaker than the performance of Algorithms 1, 2, 3 and 5. Alternative performance analysis of Algorithms 1, 2 and 3 is also presented -- Algorithms 1, 2 and 3 are shown to solve Problem 3 with $\beta=\frac{1}{n}$ and $\gamma=|\calN|-f$.
Section \ref{approximation alg} presents a low-complexity suboptimal algorithm that solves Problem 3 with $\beta=\frac{1}{2|\calN|}$ and $\gamma=\lceil\frac{n}{2}\rceil-\phi$. 
Our other technical reports under preparation that extend the results presented in this report are briefly discussed in Section \ref{extension}. Section \ref{conclusion and discussion} concludes the report.

\mycomment{++++++++++++++++++++++++++++++
The rest of the paper is organized as follows.
\begin{itemize}
\item Although we are interested in a distributed solution for
problem (\ref{weakgoal}), we first consider a centralized approach
to derive some basic results. In particular, in Section \ref{sec:central}
we consider the case when a single controller knows $n$ local cost functions
for all the $n$ agents, including the faulty agents. The
controller, however, does not know the identity of the faulty agents.
The goal for the controller is to find $\tx$ that is a solution of (\ref{weakgoal}).
The local cost functions corresponding to the faulty agents are also assumed
to be continuously differentiable. Later, when we present the distributed
solution, we will discuss how the problem can be solved even if the faulty agents
behave arbitrarily (for instance, the faulty nodes may not behave
consistent with any local cost function that is convex). \\

\item In Section \ref{sec:distributed}, we provide two distributed implementations of the
centralized solution developed in Section \ref{sec:central}.
The first solution essentially imitates the centralized approach, by requiring
that each agent perform a Byzantine broadcast of its own local cost function.
Having learned all the $n$ local cost function, each agent can apply the centralized
solution to these $n$ local cost functions. \\

The second distributed solution does not require each agent to learn the cost
functions of other agents. Instead, the agents use an
iterative approach wherein each agent maintains its estimate of
the optima, with the estimate being updated in each iteration. The agents exchange the
gradients of their local cost functions at their own current estimate of the optima.
Incorrect behavior of the faulty agents is either tolerated transparently, or
faulty agents are identified and isolated.
We prove that the non-faulty agents converge to a solution of problem (\ref{weakgoal}),
regardless of how the faulty nodes behave.
\end{itemize}

\section{Basic Results for a Centralized Approach}

We derive several basic results assuming a centralized approach described above.
These results easily extend to a distributed approach as well, as elaborated later.
In the centralized approach, a single controller knows all the $n$ local cost functions
for the $n$ agents (of which $f$ may be faulty), and then must compute a solution to
(\ref{weakgoal}). The centralized controller must achieve this goal without necessarily
knowing the identity of the faulty nodes. The local cost functions corresponding to
the faulty nodes may be chosen adversarially. However, we do assume (for now) that all
local cost functions (corresponding to both faulty and non-faulty agents) are continuously differentiable.
++++++++++++++++++++}

\section{Related Work}\label{related work}

Fault-tolerant consensus \cite{PeaseShostakLamport} is a special case of the optimization problem considered in this report. There is a significant body of work on fault-tolerant consensus, including \cite{Dolev:1986:RAA:5925.5931,Chaudhuri92morechoices,mostefaoui2003conditions,fekete1990asymptotically,LeBlanc2012,vaidya2012iterative,friedman2007asynchronous}.
The optimization algorithms presented in this report use Byzantine consensus as a component.

Convex optimization, including distributed convex optimization, also has a long history \cite{bertsekas1989parallel}. However, we are not aware of
prior work that obtains the results presented in this report.
Primal and dual decomposition methods that lend themselves naturally to a distributed paradigm are well-known \cite{Boyd2011}. 
There has been significant research on a variant of distributed optimization problem \cite{Duchi2012,Nedic2009,Tsianos2012}, in which the global objective $h(x)$ is a summation of $n$ convex functions, i.e, $h(x)=\sum_{j=1}^n h_j(x)$, with function $h_j(x)$ being known to the $j$-th agent. The need for robustness for distributed optimization problems has received some attentions recently \cite{Duchi2012,kailkhura2015consensus,zhang2014distributed,marano2009distributed}. In particular, Duchi et al.\ \cite{Duchi2012} studied the impact of random communication link failures on the convergence of distributed variant of dual averaging algorithm. Specifically, each realizable link failure pattern considered in \cite{Duchi2012} is assumed to admit a doubly-stochastic matrix which governs
the evolution dynamics of local estimates of the optimum.

In other related work, significant attempts have been made to solve the problem of distributed hypothesis testing in the presence of Byzantine attacks \cite{kailkhura2015consensus,zhang2014distributed,marano2009distributed}, where Byzantine sensors may transmit fictitious observations aimed at confusing the decision maker to arrive at a judgment that is in contrast with the true underlying distribution. Consensus based variant of distributed event detection, where a centralized data fusion center does not exist, is considered in \cite{kailkhura2015consensus}. In contrast, in this paper, we focus on the Byzantine attacks on the multi-agent optimization problem.

\section{Impossibility Results}\label{impossibility}

Recall that we say that Problem $i~(i=1, 2, 3)$ is solvable if there exists an algorithm that will
find a solution for the problem (satisfying all its constraints) for all admissible local cost functions,
and all possible behaviors of faulty nodes.
The intuitive result below, proved in
Appendix \ref{appendix:imposs:0},
shows that there is no solution for Problem 1
in presence of faulty agents.
\begin{theorem}
\label{t_imposs_0}
Problem 1 is not solvable when $f>0$.
\end{theorem}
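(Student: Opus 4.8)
The plan is to use a standard indistinguishability (scenario) argument, exploiting the fact that a single Byzantine agent can behave exactly like a correct agent while simultaneously changing \emph{which} function the output is required to minimize. First I would assume, for contradiction, that some algorithm $\mathcal{A}$ solves Problem 1 for every admissible collection of local cost functions and every behavior of the faulty agents, under the standing assumption $f\geq 1$. I fix one distinguished agent, say agent $n$, and compare two executions engineered to look identical from the viewpoint of agents $1,\dots,n-1$. Note that only $f\geq 1$ is used here; the bound $n>3f$ plays no role in the impossibility.

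In Execution A, no agent is faulty, so $\calN=\calV$; I assign $h_i(x)=x^2$ for $i=1,\dots,n-1$ and $h_n(x)=(x-c)^2$ for a fixed $c\neq 0$. Each of these is convex, continuously differentiable, with a singleton (hence compact) minimizer, so all are admissible. Here $\mathcal{A}$ must force every agent to output the unique minimizer of $\frac1n\bigl((n-1)x^2+(x-c)^2\bigr)$, namely $\tx=c/n$. In Execution B, agent $n$ is Byzantine (a legitimate instance since $f\geq 1$), so $\calN=\{1,\dots,n-1\}$; the correct agents keep the same functions $h_i(x)=x^2$, and the faulty agent $n$ is instructed to send, in every round, exactly the messages it would have sent as a correct agent with function $(x-c)^2$ in Execution A. Now $\mathcal{A}$ must force the correct agents to output the unique minimizer of $\frac{1}{n-1}\sum_{i=1}^{n-1}x^2$, namely $\tx=0$. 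Since $c\neq 0$, the two executions demand different outputs from the common agents $1,\dots,n-1$.

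The contradiction then comes from indistinguishability: in both executions agents $1,\dots,n-1$ start from identical local functions and, round by round, receive identical messages, so they cannot tell the two executions apart and must produce identical output $\tx$ in both. This is impossible, since Execution A requires $\tx=c/n\neq 0$ while Execution B requires $\tx=0$. Hence no such $\mathcal{A}$ exists.

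The main obstacle is making the indistinguishability claim fully rigorous: I would prove by induction on the communication rounds that the transcripts seen by agents $1,\dots,n-1$ coincide across the two executions. The base case is identical inputs; the inductive step uses that agent $n$'s outgoing messages are forced to match by construction, while every correct agent's outgoing messages are determined by a view that matches by the inductive hypothesis. A secondary point is randomization: if $\mathcal{A}$ is randomized, I would couple the two executions on a common random tape so that the induction runs pathwise, and invoke the fact that Problem 1 demands an \emph{exact} optimizer (a single point, here), so no averaging over randomness can reconcile the two required outputs.
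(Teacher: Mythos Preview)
Your proof is correct and uses the same indistinguishability paradigm as the paper, but the concrete construction differs in a way worth noting. The paper compares two executions in which \emph{different} agents are faulty (agent $1$ versus agent $n$), with all agents following the algorithm honestly; it then invokes Proposition~\ref{p1} to trap the output in $[-1,0]\cap[0,1]=\{0\}$ and checks that $0$ is not an optimum of either average. Your construction instead pairs a \emph{fault-free} execution with a one-fault execution in which the Byzantine agent simulates correct behavior, and you compute the two required minimizers directly ($c/n$ versus $0$). This is more elementary: it avoids Proposition~\ref{p1} entirely and needs only that the model permits $\phi=0$ (which it does, since the adversary may pick any $\calF$ with $|\calF|\le f$). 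The paper's version has the minor advantage that both scenarios have $\phi=1$, so it does not rely on the algorithm being required to work in the degenerate fault-free case, but under the stated solvability definition your argument is equally valid and a bit cleaner.
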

Theorem \ref{ub1} below presents an upper bound on parameter
$\gamma$ for Problem 3 to be solvable.
\begin{theorem}
\label{ub1}
For any $\beta\geq 0$, Problem 3 is not solvable if
$\gamma>|\calN|-f$.
\end{theorem}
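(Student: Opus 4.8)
The plan is to prove this impossibility by the standard indistinguishability (scenario) argument for Byzantine lower bounds: assuming for contradiction that some algorithm solves Problem 3 with $\gamma>|\calN|-f$, I would construct two executions that the non-faulty agents cannot tell apart, force them to produce a common output $\tx$, and then show that in at least one of the two executions no admissible weight vector can certify $\tx$ with $\gamma$ significant weights. Since it suffices to exhibit one bad instance, I would specialize to the hardest case in which exactly $f$ agents are faulty, so that $|\calN|=n-f$ and the hypothesis becomes $\gamma\ge n-2f+1$; the goal is then to contradict this.

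\emph{Construction.} Because $n>3f$, I can partition the $n$ agents into three sets $A,B,C$ with $|A|=|B|=f$ and $|C|=n-2f\ge 1$. I assign the equal-curvature quadratics $h_i(x)=(x-1)^2$ to each $i\in A$, $h_i(x)=(x+1)^2$ to each $i\in B$, and $h_i(x)=x^2$ to each $i\in C$; each is admissible (convex, $C^1$, single minimizer). I then define two executions. In $S_A$ the set $A$ is Byzantine and each faulty agent behaves exactly as an honest agent with minimizer $1$ would, so the non-faulty set is $\calN_A=B\cup C$. In $S_B$ the set $B$ is Byzantine and behaves exactly as an honest agent with minimizer $-1$, so the non-faulty set is $\calN_B=A\cup C$. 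In both executions every agent exhibits identical behavior ($A$-agents act as minimizer-$1$, $B$-agents as minimizer-$(-1)$, $C$-agents as minimizer-$0$), so the commonly non-faulty agents in $C$ receive identical inputs and transcripts and must produce the same output $\tx$; by the agreement requirement this is the output of all non-faulty agents in both executions.

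\emph{Forcing the output and counting weights.} For these equal-curvature quadratics any convex combination $\sum_i\alpha_i h_i$ is minimized at the weighted average $\sum_i\alpha_i p_i$ of the minimizers $p_i$ (immediate from differentiating and using $\sum_i\alpha_i=1$). Invoking the Problem 3 guarantee in $S_A$ produces weights on $\calN_A=B\cup C$ with $\tx=-\sum_{i\in B}\alpha_i\le 0$, and in $S_B$ produces weights on $\calN_B=A\cup C$ with $\tx=\sum_{i\in A}\alpha_i\ge 0$. As $\tx$ is the same value in both, $\tx=0$. But in $S_A$ the relation $\tx=-\sum_{i\in B}\alpha_i=0$ forces $\alpha_i=0$ for every $i\in B$, so every non-faulty agent carrying a positive (hence $>\beta\ge 0$) weight must lie in $C$. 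Thus at most $|C|=n-2f$ weights can exceed $\beta$, contradicting $\sum_{i\in\calN_A}\mathbf{1}(\alpha_i>\beta)\ge\gamma\ge n-2f+1$. Since $n-2f=|\calN|-f$ when $|\calN|=n-f$, this establishes unsolvability whenever $\gamma>|\calN|-f$.

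The crux, and the step I would treat most carefully, is the indistinguishability claim combined with the observation that pinning $\tx=0$ annihilates an \emph{entire} group's weights; note that $\beta$ never enters beyond $\beta\ge0$, since I do not merely push the $B$-weights below the threshold but force them to be exactly zero, so the bound holds for every $\beta$, including $\beta=0$. The one point requiring rigor is that $S_A$ and $S_B$ are genuinely admissible executions with at most $f$ faults each and that a Byzantine agent can consistently simulate the designated honest behavior; this is immediate because in each execution the simulated behavior coincides with that of a real honest agent holding the same cost function in the other execution. For a randomized algorithm the same argument would go through after coupling the agents' coin tosses across the two executions.
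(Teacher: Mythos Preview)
Your proof is correct and uses the same indistinguishability strategy as the paper: two executions with disjoint fault sets but identical agent behavior pin the common output, and that output forces an entire block of weights to vanish. Your equal-curvature quadratics centered at $-1,0,1$ and the direct computation $\argmin\sum\alpha_i(x-p_i)^2=\sum\alpha_i p_i$ are a clean simplification of the paper's $(x-i)^2$ construction together with its appeal to Proposition~\ref{p1}.

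One point to flag: you specialize to $|\calF|=f$, whereas the paper's proof is written for arbitrary $\phi=|\calF|\le f$, comparing an execution with $\phi$ faults against one with $f$ faults and concluding that in the $\phi$-fault execution at most $|\calN|-f=n-\phi-f$ weights can be positive. This generality matters because the matching achievability results (Theorems~\ref{t_algo1}--\ref{converge of alg5}) are adaptive in $|\calN|$, so the impossibility is meant to hold for every $|\calN|$, not just $|\calN|=n-f$. Your construction extends immediately: take $|A|=f$, $|B|=\phi$, $|C|=n-f-\phi$, run the identical argument, and in the execution $S_B$ (where $B$ is Byzantine, so $|\calN|=n-\phi$) the relation $\tx=0=\sum_{i\in A}\alpha_i$ forces $\alpha_i=0$ for all $i\in A$, leaving at most $|C|=n-f-\phi=|\calN|-f$ positive weights.
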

Appendix \ref{appendix:ub1} presents the proof.
In the next section, we will show that the upper bound of Theorem \ref{ub1} is achievable.

%

\section{Proposed Algorithms}\label{algorithm smooth}

In this section, we present five different algorithms.
The first two algorithms, named Algorithm 1 and Algorithm 2, respectively, are not necessarily practical, but allow
us to derive results that are useful in proving the correctness of Algorithm 3, Algorithm 4 and Algorithm 5,
which are more practical. As an alternative to Algorithm 1, Algorithm 2 admits more concise correctness proof than that of Algorithm 1.
Algorithm 4 and Algorithm 5 require less memory than Algorithm 3. However, in contrast to Algorithms 1, 2 and 3,
the local estimates at non-faulty agents converge in neither Algorithm 4 nor Algorithm 5.

\subsection{Algorithm 1}

Algorithm 1 pseudo-code for agent $j$ is presented below.

\paragraph{}
\vspace*{8pt}\hrule
~

{\bf Algorithm 1} for agent $j$:
\vspace*{4pt}\hrule

\begin{list}{}{}
\item[{\bf Step 1:}]
Perform Byzantine broadcast of local cost function\footnote{In this step, each agent $j$ broadcasts a complete description of its cost function to other agents.}
 $h_j(x)$ to all the agents
using any Byzantine broadcast algorithm, such as \cite{psl_BG_1982}.

In step 1,
agent $j$ should receive from each agent $i\in\calV$ its cost function $h_i(x)$.
For non-faulty agent $i\in\calN$, $h_i(x)$ will be an admissible function
({\em admissible} is defined in Section \ref{sec:intro}).
If a faulty agent $k\in\calF$ does not correctly perform Byzantine broadcast of
its cost function, or broadcasts
an inadmissible cost function, then hereafter assume $h_k(x)$
to be a {\em default} admissible cost function that is known to all agents.\\

\item[{\bf Step 2:}]
The multiset of admissible functions obtained in Step 1 is
$\{h_1(x), h_2(x), \cdots, h_n(x)\}$.
For each $x\in \reals$, define multisets $A(x), B(x), C(x)$ below, where $h_i^{\prime}(x)$ denotes the gradient of function $h_i(\cdot)$ at $x$.
\begin{align*}
A\pth{x}\triangleq \{i:~h_i^{\prime}\pth{x}>0\}, \\
B\pth{x}\triangleq \{i:~h_i^{\prime}\pth{x}<0\}, \\
C\pth{x}\triangleq \{i: ~h_i^{\prime}\pth{x}=0\}.
\end{align*}
If there exists $x\in\mathbb{R}$ such that
\begin{align}
\label{alg1.2}
\min_{F_1:~F_1\subseteq A\pth{x}~\text{and}~ |F_1|\le f}\sum_{i\in A\pth{x}-F_1}h_i^{\prime}\pth{x}+\max_{F_2:~F_2\subseteq B\pth{x}~\text{and}~ |F_2|\le f} \sum_{i\in B\pth{x}-F_2}h_i^{\prime}\pth{x}~=~0~~~~
\end{align}
then deterministically choose output $\tx$ to be any one $x$ value that satisfies (\ref{alg1.2});\\otherwise, choose output $\tx=\perp$.
\end{list}

~
\hrule

~
\\

We will prove that Algorithm 1 solves Problem 3 with parameters $\beta=\frac{1}{2(|\calN|-f)}$ and $\gamma=|\calN|-f$.\\

For the multiset $\{h_1(x),h_2(x),\cdots,h_n(x)\}$\, of $n$ admissible cost functions
gathered in Step 1 of Algorithm 1,
 define $F\pth{x}$ and $G\pth{x}$ as follows,
where $A(x)$, $B(x)$ and $C(x)$ are as defined in Algorithm 1.

$$F\pth{x}\triangleq \min_{F_1:~F_1\subseteq A\pth{x}~\text{and}~ |F_1|\le f}\sum_{i\in A\pth{x}-F_1}h_i^{\prime}\pth{x}$$ and
$$G\pth{x}\triangleq \max_{F_2:~F_2\subseteq B\pth{x}~\text{and}~ |F_2|\le f}\sum_{i\in B\pth{x}-F_2}h_i^{\prime}\pth{x}.$$

\begin{proposition}
\label{monotone}
 $F\pth{x}$ and $G\pth{x}$ are both non-decreasing functions of $x\in\mathbb{R}$.
\end{proposition}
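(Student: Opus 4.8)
The plan is to reduce the statement to two elementary facts: that the derivative of a convex differentiable function is non-decreasing, and that a pointwise minimum (respectively maximum) of finitely many non-decreasing functions is again non-decreasing. The only genuine work is to rewrite $F$ and $G$ so that these facts apply directly, since as written the index sets $A(x)$ and $B(x)$ over which the optimization ranges themselves vary with $x$, and a ``minimum over a moving family'' is not obviously a minimum over a fixed finite collection.

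First I would record the consequence of admissibility: each $h_i$ is convex and continuously differentiable, so $h_i'$ is a continuous non-decreasing function on $\reals$. It follows that $(h_i'(x))^+:=\max\{h_i'(x),0\}$ and $\min\{h_i'(x),0\}$ are also non-decreasing in $x$, being compositions of $h_i'$ with the non-decreasing maps $t\mapsto\max\{t,0\}$ and $t\mapsto\min\{t,0\}$.

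The key step is the reformulation with a fixed index set. I claim that
$$F(x)=\min_{S\subseteq\calV,\,|S|\le f}\ \sum_{i\in\calV\setminus S}(h_i'(x))^+,\qquad G(x)=\max_{S\subseteq\calV,\,|S|\le f}\ \sum_{i\in\calV\setminus S}\min\{h_i'(x),0\}.$$
For the first identity, note that $(h_i'(x))^+=0$ exactly when $i\notin A(x)$, so only indices of $A(x)$ contribute to the sum; since every contributing term is strictly positive, a minimizing $S$ deletes only indices in $A(x)$, and the expression collapses to the definition of $F(x)$. The identity for $G$ is symmetric, using that $\min\{h_i'(x),0\}=0$ precisely when $i\notin B(x)$ and that a maximizing $S$ deletes only the (negative) terms of $B(x)$.

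With this in hand I would conclude. For each fixed $S$, the map $x\mapsto\sum_{i\in\calV\setminus S}(h_i'(x))^+$ is a finite sum of non-decreasing functions and hence non-decreasing, and $F$ is the pointwise minimum over the \emph{finite} collection of such maps indexed by the subsets $S$ with $|S|\le f$. A minimum of finitely many non-decreasing functions is non-decreasing: for $y>x$, letting $S^*$ attain the minimum defining $F(y)$ gives $F(x)\le\sum_{i\notin S^*}(h_i'(x))^+\le\sum_{i\notin S^*}(h_i'(y))^+=F(y)$. The same argument with maxima in place of minima establishes that $G$ is non-decreasing. The main obstacle — indeed the only non-mechanical point — is the reformulation over the fixed index set $\calV$ with truncated gradients; once the varying sets $A(x),B(x)$ are absorbed into the truncations $(\cdot)^+$ and $\min\{\cdot,0\}$, monotonicity of $F$ and $G$ follows immediately from the stability of monotonicity under sums and under finite minima and maxima.
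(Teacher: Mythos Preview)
Your argument is correct, and it is genuinely different from (and considerably cleaner than) the paper's proof. The paper fixes $x\le y$, writes out $F(y)-F(x)$ explicitly in terms of the optimal removed sets $S_x\subseteq A(x)$ and $S_y\subseteq A(y)$, and then pushes through a chain of inequalities using $A(x)\subseteq A(y)$ and the monotonicity of each $h_i'$; this forces a case split on whether $|S_x|<f$ or $|S_x|=f$ and a cardinality comparison $|S_x-S_y|\ge |S_y\cap A(x)-S_x|$ to close the argument (with a symmetric computation for $G$). Your reformulation $F(x)=\min_{S\subseteq\calV,\,|S|\le f}\sum_{i\notin S}(h_i'(x))^+$ absorbs the moving index set $A(x)$ into the truncation $(\cdot)^+$, so that $F$ becomes a pointwise minimum over a \emph{fixed} finite family of non-decreasing functions; monotonicity is then a one-line consequence, with no cases. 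What you gain is brevity and transparency; the paper's direct computation, while longer, has the minor advantage that the same set-arithmetic machinery is reused verbatim in their proof of continuity (Proposition~\ref{continuous}). Incidentally, your reformulation also yields continuity essentially for free, since a finite $\min$ or $\max$ of continuous functions is continuous and $(h_i')^+$, $\min\{h_i',0\}$ are continuous.
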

The proof of Proposition \ref{monotone} is presented in Appendix \ref{appendix:monotone}.

\begin{proposition}
\label{continuous}
Both $F\pth{x}$ and $G\pth{x}$ are continuous functions of $x\in\mathbb{R}$.
\end{proposition}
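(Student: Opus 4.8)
The plan is to sidestep the difficulty that the index sets $A(x)$, $B(x)$, $C(x)$ vary discontinuously with $x$ (an index jumps between them whenever its gradient crosses zero) by re-expressing $F$ and $G$ purely in terms of the order statistics of the gradient values $h_1'(x),\dots,h_n'(x)$. For each $x$, let $g_1(x)\ge g_2(x)\ge\cdots\ge g_n(x)$ denote these $n$ gradients sorted in non-increasing order. Since each $h_i$ is admissible, hence continuously differentiable, every $h_i'$ is continuous; and each order statistic admits the min--max representation $g_k(x)=\max_{|S|=k}\,\min_{i\in S} h_i'(x)$, a finite maximum of finite minima of continuous functions, so each $g_k$ is itself continuous in $x$.

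Next I would establish the two key identities
\begin{align*}
F(x)=\sum_{k=f+1}^{n}\max\{g_k(x),0\}, \qquad G(x)=\sum_{k=1}^{n-f}\min\{g_k(x),0\}.
\end{align*}
For $F$: because every $i\in A(x)$ contributes a strictly positive gradient and one deletes at most $f$ indices to \emph{minimize} the remaining sum, the optimal $F_1$ greedily removes the $f$ largest positive gradients; what remains are exactly the positive order statistics of rank $f+1$ and beyond. The operation $\max\{\cdot,0\}$ then discards automatically any non-positive $g_k$ (in particular those indices in $C(x)$, and those in $A(x)$ only up to rank $|A(x)|$), and also handles the degenerate case $|A(x)|\le f$, in which $F(x)=0$. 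The identity for $G$ is the mirror image: the optimal $F_2$ removes the $f$ most negative gradients, so summing what remains keeps exactly the negative order statistics of rank at most $n-f$, which is $\sum_{k=1}^{n-f}\min\{g_k(x),0\}$.

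Finally, since $t\mapsto\max\{t,0\}$ and $t\mapsto\min\{t,0\}$ are continuous and each $g_k$ is continuous, $F$ and $G$ are finite sums of continuous functions and hence continuous on $\mathbb{R}$.

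I expect the main obstacle to lie in verifying the two identities rigorously rather than in the continuity step itself. Concretely, one must justify that the greedy choice of $F_1^{*}$ (respectively $F_2^{*}$) is genuinely optimal, and confirm that composing with $\max/\min$ against $0$ correctly accounts both for the indices in $C(x)$ and for indices that switch between $A(x)$, $B(x)$, $C(x)$ as $x$ varies --- especially the boundary cases $|A(x)|\le f$ and $|B(x)|\le f$. Once these order-statistic formulas are in place, continuity is immediate, and the reformulation also makes transparent \emph{why} the discontinuous set membership causes no trouble: an index leaves $A(x)$ precisely when its gradient reaches $0$, at which instant its contribution to $F$ is already $0$, so no jump can occur.
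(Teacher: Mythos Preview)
Your proposal is correct and takes a genuinely different route from the paper. The paper's proof (Appendix~\ref{appendix:continuous}) is a lengthy direct $\epsilon$--$\delta$ argument: it invokes the monotonicity of $F$ (Proposition~\ref{monotone}) to sandwich $F(x)-F(c)$ between $F(c\pm\delta)-F(c)$, and then bounds each of these differences through a careful case analysis on the optimizing sets $S_c,\,S_{c+\delta},\,S_{c-\delta}$ and how they overlap, eventually reducing everything to $\sum_i \bigl(h_i'(c\pm\delta)-h_i'(c)\bigr)$ over suitable index sets. The same is repeated for $G$. Your approach instead rewrites $F$ and $G$ as
\[
F(x)=\sum_{k=f+1}^{n}\max\{g_k(x),0\},\qquad G(x)=\sum_{k=1}^{n-f}\min\{g_k(x),0\},
\]
and then appeals to continuity of order statistics via the min--max formula $g_k(x)=\max_{|S|=k}\min_{i\in S}h_i'(x)$. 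This is both shorter and more transparent: it makes explicit why the discontinuous jumps of the index sets $A(x),B(x),C(x)$ are harmless (an index crosses between them exactly when its contribution is zero), and it unifies Propositions~\ref{monotone} and~\ref{continuous} with Proposition~\ref{Continous and Monotone of the K-th Rank Functions}, which the paper proves separately for Algorithm~2 without ever noting that $F$ and $G$ are themselves built from the $g_K$. The paper's argument, by contrast, is self-contained and does not require recognizing the order-statistic structure, but pays for that with several pages of set bookkeeping. Your only remaining obligations---greedy optimality of $F_1^\ast,F_2^\ast$ and the boundary cases $|A(x)|\le f$, $|B(x)|\le f$---are indeed routine, exactly as you anticipated.
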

The proof of Proposition \ref{continuous} is presented in Appendix \ref{appendix:continuous}.

\begin{lemma}
\label{well-definedOutPut}
Algorithm 1 returns $\tx\in\mathbb{R}$ when $n>3f$ (i.e., it does not return $\bot$).
\end{lemma}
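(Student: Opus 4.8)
The plan is to show that the defining condition of Algorithm~1's Step~2 — equation (\ref{alg1.2}), i.e. $F(x)+G(x)=0$ — is satisfiable over $\mathbb{R}$, so that the ``otherwise'' branch outputting $\perp$ is never taken. Write $H(x)\triangleq F(x)+G(x)$. By Proposition~\ref{continuous} the map $H$ is continuous on $\mathbb{R}$ (and by Proposition~\ref{monotone} it is non-decreasing, though only continuity is essential here). I would then exhibit one point at which $H$ is strictly negative and one at which it is strictly positive, and invoke the Intermediate Value Theorem to produce a root $x_0$ of $H$, which Algorithm~1 then returns as $\tx=x_0\in\mathbb{R}$.

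First I would record the crude global signs: since $F(x)$ sums strictly positive gradients (those indexed by $A(x)$) after discarding at most $f$ of them, $F(x)\ge 0$ for all $x$; symmetrically $G(x)\le 0$ for all $x$. The key step is the behavior at the two ends of the line, and this is where admissibility enters. For each $i\in\calV$, the function $h_i$ is admissible, so $h_i'$ is continuous and non-decreasing and $\arg\min h_i=\{x:h_i'(x)=0\}$ is a nonempty \emph{compact} interval $[a_i,b_i]$ with $-\infty<a_i\le b_i<\infty$; consequently $h_i'(x)>0$ for $x>b_i$ and $h_i'(x)<0$ for $x<a_i$. Because $\calV$ is finite, I may set $b_{\max}\triangleq\max_{i\in\calV} b_i$ and $a_{\min}\triangleq\min_{i\in\calV} a_i$, both finite. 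For $x>b_{\max}$ every gradient is strictly positive, so $A(x)=\calV$ and $B(x)=\emptyset$; hence $G(x)=0$, while deleting at most $f$ of the $n$ positive terms leaves at least $n-f\ge 1$ strictly positive terms (using the standing assumption $n>3f$), so $F(x)>0$ and thus $H(x)>0$. Symmetrically, for $x<a_{\min}$ we get $B(x)=\calV$, $A(x)=\emptyset$, $F(x)=0$, and $G(x)<0$, so $H(x)<0$.

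Having a point with $H<0$ and a point with $H>0$, continuity of $H$ and the Intermediate Value Theorem give $x_0\in\mathbb{R}$ with $H(x_0)=0$; this $x_0$ satisfies (\ref{alg1.2}), so Algorithm~1 outputs a real value rather than $\perp$. I expect the main obstacle to be the extreme-behavior step: making rigorous that \emph{all} gradients become simultaneously positive (resp.\ negative) once $x$ is far enough to the right (resp.\ left). This relies on the compactness of each $\arg\min h_i$ combined with the finiteness of $\calV$, which is exactly what lets a single threshold $b_{\max}$ (resp.\ $a_{\min}$) serve all agents at once; and the inequality $n-f\ge 1$, supplied by $n>3f$, is what upgrades the signs of $F$ and $G$ at the two ends from ``$\ge 0$''/``$\le 0$'' to the \emph{strict} signs required to apply the Intermediate Value Theorem.
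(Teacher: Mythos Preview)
Your proof is correct and follows the same overall strategy as the paper: define $H(x)=F(x)+G(x)$, use Propositions~\ref{monotone} and~\ref{continuous} to know $H$ is continuous, exhibit points where $H$ is nonpositive and nonnegative, and apply the Intermediate Value Theorem. The only difference is in the choice of test points. You go all the way out past $b_{\max}=\max_i b_i$ and below $a_{\min}=\min_i a_i$ to force \emph{every} gradient to have a common strict sign; the paper instead picks $x_1$ at the $(f{+}1)$-th smallest value of $\max X_i$ (so that $|A(x_1)|\le f$ and hence $F(x_1)=0$, giving $H(x_1)\le 0$) and symmetrically $x_2$ at the $(f{+}1)$-th largest value of $\min X_i$ (so $G(x_2)=0$ and $H(x_2)\ge 0$). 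Your choice is a bit cleaner for the lemma as stated, since you get strict signs directly and need no separate $H=0$ case. The paper's choice buys something extra that is used later: it localizes the root inside $[x_1,x_2]\subseteq Cov(\cup_i X_i)$, which feeds into the remark after Theorem~\ref{globalvirtualfunction} and Proposition~\ref{p1} that the output $\tx$ lies in $Cov(\cup_{i\in\calN}X_i)$.
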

\begin{proof}
If there exists $x\in\mathbb{R}$ that satisfies equation (\ref{alg1.2}) in
Algorithm 1, then the algorithm will not return $\bot$. Thus,
to prove this lemma, it suffices to show that there exists $x\in\mathbb{R}$ that satisfies equation (\ref{alg1.2}).
Consider the multiset of admissible functions $\{h_1(x),h_2(x),\cdots,h_n(x)\}$ obtained by a non-faulty
agent in Step 1 of Algorithm 1.
Define $X_i =\arg\min_{x\in\mathbb{R}} h_i(x)$.
Let $\max X_i$ and $\min X_i$ denote the largest and smallest values in $X_i$, respectively.
Sort the above $n$ functions $h_i(x)$ in an {\em increasing} order of their $\max X_i$ values, breaking
ties arbitrarily. Let $i_0$ denote the $f+1$-th agent in this sorted order (i.e., $i_0$ has
the $f+1$-th smallest value in the above sorted order).
Similarly,
sort the functions $h_i(x)$ in an {\em decreasing} order of $\min X_i$ values, breaking
ties arbitrarily. Let $j_0$ denote the $f+1$-th agent in this sorted order (i.e., $j_0$ has
the $f+1$-th largest value in the above sorted order).
Define function $H(\cdot)$ as
$$H\pth{x}=F\pth{x}+G\pth{x}.$$
Consider $x_1\in X_{i_0}$ and $x_2\in X_{j_0}$.
Then, by the definition of $i_0$, $j_0$, $F(\cdot)$ and $G(\cdot)$, we have
\begin{eqnarray*}
&&H\pth{x_1}~=F\pth{x_1}+G\pth{x_1} ~=~0+G\pth{x_1} ~\le~0, \\
\text{~~~~~and\hspace*{0.6in}} &&\\
&&H\pth{x_2}~=~F\pth{x_2}+G\pth{x_2}~ =~F\pth{x_2}+0~ \ge~ 0
\end{eqnarray*}
If $H(x_1)=0$ or $H(x_2)=0$, then $x_1$ or $x_2$, respectively,
satisfy equation (\ref{alg1.2}), proving the lemma.
(Note that $H(\cdot)=F(\cdot)+G(\cdot)$, and the definition of
$F(\cdot)$ and $G(\cdot)$ implies that, if $H(x_i)=0$ then
$x_i$ satisfies equation (\ref{alg1.2}).

Let us now consider the case when $H(x_1)<0$ and $H(x_2)>0$.
By Propositions \ref{monotone} and \ref{continuous},
we have that $H\pth{\cdot}$ is non-decreasing and continuous. Then it follows that $x_1\le x_2$, and there exists $\tx\in [x_1, x_2]$ such that
$H\pth{\tx}=0$, i.e., $\tx$ satisfies equation (\ref{alg1.2}), proving the lemma.

\eproof
\end{proof}

~

The next two theorems prove that Algorithm 1 can solve Problem 3 for
$\gamma=|\calN|-f$, proving that the bound on $\gamma$ stated in Theorem \ref{ub1} is tight
for certain values of $\beta$ (as stated in the theorem below).


\begin{theorem}
\label{t_algo1}
When $n>3f$,
Algorithm $1$ solves Problem 3 with $\beta=\frac{1}{2(|\calN|-f)}$ and $\gamma=|\calN|-f$.
\end{theorem}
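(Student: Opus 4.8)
The plan is to verify the three requirements of Problem~3 in turn: agreement on a common $\tx$, optimality of $\tx$ for a convex combination $\sum_{i\in\calN}\alpha_i h_i$, and the count $\sum_{i\in\calN}\Indc(\alpha_i>\beta)\ge|\calN|-f$. Agreement is immediate: by the Byzantine broadcast in Step~1 every non-faulty agent holds the \emph{same} multiset $\{h_1,\dots,h_n\}$ (faulty or inadmissible entries replaced by the common default), and Step~2 selects $\tx$ by a deterministic rule, so all non-faulty agents output the same value; by Lemma~\ref{well-definedOutPut} this value is a genuine real number with $F(\tx)+G(\tx)=0$. Since each $h_i$ with $i\in\calN$ is convex and differentiable, $\sum_{i\in\calN}\alpha_i h_i$ is convex and differentiable, so $\tx$ minimizes it \emph{iff} $\sum_{i\in\calN}\alpha_i h_i'(\tx)=0$. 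Thus the entire theorem reduces to exhibiting weights $\alpha_i\ge0$ on $\calN$ with $\sum_{i\in\calN}\alpha_i=1$, $\sum_{i\in\calN}\alpha_i h_i'(\tx)=0$, and at least $|\calN|-f$ of them strictly above $\beta=\tfrac1{2(|\calN|-f)}$.

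Write $m=|\calN|-f$, $g_i=h_i'(\tx)$, and partition $\calN$ into $\calN_+,\calN_-,\calN_0$ by the sign of $g_i$; set $P=\sum_{i\in\calN_+}g_i$, $Q=\sum_{i\in\calN_-}(-g_i)$, and $M=F(\tx)=-G(\tx)\ge0$. The first key step is the linkage $\min(P,Q)\ge M$. Indeed, since at most $\phi\le f$ agents are faulty, $|\calN_+|=|A(\tx)|-|A(\tx)\cap\calF|\ge|A(\tx)|-f$, so $P$ is a sum of at least $|A(\tx)|-f$ positive gradients and is therefore at least the sum of the \emph{smallest} $|A(\tx)|-f$ positive gradients, which is exactly $F(\tx)=M$; the symmetric argument gives $Q\ge M$. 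In particular, if $M>0$ then $P,Q>0$, both signs occur among the non-faulty gradients, $0$ lies in their convex hull, and feasible weights exist.

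It remains to secure the count, which is the substantive part. I first dispose of the degenerate cases $P=0$ or $Q=0$: then $M=0$, which forces $|A(\tx)|\le f$ and $|B(\tx)|\le f$, whence $|\calN_0|\ge|\calN|-f=m$; placing weight $1/|\calN_0|$ on each zero-gradient non-faulty agent balances trivially, and since $|\calN_0|\le|\calN|<2m$ (because $m>f$ when $n>3f$) each such weight exceeds $\beta$, giving at least $m$ heavy weights. In the main case $P,Q>0$ I look for a subset $S\subseteq\calN$ of size exactly $m$ on which balance can be realized with every weight above $\beta$, assigning weight $0$ to the $f$ agents outside $S$. Taking base weights $\tfrac1{2m}$ on $S$ and perturbations $\beta_i>0$ summing to $1-\tfrac{m}{2m}=\tfrac12$, each resulting weight exceeds $\tfrac1{2m}=\beta$, and a short computation shows such balanced weights exist iff $-m\,\max_{i\in S}g_i<\sum_{i\in S}g_i<-m\,\min_{i\in S}g_i$.

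The crux is then to produce a size-$m$ subset meeting this interval condition by deleting at most $f$ agents of extreme gradient from the heavier side; here the linkage $\min(P,Q)\ge M$ is precisely what guarantees that deleting only $f$ agents (rather than one batch of $f$ per side, which would yield only $|\calN|-2f$) suffices to steer $\sum_{i\in S}g_i$ into the admissible interval. I expect this selection argument — tracking the running sum $\sum_{i\in S}g_i$ against the moving endpoints $-m\max_{i\in S}g_i$ and $-m\min_{i\in S}g_i$ as extreme agents are removed, showing the interval is entered without overshoot, and confirming the count is exactly $m=|\calN|-f$ in agreement with the upper bound of Theorem~\ref{ub1} — to be the main obstacle; the reductions above and the degenerate cases are routine by comparison.
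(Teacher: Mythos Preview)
Your reduction to a gradient-balancing problem at $\tx$ and the linkage $\min(P,Q)\ge M$ are both correct. The gap is in the main case: the structural ansatz---support the weights on a size-$m$ subset $S\subseteq\calN$ with every weight on $S$ strictly above $\beta=\tfrac{1}{2m}$ and every weight off $S$ equal to zero---is too rigid, and no selection argument can rescue it. Take $n=4$, $f=\phi=1$, non-faulty gradients $\{100,-1,-1\}$ at $\tx$, faulty gradient $1$. Then $F(\tx)=1$, $G(\tx)=-1$, so $\tx$ is a valid output of Algorithm~1; here $m=2$, $P=100$, $Q=2$, $M=1$, and your linkage holds. But for $S=\{100,-1\}$ one has $\sum_{i\in S} g_i=99\not< -m\min_{i\in S} g_i=2$, and for $S=\{-1,-1\}$ the interval is empty. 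Every balancing convex combination on $\calN$ is forced to put weight exactly $1/101$ on the agent with gradient $100$---small but nonzero---so a two-tier scheme (weights either $>\beta$ or $=0$) is infeasible, even though the theorem's conclusion holds via $\alpha=(1/101,\,50/101,\,50/101)$.

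The paper's proof builds in a third tier. It trims the $f$ largest and $f$ smallest gradients from all of $\calV$ (not from $\calN$) to obtain $\calR^*$ with $\sum_{i\in\calR^*}g_i=0$, then replaces the faulty contribution $\sum_{i\in\calR^*\cap\calF}g_i$ by a convex combination $\zeta\sum_{i\in\widetilde{F}_1}g_i+(1-\zeta)\sum_{i\in\widetilde{F}_2}g_i$, where $\widetilde{F}_1,\widetilde{F}_2$ are equal-size non-faulty blocks drawn from the trimmed extremes. After normalization the weights are $\tfrac{1}{m}$ on $\calR^*-\calF$, $\tfrac{\zeta}{m}$ on $\widetilde{F}_1$, and $\tfrac{1-\zeta}{m}$ on $\widetilde{F}_2$; whichever of $\zeta,1-\zeta$ is at least $\tfrac12$ supplies the $m$ heavy weights, while the other block carries exactly the small nonzero residual your scheme forbids. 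Your plan is repairable only by admitting this extra slack.
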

\begin{proof}
By Lemma \ref{well-definedOutPut}, we know that Algorithm $1$ returns a value in $\mathbb{R}$.
Let $\tx$ be the output of Algorithm $1$ for the set of functions $\{h_1(x),h_2(x),\cdots,h_n(x)\}$
gathered in Step 1 of the algorithm.
Consider $F_1^*\subseteq A(\tx)$ and $F_2^*\subseteq{B(\tx)}$,  with $|F_1^*|\le f$ and $|F_2^*|\le f$, that
minimize $\sum_{i\in A\pth{\tx}-F_1} h_i^{\prime}\pth{\tx}$, and
maximize $\sum_{i\in B\pth{\tx}-F_2} h_i^{\prime}\pth{\tx}$, respectively
(as per equation (\ref{alg1.2})).

Recall that $\calV= \{1, \ldots, n\}$. Sort the elements in the multiset $\{h_1^{\prime}(\tx), \ldots, h_n^{\prime}(\tx)\}$ in a non-increasing order, breaking ties in such a way that the elements corresponding to the agents in $F_1^*$ are among the first $f$ elements in the sorted order and the elements corresponding to the agents in $F_2^*$ are among the last $f$ elements in the sorted order. Such a sorted order is well-defined since $|F_1^*|\le f$ and $|F_2^*|\le f$.
Let $\bar{F_1}\subseteq \calV$ be the agents corresponding to the first $f$ elements in the sorted order, and let $\bar{F_2}\subseteq \calV$ be the agents corresponding to the last $f$ elements in the sorted order. Note that $F_1^*\subseteq \bar{F_1}$ and $F_2^*\subseteq \bar{F_2}$.
Since $A\pth{\tx}, B\pth{\tx}$ and $C\pth{\tx}$ form a partition of $\calV$, we have
\begin{align}
\nonumber
\sum_{i\in \calV-F_1^*-F_2^*} h_i^{\prime}\pth{\tx}&=\sum_{i\in C\pth{\tx}}h_i^{\prime}\pth{\tx}+\sum_{i\in A\pth{\tx}\cup B\pth{\tx}-F_1^*-F_2^*}h_i^{\prime}\pth{\tx}\\
\nonumber
&\overset{(a)}{=}0+\sum_{i\in A\pth{\tx}\cup B\pth{\tx}-F_1^*-F_2^*}h_i^{\prime}\pth{\tx}\\
&\overset{(b)}{=}0+0=0
\label{R0}
\end{align}
Equality $(a)$ follows by definition of $C\pth{\tx}$, and equality $(b)$ is true
because $\tx$ satisfies equation (\ref{alg1.2}).
Denote $\calR^*=\calV-\bar{F_1}-\bar{F_2}$. Next we show that
\begin{align}
\label{R bar}
\sum_{i\in \calR^*} h_i^{\prime}\pth{\tx}=0.
\end{align}
If $|A(\tx)|\ge f$, by definition of $F_1^*$, it holds that $|F_1^*|=f$. Thus, $\bar{F_1}=F_1^*$. Consequently, we have
$$\sum_{i\in \bar{F_1}-F_1^*}h_i^{\prime}(\tx)=\sum_{i\in \O}h_i^{\prime}(\tx)=0.$$
If $|A(\tx)|< f$, by definition of $F_1^*$ and $\bar{F_1}$, and the fact that $F_1^*\subset \bar{F_1}$, it follows that $F_1^*=A(\tx)$, and $h_i^{\prime}(\tx)\le 0$ for each $i\in \bar{F_1}-F_1^*=\bar{F_1}-A(\tx)\not=\O$. In addition, if there exists $i\in \bar{F_1}-F_1^*$ such that $h_i^{\prime}(\tx)< 0$, then by definition of $\bar{F_1}$, we have $h_j^{\prime}(\tx)< 0$ for each $j\in \calV-\bar{F_1}$. So we get
\begin{align*}
0&=\sum_{i\in \calV-F_1^*-F_2^*} h_i^{\prime}\pth{\tx}~~~~\text{by}~(\ref{R0})\\
&=\sum_{i\in \calV-\bar{F}_1-F_2^*} h_i^{\prime}\pth{\tx}+\sum_{i\in \bar{F_1}-F_1^*} h_i^{\prime}\pth{\tx}~~~\text{since}~F_1^*\subseteq \bar{F_1}\\
&\le \sum_{i\in \calV-\bar{F}_1-F_2^*} h_i^{\prime}\pth{\tx}~~\text{since}~h_i^{\prime}(\tx)\le 0~\text{for each}~i\in \bar{F_1}-F_1^*\\
&<0~~\text{since}~h_i^{\prime}(\tx)< 0~\text{for each}~i\in \calV-\bar{F}_1,
\end{align*}
proving a contradiction. Thus, there does not exist $i\in \bar{F_1}-F_1^*$ such that $h_i^{\prime}(\tx)< 0$, i.e.,  $h_i^{\prime}(\tx)= 0$ for each $i\in \bar{F_1}-F_1^*$. Consequently, we have
$$\sum_{i\in \bar{F_1}-F_1^*}h_i^{\prime}(\tx)=\sum_{i\in \bar{F_1}-F_1^*} 0=0.$$
Hence, regardless of the size of $|A(\tx)|$, the following is always true.
\begin{align}
\label{pp1}
\sum_{i\in \bar{F_1}-F_1^*}h_i^{\prime}(\tx)=0.
\end{align}
Similarly, we can show that
\begin{align}
\label{pp2}
\sum_{i\in \bar{F_2}-F_2^*}h_i^{\prime}(\tx)=0.
\end{align}
Therefore, we have
\begin{align*}
0&=\sum_{i\in \calV-F_1^*-F_2^*} h_i^{\prime}\pth{\tx}~~~~\text{by}~(\ref{R0})\\
&=\sum_{i\in \calV-\bar{F_1}-\bar{F_2}} h_i^{\prime}\pth{\tx}+\sum_{i\in \bar{F_1}-F_1^*} h_i^{\prime}\pth{\tx}+
\sum_{i\in \bar{F_2}-F_2^*} h_i^{\prime}\pth{\tx}~~~\text{since}~F_1^*\subseteq \bar{F_1}~\text{and}~F_2^*\subseteq \bar{F_2}\\
&=\sum_{i\in \calR^*} h_i^{\prime}\pth{\tx}+\sum_{i\in \bar{F_1}-F_1^*} h_i^{\prime}\pth{\tx}+
\sum_{i\in \bar{F_2}-F_2^*} h_i^{\prime}\pth{\tx}~~~~\text{by definition of}~\calR^*\\
&=\sum_{i\in \calR^*} h_i^{\prime}\pth{\tx}+0+0~~~~\text{by}~(\ref{pp1})~\text{and}~(\ref{pp2})\\
&=\sum_{i\in \calR^*} h_i^{\prime}\pth{\tx},
\end{align*}
proving equation (\ref{R bar}).\\

Let $\widetilde{F}_1\subseteq \bar{F_1}-\calF$ and $\widetilde{F}_2\subseteq \bar{F_2}-\calF$ such that
\begin{align}
\label{reintroduction 1}
|\widetilde{F}_1|= f-\phi+|\calR^*\cap \calF|~ ~\text{and} ~~ |\widetilde{F}_2|= f-\phi+|\calR^*\cap \calF|.
\end{align}
Since $|\calF|=\phi\le f$, $|\bar{F_1}|=f=|\bar{F_2}|$, and $\calR^*\cup \bar{F_1}\cup \bar{F_2}=\calV$, it holds that
$$|\bar{F_1}-\calF|\ge f-\phi+|\calR^*\cap \calF| ~~~\text{and}~~~|\bar{F_2}-\calF|\ge f-\phi+|\calR^*\cap \calF|.$$ Thus, $\widetilde{F}_1$ and $\widetilde{F}_2$ are well-defined.

We now show that
\begin{align}
\label{boundary sign}
\sum_{i\in \widetilde{F}_1}h_i^{\prime}\pth{\tx}\ge 0~~\text{and}~~\sum_{i\in \widetilde{F}_2}h_i^{\prime}\pth{\tx}\le 0.
\end{align}

Suppose $\sum_{i\in \widetilde{F}_1}h_i^{\prime}\pth{\tx}< 0$, then there exists $i_0\in \widetilde{F}_1\subseteq \bar{F_1}-\calF$ such that $h_{i_0}^{\prime}\pth{\tx}<0$. Since agents in $\bar{F_1}$ have the $f$ largest values (including ties) in the set $\{h_1^{\prime}(\tx), \ldots, h_n^{\prime}(\tx)\}$, then $h_i^{\prime}(\tx)<0$ for each $i\in \calR^*$, contradicting the fact that (\ref{R bar}) holds.   Analogously, it can be shown that $\sum_{i\in \widetilde{F}_2}h_i^{\prime}\pth{\tx}\le 0$.

In addition, we observe that
\begin{align}
\label{sandwich}
\sum_{i\in \widetilde{F}_2}h_i^{\prime}\pth{\tx}\le \sum_{i\in \calR^*\cap\calF} h_i^{\prime}\pth{\tx}\le \sum_{i\in \widetilde{F}_1}h_i^{\prime}\pth{\tx}.
\end{align}
To see this, consider three possibilities:  (i) $\sum_{i\in \calR^*\cap\calF} h_i^{\prime}\pth{\tx}=0$,
 (ii) $\sum_{i\in \calR^*\cap\calF} h_i^{\prime}\pth{\tx}>0$, and (iii) $\sum_{i\in \calR^*\cap\calF} h_i^{\prime}\pth{\tx}<0$.

First consider the case when $\sum_{i\in \calR^*\cap\calF} h_i^{\prime}\pth{\tx}=0$. Due to  (\ref{boundary sign}) and the case assumption,
it holds that
$$\sum_{i\in \widetilde{F}_2}h_i^{\prime}\pth{\tx}\le 0=\sum_{i\in \calR^*\cap\calF} h_i^{\prime}\pth{\tx}=0\le \sum_{i\in \widetilde{F}_1}h_i^{\prime}\pth{\tx},$$
which is (\ref{sandwich}).

Now consider the case when  $\sum_{i\in \calR^*\cap\calF} h_i^{\prime}\pth{\tx}>0$. Since $\sum_{i\in \calR^*\cap\calF} h_i^{\prime}\pth{\tx}>0$, it follows that $\calR^*\cap \calF\not=\O$, and there exists $k\in \calR^*\cap \calF$ such that $h_k^{\prime}\pth{\tx}>0$. This implies that $h_i\pth{\tx}>0$ for each $i\in \widetilde{F}_1$.
Let $\mu=\min_{i\in \widetilde{F}_1} h_i^{\prime}(\tx)$. Note that $\mu>0$. By definition of $\widetilde{F}_1$, it follows that
$$h_i^{\prime}(\tx)\le \mu \le h_j^{\prime}(\tx),$$
for each $i\in \calR^*$ and $j\in \widetilde{F}_1$. Thus, we obtain
\begin{align}
\sum_{i\in \calR^*\cap\calF} h_i^{\prime}\pth{\tx}\le \sum_{i\in \calR^*\cap\calF} \mu=\pth{| \calR^*\cap\calF|}\mu
\le \pth{|\widetilde{F}_1|}\mu=\sum_{i\in \widetilde{F}_1} \mu \le \sum_{i\in \widetilde{F}_1} h_i^{\prime}\pth{\tx}.
\end{align}
Due to (\ref{boundary sign}) and the assumption that $\sum_{i\in \calR^*\cap \calF} h_i^{\prime}(\tx)>0$, we get
$$\sum_{i\in \widetilde{F}_2} h_i^{\prime}\pth{\tx}\le 0<\sum_{i\in \calR^*\cap\calF} h_i^{\prime}\pth{\tx}\le \sum_{i\in \widetilde{F}_1} h_i^{\prime}\pth{\tx},$$
proving relation (\ref{sandwich}).

Similarly, we can show the case when $\sum_{i\in \calR^*\cap\calF} h_i^{\prime}\pth{\tx}<0$.\\

Since the relation in (\ref{sandwich}) holds, there exists $0\le \zeta\le 1$ such that
\begin{align}
\label{gamma}
\sum_{i\in \calR^*\cap\calF} h_i^{\prime}\pth{\tx}=\zeta\pth{\sum_{i\in \widetilde{F}_1}h_i^{\prime}\pth{\tx}}+\pth{1-\zeta}\pth{\sum_{i\in \widetilde{F}_2}h_i^{\prime}\pth{\tx}}.
\end{align}
Thus, we have
\begin{align*}
0&=\sum_{i\in \calR^*}h_i^{\prime}\pth{\tx}  \text{~~~~~from (\ref{R bar})}\\
&=\sum_{i\in \calR^*-\calF} h_i^{\prime}\pth{\tx}+\sum_{i\in \calR^*\cap\calF} h_i^{\prime}\pth{\tx}\\
&=\sum_{i\in \calR^*-\calF} h_i^{\prime}\pth{\tx}+\zeta\pth{\sum_{i\in \widetilde{F}_1}h_i^{\prime}\pth{\tx}}+\pth{1-\zeta}\pth{\sum_{i\in \widetilde{F}_2}h_i^{\prime}\pth{\tx}}
\text{~~~~~using (\ref{gamma})}
\end{align*}
Thus $\tx$ is an optimum of function
\begin{align*}
\sum_{i\in \calR^*-\calF} h_i\pth{x}+\zeta\pth{\sum_{i\in \widetilde{F}_1}h_i\pth{x}}+\pth{1-\zeta}\pth{\sum_{i\in \widetilde{F}_2}h_i\pth{x}}
\end{align*}
Since constant scaling does not change optima, it follows that $\tx$ is an optimum of function
\begin{align}
\label{talg1-obj=0}
\frac{1}{|\calR^*-\calF|+\zeta |\widetilde{F}_1|+\pth{1-\zeta} |\widetilde{F}_2|}\pth{\sum_{i\in \calR^*-\calF}h_i\pth{x}+\zeta\sum_{i\in \widetilde{F}_1} h_i\pth{x}+\pth{1-\zeta} \sum_{i\in \widetilde{F}_2} h_i\pth{x}}.
\end{align}
Since $|\calR^*|=n-2f$ and $|\widetilde{F}_1|=f-\phi+|\calR^*\cap \calF|=|\widetilde{F}_2|$, we have
\begin{align*}
|\calR^*-\calF|+\zeta |\widetilde{F}_1|+\pth{1-\zeta} |\widetilde{F}_2|&=|\calR^*-\calF|+|\widetilde{F}_1|~~~\text{since}~ |\widetilde{F}_1|=|\widetilde{F}_2|\\
&=|\calR^*-\calF|+f-\phi+|\calR^*\cap \calF|\\
&=|\calR^*|-|\calR^*\cap \calF|+f-\phi+|\calR^*\cap \calF|\\
&=|\calR^*|+f-\phi=n-2f+f-\phi=n-\phi-f=|\calN|-f.
\end{align*}
We know that either $\zeta\ge \frac{1}{2}$ or $1-\zeta\ge \frac{1}{2}$; by symmetry, without loss of generality, assume $\zeta\ge \frac{1}{2}$. In addition, we know
\begin{align*}
|\pth{\calR^*-\calF}\cup \widetilde{F}_1|&=|\calR^*-\calF|+|\widetilde{F}_1|\\
&=|\calR^*|-|\calR^*\cap \calF|+f-\phi+|\calR^*\cap \calF|=|\calN|-f.
\end{align*}
Recall that $\calR^*\cup \bar{F}_1\cup \bar{F}_2=\calV$.
Thus, in function (\ref{talg1-obj=0}), which is a weighted sum of $|\calN|$ local cost functions
corresponding to agents in $\calN=\calV-\calF$, at least $|\calN|-f$ local cost functions corresponding to $i\in \pth{\calR^*-\calF}\cup \widetilde{F}_1$ have weights that are lower bounded by $\frac{1}{2(|\calN|-f)}$.\\

Similarly, when $1-\zeta\ge \frac{1}{2}$, we have at least $|\calN|-f$ cost functions corresponding to $i\in \pth{\calR^*-\calF}\cup \widetilde{F}_2$ have weight lower bounded by $\frac{1}{2(|\calN|-f)}$.

\eproof
\end{proof}

Algorithm 1 has the following alternative performance guarantee. The relative strength of Theorem \ref{t_algo1} and Theorem \ref{t_algo11} depends on the value of $n$. For instance, when $n\ge 4f+1$, it holds that $\frac{1}{n}\ge \frac{1}{2(|\calN|-f)}$.

\begin{theorem}
\label{t_algo11}
When $n\ge 3f+1$, Algorithm 1 solves Problem 3 with $\beta=\frac{1}{n}$ and $\gamma=|\calN|-f$.
\end{theorem}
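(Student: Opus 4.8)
The plan is to recycle the entire argument of Theorem~\ref{t_algo1} up through the two structural facts it establishes about the output $\tx$, namely $\sum_{i\in\calR^*}h_i'(\tx)=0$ (equation~(\ref{R bar})) and the sandwich bound~(\ref{sandwich}) on $v:=\sum_{i\in\calR^*\cap\calF}h_i'(\tx)$, and to replace only the final weight-assignment step by a more economical reintroduction of the faulty agents. Everything about well-definedness of $\tx$ and agreement among non-faulty agents is inherited unchanged from Lemma~\ref{well-definedOutPut} and Theorem~\ref{t_algo1}; only the counting of weights above the threshold needs redoing. Write $r=|\calR^*\cap\calF|$, so that $\sum_{i\in\calR^*-\calF}h_i'(\tx)=-v$. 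The first thing I would record is a sign observation: since $\bar{F_1}$ and $\bar{F_2}$ are the top-$f$ and bottom-$f$ elements of the sorted multiset $\{h_1'(\tx),\dots,h_n'(\tx)\}$ and the middle block $\calR^*$ sums to zero, every non-faulty agent in $\bar{F_1}$ has $h_i'(\tx)\ge 0$ and every non-faulty agent in $\bar{F_2}$ has $h_i'(\tx)\le 0$.

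Next I would build a convex combination over $\calN$ whose normalizer is strictly less than $n$. The $n-2f-r$ agents of $\calR^*-\calF$ keep pre-normalization weight $1$. I then want to select a further set $E\subseteq(\bar{F_1}\cup\bar{F_2})-\calF$ of exactly $f-\phi+r$ non-faulty agents, also at pre-weight $1$, together with fractional weights on the remaining spare non-faulty agents, so that the total gradient vanishes, $-v+\sum_{i\in E}h_i'(\tx)+(\text{spare contribution})=0$, while the total spare weight $R$ stays below $\phi+f$. Since $|(\bar{F_1}\cup\bar{F_2})-\calF|=2f-\phi+r$, there are exactly $f$ spare agents available for this. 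Using~(\ref{sandwich}) and an exchange argument that swaps bottom agents for top agents one at a time (each swap changing the partial sum monotonically, because top gradients are $\ge 0\ge$ bottom gradients), I can choose $E$ with $s_E:=\sum_{i\in E}h_i'(\tx)$ on the correct side of $v$, and then cancel the residual $v-s_E$ with spare agents of the appropriate sign. The case $v\le 0$ is symmetric after swapping the roles of $\bar{F_1}$ and $\bar{F_2}$.

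Granting such a construction, the normalizer is $W=(n-2f-r)+(f-\phi+r)+R=(|\calN|-f)+R$, and $R<\phi+f$ forces $W<n$. Hence each of the $(n-2f-r)+(f-\phi+r)=|\calN|-f$ agents carrying pre-weight $1$ has normalized weight $1/W>1/n$, which is exactly the assertion that $\gamma=|\calN|-f$ weights exceed $\beta=1/n$. Note the contrast with Theorem~\ref{t_algo1}: there the faulty mass $v$ is reintroduced as a single convex split $\zeta\widetilde{F}_1+(1-\zeta)\widetilde{F}_2$, which only guarantees weight $\tfrac{\zeta}{|\calN|-f}$ on one block and thus clears $1/n$ only when $\zeta>\tfrac{n-\phi-f}{n}$ — not implied by $\zeta\ge\tfrac12$ in the regime $\phi+f\le n/2$. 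The new construction avoids the fatal factor $\zeta$ by keeping the heavy agents at pre-weight $1$ and pushing the unavoidable fractional remainder onto the spares.

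The hard part will be making the residual absorption simultaneously feasible, weight-bounded, and count-preserving. A single swap in the exchange argument can move the partial sum by as much as $a_{\max}-b_{\min}$, so $v-s_E$ need not be small; the delicate point is to argue, exploiting that there are precisely $f$ spare non-faulty agents of controlled sign (guaranteed by $n\ge 3f+1$ together with $\phi\le f$), that this residual can always be cancelled with total spare weight strictly below $\phi+f$ while the full set $E$ of size $f-\phi+r$ still receives pre-weight exactly $1$ — equivalently, that the reintroduction can be arranged so that no full-weight heavy agent is demoted to a fractional one. I would treat the corner cases separately: $\phi+f=1$, where the strict bound $R<\phi+f$ must be argued by ensuring at least one spare is not fully saturated, and the degenerate fault-free case $f=0$, where $\beta=1/n$ is vacuous and one simply reverts to $\beta=0$.
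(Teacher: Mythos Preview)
Your plan diverges from the paper at exactly the point where the interesting work happens, and the step you flag as ``the hard part'' is genuinely missing. You propose to fix a set $E$ of size $f-\phi+r$ at pre-weight exactly $1$ and then absorb the residual $v-s_E$ using the $f$ spare non-faulty agents with total spare weight $R<\phi+f$. But you do not prove that such an $E$ exists, and a naive choice can fail badly: with $n=4$, $f=\phi=1$, $r=1$, gradients $a=10$ (top), $b=-1$ (bottom), $c=-1$ (the non-faulty middle), $v=1$ (the faulty middle), taking $E=\{i_1\}$ forces $R=(a-v)/(-b)=9\not<2$. The other choice $E=\{i_2\}$ happens to work here, but your exchange argument only arranges the \emph{sign} of $v-s_E$, not its magnitude relative to the spare gradients, and you give no mechanism that simultaneously controls the count $|E|$, the sign, and the bound $R<\phi+f$ in general. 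The corner cases you list at the end are symptoms of this: the construction is under-specified.

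The paper avoids this set-selection problem entirely with a scaling trick you have not used. From~(\ref{boundary sign}) one has $\sum_{i\in\widetilde F_1}h_i'(\tx)\ge 0\ge\sum_{i\in\widetilde F_2}h_i'(\tx)$, so there is $\zeta\in[0,1]$ with
\[
0=\zeta\sum_{i\in\widetilde F_1}h_i'(\tx)+(1-\zeta)\sum_{i\in\widetilde F_2}h_i'(\tx).
\]
In the nontrivial case $v:=\sum_{i\in\calR^*\cap\calF}h_i'(\tx)>0$ (the case $v<0$ is symmetric, and $v=0$ is easier), the sandwich bound~(\ref{sandwich}) gives $0<v\le\sum_{i\in\widetilde F_1}h_i'(\tx)$, so $v=\zeta_1\sum_{i\in\widetilde F_1}h_i'(\tx)$ for some $\zeta_1\in[0,1]$, and~(\ref{R bar}) becomes
\[
0=\sum_{i\in\calR^*-\calF}h_i'(\tx)+\zeta_1\sum_{i\in\widetilde F_1}h_i'(\tx).
\]
Now multiply this identity by $\zeta$ and \emph{add} the displayed zero above:
\[
0=\zeta\sum_{i\in\calR^*-\calF}h_i'(\tx)+\zeta(1+\zeta_1)\sum_{i\in\widetilde F_1}h_i'(\tx)+(1-\zeta)\sum_{i\in\widetilde F_2}h_i'(\tx).
\]
Every agent in $(\calR^*-\calF)\cup\widetilde F_1$ carries pre-weight at least $\zeta$, and there are $|\calN|-f$ of them. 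The normalizer is
\[
\chi=\zeta|\calR^*-\calF|+\zeta(1+\zeta_1)|\widetilde F_1|+(1-\zeta)|\widetilde F_2|\le \zeta\bigl(|\calR^*|+2|\widetilde F_1|\bigr)\le \zeta n,
\]
where the first inequality uses $1-\zeta\le\zeta$ and the key bound $\zeta_1|\widetilde F_1|\le |\calR^*\cap\calF|$, which follows directly from the ordering of gradients (each faulty gradient in $\calR^*$ is at most the minimum over $\widetilde F_1$). Hence each of the $|\calN|-f$ heavy weights is at least $\zeta/\chi\ge 1/n$. The point is that the troublesome factor $\zeta$ you identified is made harmless not by keeping some agents at pre-weight $1$, but by scaling the \emph{entire} $\calR^*-\calF$ block by $\zeta$ as well, so that $\zeta$ cancels between numerator and denominator.
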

\begin{proof}
From the proof of Theorem \ref{t_algo1}, we know that under Algorithm 1, (\ref{boundary sign}) holds, i.e.,
$$\sum_{i\in \widetilde{F}_1} h_i^{\prime}(\tx)\ge 0~~~\text{and}~~~\sum_{i\in \widetilde{F}_2} h_i^{\prime}(\tx)\le 0.$$
Then there exists $0\le \zeta\le 1$ such that
\begin{align}
\label{rewritten 0 alg1}
0~=~ \zeta\pth{\sum_{i\in \widetilde{F}_1} h_i^{\prime}(\tx)}+(1-\zeta)\pth{\sum_{i\in \widetilde{F}_2} h_i^{\prime}(\tx)}.
\end{align}
Note that either $\zeta\ge \frac{1}{2}$ or $1-\zeta\ge \frac{1}{2}$. Our proof focuses on the scenario when $\zeta\ge \frac{1}{2}$ -- the scenario when $1-\zeta \ge \frac{1}{2}$ can be proved analogously. \\

Recall from (\ref{R bar}) that
$$0~=~\sum_{i\in \calR^*} h_i^{\prime}(\tx)=\sum_{i\in \calR^*-\calF} h_i^{\prime}(\tx)+\sum_{i\in \calR^*\cap\calF} h_i^{\prime}(\tx).$$

We now consider two cases: (i) $\sum_{i\in \calR^*\cap\calF} h_i^{\prime}\pth{\tx}=0$,
and (ii)
$\sum_{i\in \calR^*\cap\calF} h_i^{\prime}\pth{\tx}\not=0$.

\paragraph{\bf Case (i):
$\sum_{i\in \calR^*\cap\calF} h_i^{\prime}\pth{\tx}=0$.} In this case, we have
\begin{align}
\label{alg 1-2-1}
\nonumber
0&=\sum_{i\in \calR^*}h_i^{\prime}\pth{\tx}  \text{~~~~~from (\ref{R bar})}\\
\nonumber
&=\sum_{i\in \calR^*-\calF} h_i^{\prime}\pth{\tx}+\sum_{i\in \calR^*\cap\calF} h_i^{\prime}\pth{\tx}\\
\nonumber
&=\sum_{i\in \calR^*-\calF} h_i^{\prime}\pth{\tx}+0 \text{~~~~due to the assumption in case (i)}\\
&=\sum_{i\in \calR^*-\calF} h_i^{\prime}\pth{\tx}.
\end{align}
Multiplying both sides of (\ref{alg 1-2-1}) by $\zeta$, we get
\begin{align*}
0~=~\zeta~ 0 &=\zeta\pth{\sum_{i\in \calR^*-\calF} h_i^{\prime}\pth{\tx}}\\
&=\zeta\pth{\sum_{i\in \calR^*-\calF} h_i^{\prime}\pth{\tx}}+0\\
&=\zeta\pth{\sum_{i\in \calR^*-\calF} h_i^{\prime}\pth{\tx}}+\zeta\pth{\sum_{i\in \widetilde{F}_1}h_i^{\prime}\pth{\tx}}+\pth{1-\zeta}\pth{\sum_{i\in \widetilde{F}_2}h_i^{\prime}\pth{\tx}} \text{~~~~~using (\ref{rewritten 0 alg1})}
\end{align*}
Thus $\tx$ is an optimum of function
\begin{align*}
\zeta\pth{\sum_{i\in \calR^*-\calF} h_i\pth{x}}+\zeta\pth{\sum_{i\in \widetilde{F}_1}h_i\pth{x}}+\pth{1-\zeta}\pth{\sum_{i\in \widetilde{F}_2}h_i\pth{x}}
\end{align*}
Since constant scaling does not change optima, it follows that $\tx$ is an optimum of function
\begin{align}
\label{obj=0}
\frac{1}{\zeta|\calR^*-\calF|+\zeta |\widetilde{F}_1|+\pth{1-\zeta} |\widetilde{F}_2|}\pth{\zeta\sum_{i\in \calR^*-\calF}h_i\pth{x}+\zeta\sum_{i\in \widetilde{F}_1} h_i\pth{x}+\pth{1-\zeta} \sum_{i\in \widetilde{F}_2} h_i\pth{x}}.
\end{align}
Since $\zeta\ge \frac{1}{2}\ge 1-\zeta$, $|\calR^*|=n-2f$ and $|\widetilde{F}_1|=f-\phi+|\calR^*\cap \calF|=|\widetilde{F}_2|$, we get
\begin{align*}
\zeta|\calR^*-\calF|+\zeta |\widetilde{F}_1|+\pth{1-\zeta} |\widetilde{F}_2|&\le
\zeta|\calR^*-\calF|+\zeta |\widetilde{F}_1|+\zeta |\widetilde{F}_2|~~~\text{since}~\zeta\ge \frac{1}{2}\ge 1-\zeta\\
&= \zeta \pth{|\calR^*-\calF|+|\widetilde{F}_1|+|\widetilde{F}_2|}\\
&=\zeta \pth{|\calR^*|-|\calR^*\cap\calF|+2f-2\phi+2|\calR^*\cap\calF|}\\
&=\zeta \pth{n-2f+2f-2\phi+|\calR^*\cap\calF|}\\
&\le \zeta \pth{n-2f+2f-2\phi+\phi}~~~\text{since}~|\calR^*\cap \calF|\le |\calF|=\phi\\
&=\zeta\pth{n-\phi}= \zeta |\calN| \le \zeta n.
\end{align*}
In addition, we know
\begin{align*}
 |\pth{\calR^*-\calF}\cup \widetilde{F}_1|&=|\calR^*-\calF|+|\widetilde{F}_1|\\
 &=|\calR^*|-|\calR^*\cap \calF|+f-\phi+|\calR^*\cap \calF|\\
 &=n-2f+f-\phi=n-\phi-f=|\calN|-f.
\end{align*}
Recall that $\calR^*\cup \bar{F}_1\cup \bar{F}_2=\calV$.
Thus, in function (\ref{obj=0}), which is a weighted sum of $|\calN|$ local cost functions
corresponding to agents in $\calN=\calV-\calF$, at least $|\calN|-f$ local cost functions corresponding to $i\in \pth{\calR^*-\calF}\cup \widetilde{F}_1$ have weights that are lower bounded by $\frac{1}{n}$.\\

Similarly, when $1-\zeta\ge \frac{1}{2}$, we have at least $|\calN|-f$ cost functions corresponding to $i\in \pth{\calR^*-\calF}\cup \widetilde{F}_2$ have weight lower bounded by $\frac{1}{n}$.

\paragraph{\bf Case (ii): $\sum_{i\in \calR^*\cap\calF} h_i^{\prime}\pth{\tx}\not=0$.}
By symmetry, without loss of generality, assume that $\sum_{i\in \calR^*\cap\calF} h_i^{\prime}\pth{\tx}>0$. Then $\calR^*\cap \calF\not=\O$ and there exists $j\in \calR^*\cap \calF$ such that $h_j^{\prime}\pth{\tx}>0$. This implies that $F_1^*=\bar{F}_1$. 

By (\ref{sandwich}), we get
$$0~<~ \sum_{i\in \calR^*\cap \calF} h_i^{\prime}(\tx)\le \sum_{i\in \widetilde{F}_1} h_i^{\prime}(\tx).$$
Thus, there exists $0\le \zeta_1\le 1$ such that
\begin{align}
\label{alg case 2}
\sum_{i\in \calR^*\cap \calF} h_i^{\prime}(\tx)=\zeta_1 \pth{\sum_{i\in \widetilde{F}_1} h_i^{\prime}(\tx)}
\end{align}
By equation (\ref{R bar}), we have
\begin{align}
\label{alg 1-2-2}
\nonumber
0~&=\sum_{i\in \calR^*-\calF} h_i^{\prime}(\tx)+\sum_{i\in \calR^*\cap \calF} h_i^{\prime}(\tx)\\
&=\sum_{i\in \calR^*-\calF} h_i^{\prime}(\tx)+\zeta_1 \pth{\sum_{i\in \widetilde{F}_1} h_i^{\prime}(\tx)}~~~\text{by}~(\ref{alg case 2}).
\end{align}
Multiplying both sides of (\ref{alg 1-2-2}) by $\zeta$, we get
\begin{align*}
0=\zeta\, 0&=\zeta\pth{\sum_{i\in \calR^*-\calF} h_i^{\prime}(\tx)+\zeta_1 \sum_{i\in \widetilde{F}_1} h_i^{\prime}(\tx) }\\
&=\zeta\pth{\sum_{i\in \calR^*-\calF} h_i^{\prime}(\tx)+\zeta_1 \sum_{i\in \widetilde{F}_1} h_i^{\prime}(\tx)}+0\\
&=\zeta\pth{\sum_{i\in \calR^*-\calF} h_i^{\prime}(\tx)+\zeta_1 \sum_{i\in \widetilde{F}_1} h_i^{\prime}(\tx)}+\zeta\pth{\sum_{i\in \widetilde{F}_1}h_i^{\prime}\pth{\tx}}+\pth{1-\zeta}\pth{\sum_{i\in \widetilde{F}_2}h_i^{\prime}\pth{\tx}}~~~\text{by}~~ (\ref{rewritten 0 alg1})
\end{align*}
Thus $\tx$ is an optimum of function
\begin{align*}
\zeta\pth{\sum_{i\in \calR^*-\calF} h_i(\tx)+\zeta_1 \sum_{i\in \widetilde{F}_1} h_i(\tx)}+\zeta\pth{\sum_{i\in \widetilde{F}_1}h_i\pth{\tx}}+\pth{1-\zeta}\pth{\sum_{i\in \widetilde{F}_2}h_i\pth{\tx}}
\end{align*}
Define $\chi=\zeta\pth{|\calR^*-\calF|+\zeta_1 |\widetilde{F}_1|}+\zeta |\widetilde{F}_1|+ \pth{1-\zeta} |\widetilde{F}_2|$.
Since constant scaling does not change optima, we know that $\tx$ is an optimum of function
\begin{align}
\label{objnot=0}
\frac{1}{\chi}
\pth{\zeta \pth{\sum_{i\in \calR^*-\calF} h_i(\tx)+\zeta_1 \sum_{i\in \widetilde{F}_1} h_i(\tx)}+\zeta \sum_{i\in \widetilde{F}_1}h_i\pth{\tx}+\pth{1-\zeta}\sum_{i\in \widetilde{F}_2}h_i\pth{\tx}
}.
\end{align}
Since $\calR^*\cap \calF\not=\O$, it holds that $|\widetilde{F}_1|=f-\phi+|\calR^*\cap \calF|\ge |\calR^*\cap \calF|> 0$, i.e., $ \widetilde{F}_1\not=\O$.
Define $$\min_{j\in \widetilde{F}_1} h_i^{\prime}(\tx)= \mu.$$ By definition of $\mu$, it holds that $\mu\le h_i^{\prime}(\tx)$ for $i\in \widetilde{F}_1$. Also, $\mu>0$ because  $h_i^{\prime}(\tx)>0$ for $i\in \widetilde{F}_1\subseteq \bar{F}_1=F_1^*\subseteq A(\tx)$. In addition, by the construction of $\calR^*$, we have $h_i^{\prime}\pth{\tx}\le \mu$ for each $i\in \calR^*\cap \calF$.
Then from (\ref{alg case 2})
\begin{align}
\label{zeta1}
\zeta_1|\widetilde{F}_1|&=\frac{\sum_{i\in \calR^*\cap\calF} h_i^{\prime}\pth{\tx}}{\sum_{i\in \widetilde{F}_1} h_i^{\prime}\pth{\tx}}|\widetilde{F}_1|\nonumber\\
&\le \frac{\sum_{i\in \calR^*\cap\calF}\mu}{\sum_{i\in \widetilde{F}_1} \mu}|\widetilde{F}_1|\nonumber\\
&=\frac{|\calR^*\cap\calF|}{|\widetilde{F}_1|}|\widetilde{F}_1|=|\calR^*\cap\calF|.
\end{align}
Then
\begin{align*}
\chi~&=~\zeta\pth{|\calR^*-\calF|+\zeta_1 |\widetilde{F}_1|}+\zeta |\widetilde{F}_1|+ \pth{1-\zeta} |\widetilde{F}_2|\\
&\le ~\zeta\pth{|\calR^*-\calF|+|\calR^*\cap\calF|}+\zeta |\widetilde{F}_1|+ \pth{1-\zeta} |\widetilde{F}_2|~~~\text{by}~(\ref{zeta1})\\
&=~\zeta|\calR^*|+\zeta |\widetilde{F}_1|+ \zeta |\widetilde{F}_2|~~~\text{since}~~~\zeta\ge \frac{1}{2}\ge 1-\zeta\\
&=~\zeta\pth{|\calR^*|+|\widetilde{F}_1|+|\widetilde{F}_2|}\\
&=~\zeta\pth{n-2f+2f-2\phi+2|\calR^*\cap \calF|}\\
&\le~ \zeta\pth{n-2f+2f-2\phi+2\phi}~~~\text{since}~~~|\calR^*\cap \calF|\le |\calF|=\phi\\
&=~\zeta\, n.
\end{align*}
In addition, we know
\begin{align*}
|\pth{\calR^*-\calF}\cup \widetilde{F}_1|&=|\calR^*-\calF|+|\widetilde{F}_1|\\
&=|\calR^*|-|\calR^*\cap\calF|+|\widetilde{F}_1|\\
~&=n-2f-|\calR^*\cap\calF|+f-\phi+|\calR^*\cap\calF|\\
~&=n-\phi-f = |\calN|-f.
\end{align*}
Thus, in function (\ref{objnot=0}), at least $|\calN|-f$ local cost functions corresponding to $i\in \pth{\calR^*-\calF}\cup \widetilde{F}_1$ are assigned with weights that are lower bounded by $\frac{1}{n}$. Similar result holds when $1-\zeta\ge \frac{1}{2}$.


Cases (i) and (ii) together prove the theorem.

\eproof
\end{proof}

Recall that $H\pth{\cdot}=F\pth{\cdot}+G\pth{\cdot}$.
We will use the result below later to prove correctness of Algorithm 3. Let $Cov\pth{\cdot}$ be the convex hull of a given set.


\begin{theorem}
\label{globalvirtualfunction}
For given $\calN$ and $\calF$, there exists a convex and differentiable function ${\bf H}\pth{\cdot}$ defined over
any finite interval $[c,d]\supseteq Cov\pth{\cup_{i\in \calN}X_i}$ such that the derivative function of ${\bf H}\pth{\cdot}$ is $H\pth{\cdot}$, i.e., ${\bf H}^{\prime}\pth{x}=H\pth{x}$ for each $x\in[c,d]$
where $Cov\pth{\cup_{i\in \calN}X_i}\subseteq [c,d]$.
\end{theorem}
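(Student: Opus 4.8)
The plan is to exhibit ${\bf H}$ explicitly as an antiderivative of $H$ and then read off convexity from the monotonicity of that derivative. First I would record the two structural facts about $H$ that do all the work. By Proposition \ref{continuous}, $F$ and $G$ are continuous on $\reals$, so $H=F+G$ is continuous on $\reals$; in particular $H$ is continuous, hence Riemann integrable, on the compact interval $[c,d]$. By Proposition \ref{monotone}, $F$ and $G$ are non-decreasing on $\reals$, so $H=F+G$ is non-decreasing on $\reals$, and therefore on $[c,d]$.

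Next I would define, for $x\in[c,d]$,
\[
{\bf H}\pth{x} \triangleq \int_c^x H\pth{t}\,\diff t.
\]
Since $H$ is continuous on $[c,d]$, the Fundamental Theorem of Calculus guarantees that ${\bf H}$ is differentiable on $[c,d]$ with ${\bf H}^{\prime}\pth{x}=H\pth{x}$ for every $x\in[c,d]$, which is exactly the required derivative identity.

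Finally I would establish convexity using the standard characterization: a differentiable function on an interval is convex if and only if its derivative is non-decreasing on that interval. Because ${\bf H}^{\prime}=H$ is non-decreasing on $[c,d]$ by the observation above, ${\bf H}$ is convex on $[c,d]$, completing the construction of the desired function.

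I do not expect a genuine obstacle here: the argument is a one-line application of the Fundamental Theorem of Calculus together with the monotone-derivative criterion for convexity, and all the substantive content has already been discharged by Propositions \ref{monotone} and \ref{continuous}. The only mild point to be careful about is that the statement asserts existence for \emph{any} finite interval $[c,d]$ with $Cov\pth{\cup_{i\in\calN}X_i}\subseteq[c,d]$; but since $H$ is continuous and non-decreasing on all of $\reals$, the same definition and reasoning apply verbatim on every such interval, so no interval-dependent subtlety arises.
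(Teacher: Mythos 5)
Your proposal is correct and follows essentially the same route as the paper: both define ${\bf H}\pth{x}=\int_c^x H\pth{t}\,\diff t$, invoke Propositions \ref{monotone} and \ref{continuous} to get that $H$ is non-decreasing and continuous (hence Riemann integrable on $[c,d]$), apply the Fundamental Theorem of Calculus to obtain ${\bf H}^{\prime}=H$, and conclude convexity from the non-decreasing derivative criterion. No gaps.
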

\begin{proof}
We prove the existence of ${\bf H}\pth{\cdot}$ by construction. Specifically we show that function $H\pth{\cdot}$ is integrable over $Cov\pth{\cup_{i\in \calN}X_i}$.

By definition of admissible functions, $X_i=\arg\min_{x\in\mathbb{R}} h_i(x)$ is nonempty and compact (closed and bounded).
$Cov\pth{\cup_{i\in \calN}X_i}$ is the convex hull spanned by the union of $X_i$'s for all $i\in \calN$. Thus $Cov\pth{\cup_{i\in \calN}X_i}$ is convex and compact. In addition, by Propositions \ref{monotone} and \ref{continuous}, we know that function $H\pth{\cdot}=F\pth{\cdot}+G\pth{\cdot}$ is non-decreasing and continuous.

As stated in the theorem,
$Cov\pth{\cup_{i\in \calN}X_i}\subseteq [c,d]$.
 Let $H\pth{\cdot}_{[c,d]}$ be the restriction of function $H\pth{\cdot}$ to the closed interval $[c,d]$. Then we know that $H\pth{\cdot}_{[c,d]}$ is Riemann integrable over the closed interval
$[c,d]$.

For $x\in [c,d]$, define ${\bf H}\pth{x}$ by
\begin{align*}
{\bf H}\pth{x}\triangleq \int_{c}^x H\pth{t}dt.
\end{align*}
Since $H\pth{\cdot}$ is continuous on $[c, d]$,  we know ${\bf H}(\cdot)$ is differentiable and
\begin{align*}
{\bf H}^{\prime}(x)=H(x),
\end{align*}
for all $x\in (c,d)$ \cite{ross1980elementary}.

In addition, by the fact that a scalar differentiable function is convex on an interval if and only if its derivative is non-decreasing on that interval, we know that ${\bf H}\pth{\cdot}$ is convex.

\eproof
\end{proof}
~

It is easy to see that the function ${\bf H}(\cdot)$, defined in Theorem \ref{globalvirtualfunction}, is $nL$--Lipschitz continuous on any finite interval $[c,d]\supseteq Cov\pth{\cup_{i\in \calN}X_i}$.

\begin{remark}
The correctness of Algorithm 1 implies that $H\pth{\tx}=0$ and $\tx\in Cov\pth{\cup_{i\in \calN}X_i}$, where the latter claim follows from Proposition \ref{p1}, proved in Appendix \ref{appendix:p1}. Essentially, Algorithm 1 outputs an optimum of the following constrained convex optimization problem, where $Cov\pth{\cup_{i\in \calN}X_i}\subseteq [c,d]$:
\begin{align}
\label{equivalence of Algorithm 1}
\min \quad&{\bf H}\pth{x}\\
s.t. \quad& x\in [c,d]\nonumber.
\end{align}
\end{remark}

%

\subsection{Algorithm $2$}

Algorithm 2 is an alternative to Algorithm 1. This construction admits more concise proofs. The Step 1 in Algorithm 2 is identical to that in Algorithm 1. What distinguishes Algorithm 2 from Algorithm 1 is the decision rule described in Step 2.

\paragraph{}
\vspace*{8pt}\hrule
~

{\bf Algorithm $2$} for agent $j$:
\vspace*{4pt}\hrule

\begin{list}{}{}
\item[{\bf Step 1:}]
Perform Byzantine broadcast of local cost function $h_j(x)$ to all the agents
using any Byzantine broadcast algorithm, such as \cite{psl_BG_1982}.

In step 1,
agent $j$ should receive from each agent $i\in\calV$ its cost function $h_i(x)$.
For non-faulty agent $i\in\calN$, $h_i(x)$ will be an admissible function
({\em admissible} is defined in Section \ref{sec:intro}).
If a faulty agent $k\in\calF$ does not correctly perform Byzantine broadcast of
its cost function, or broadcasts
an inadmissible cost function, then hereafter assume $h_k(x)$
to be a {\em default} admissible cost function that is known to all agents.\\

\item[{\bf Step $2$:}]
The multiset of admissible functions obtained in Step 1 is
$\{h_1(x), h_2(x), \cdots, h_n(x)\}$ and the multiset of the derivatives of these functions is $\{h_1^{\prime}(x), h_2^{\prime}(x), \cdots, h_n^{\prime}(x)\}$.

For $1\le K\le n$, define
\begin{align}
\label{def of rankfunction}
g_K(x)\triangleq K^{th} ~\text{largest value (including ties) in the multiset}~ \{h_1^{\prime}(x), h_2^{\prime}(x), \cdots, h_n^{\prime}(x)\},
\end{align}
for each $x\in \reals$. If there exists $x\in\mathbb{R}$ such that
\begin{align}
\label{alg1.3}
\sum_{K=f+1}^{n-f} g_K\pth{x}~=~0~~~~
\end{align}
then deterministically choose output $\tx$ to be any one $x$ value that satisfies (\ref{alg1.3});\\otherwise, choose output $\tx=\perp$.
\end{list}

~
\hrule

~
\\

Similar to Algorithm 1, we can also show that Algorithm $2$ solves Problem 3 with parameters $\beta=\frac{1}{2(|\calN|-f)}$ and $\gamma=|\calN|-f$. Our correctness proof is based on the following fact.

\begin{proposition}
\label{Continous and Monotone of the K-th Rank Functions}
For each $1\le K\le n$, the function $g_K(x)$ defined in (\ref{def of rankfunction}) is a continuous non-decreasing function.
\end{proposition}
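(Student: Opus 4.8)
The plan is to reduce the statement to two elementary properties of the individual derivatives $h_i^{\prime}$ and then to invoke the standard max--min representation of an order statistic. First I would record that, since each $h_i$ is admissible, it is convex and continuously differentiable; hence each $h_i^{\prime}$ is a \emph{continuous} function (by continuous differentiability) and a \emph{non-decreasing} function (because the derivative of a convex scalar function is non-decreasing on $\reals$). These are the only features of the $h_i$'s that the argument will use.

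Next I would express $g_K$ as a pointwise max over mins. The key observation is that the $K$th largest value (including ties) of a finite multiset $\sth{a_1, \ldots, a_n}$ admits the representation
\begin{align}
\label{eq:maxmin}
g_K = \max_{S \subseteq \calV,\, |S| = K}\ \min_{i \in S} a_i.
\end{align}
Indeed, for any $K$-element index set $S$ the quantity $\min_{i \in S} a_i$ is at most the $K$th largest value, while choosing $S$ to be the indices of the $K$ largest values attains it; because this reasoning refers only to the values and not to a fixed sorted list of indices, it is insensitive to ties. Applying \eqref{eq:maxmin} pointwise at each $x$ yields
\begin{align}
\label{eq:gk}
g_K(x) = \max_{|S| = K}\ \min_{i \in S} h_i^{\prime}(x).
\end{align}

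The conclusion then follows from the closure of continuity and of monotonicity under \emph{finite} pointwise minima and maxima. For each fixed $S$, the function $x \mapsto \min_{i \in S} h_i^{\prime}(x)$ is continuous and non-decreasing, being the minimum of finitely many continuous non-decreasing functions; taking the maximum over the finitely many $K$-subsets $S$ appearing in \eqref{eq:gk} preserves both properties. Hence $g_K$ is continuous and non-decreasing, which is exactly the claim.

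I expect the only delicate point to be the correct treatment of ties in the phrase ``$K$th largest value (including ties)'', which is precisely why I would state the order-statistic identity \eqref{eq:maxmin} in terms of values rather than of a sorted list of labelled indices; once \eqref{eq:gk} is established, the stability of continuity and monotonicity under finite $\min$ and $\max$ makes the remainder routine.
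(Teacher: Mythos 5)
Your proof is correct, and it takes a genuinely different and considerably more economical route than the paper's. The paper establishes monotonicity by contradiction: it introduces the top-$K$ index set $C(x,K)$ and derives a cardinality contradiction ($K \ge K+1$) from the assumption $g_K(x_1) > g_K(x_2)$; it then proves continuity by a separate $\epsilon$--$\delta$ argument with a two-case analysis at $c+\delta$ and again at $c-\delta$, tracking whether the minimizing index of $C(c,K)$ survives into $C(c\pm\delta,K)$. Your argument replaces all of this index bookkeeping with the variational identity
$g_K(x) = \max_{S \subseteq \calV,\, |S| = K}\, \min_{i \in S} h_i^{\prime}(x)$,
after which the claim is immediate from the closure of continuity and of monotonicity under finite pointwise minima and maxima. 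The identity itself is standard and your justification of it is sound, and as you note it is stated in terms of values rather than labelled ranks, so ties cause no difficulty. What your approach buys is brevity and generality: the same two lines prove that \emph{every} order statistic of a finite family of continuous non-decreasing functions is continuous and non-decreasing, with no case analysis. What the paper's approach buys is self-containedness at the level of first principles (it never needs the max--min characterization of order statistics), and its explicit construction of the sets $C(x,K)$ is reused in spirit elsewhere in the correctness proofs; but as a proof of this proposition alone, yours is complete and preferable.
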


The proof of Proposition \ref{Continous and Monotone of the K-th Rank Functions} is presented in Appendix \ref{Proof of Continous and Monotone of the K-th Rank Functions}.

\begin{lemma}
\label{alternative well-definedOutPut}
Algorithm 2 returns $\tx\in\mathbb{R}$ when $n>3f$ (i.e., it does not return $\bot$).
\end{lemma}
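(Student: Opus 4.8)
The plan is to reduce the claim to an application of the intermediate value theorem, paralleling the proof of Lemma \ref{well-definedOutPut} but phrased through the ranked-derivative functions $g_K$ rather than through $F$ and $G$ (note that in general $\sum_{K=f+1}^{n-f} g_K \neq F+G$, so one cannot simply quote the earlier lemma). Write $S(x) \triangleq \sum_{K=f+1}^{n-f} g_K(x)$. Since Algorithm 2 returns $\bot$ only when no $x$ satisfies $S(x)=0$, it suffices to exhibit some $\tx\in\reals$ with $S(\tx)=0$. By Proposition \ref{Continous and Monotone of the K-th Rank Functions}, each $g_K$ is continuous and non-decreasing, hence so is the finite sum $S$; this is meaningful because $n>3f$ guarantees $f+1\le n-f$, so the index range is non-empty.

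To produce a sign change of $S$, I would reuse the two distinguished minimizers from Lemma \ref{well-definedOutPut}. Sort the $n$ functions in increasing order of $\max X_i$ and let $i_0$ be the $(f+1)$-th in this order; sort them in decreasing order of $\min X_i$ and let $j_0$ be the $(f+1)$-th. Pick $x_1\in X_{i_0}$ and $x_2\in X_{j_0}$. The key observation is a counting argument on the signs of the derivatives. At $x_1$, every agent $i$ with $\max X_i \ge \max X_{i_0}$ satisfies $x_1\le \max X_i$ and hence $h_i'(x_1)\le 0$ by convexity of $h_i$; since $i_0$ is the $(f+1)$-th smallest in the ordering, there are at most $f$ agents that can have $h_i'(x_1)>0$. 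Consequently the $(f+1)$-th largest derivative is non-positive, so $g_K(x_1)\le 0$ for every $K\ge f+1$, giving $S(x_1)=\sum_{K=f+1}^{n-f} g_K(x_1)\le 0$. Symmetrically, at $x_2$ at most $f$ agents have $h_i'(x_2)<0$, so $g_K(x_2)\ge 0$ for every $K\le n-f$, giving $S(x_2)\ge 0$.

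Finally I would conclude as follows: if $S(x_1)=0$ or $S(x_2)=0$ the lemma holds immediately; otherwise $S(x_1)<0<S(x_2)$, and monotonicity of $S$ forces $x_1<x_2$, so continuity together with the intermediate value theorem yields $\tx\in[x_1,x_2]$ with $S(\tx)=0$. The only real work lies in the counting argument of the second paragraph, namely translating the ordering of the minimizer sets $X_i$ into a bound on the number of strictly-signed derivatives at $x_1$ and $x_2$, and then reading off the sign of the middle-ranked values $g_K$. This step is essentially the same reasoning that gave $F(x_1)=0$ and $G(x_2)=0$ for Algorithm 1, so I expect no new difficulty beyond carefully handling the ranked-order bookkeeping (in particular the ties, which are absorbed into the ``including ties'' convention in \eqref{def of rankfunction}).
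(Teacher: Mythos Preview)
Your proposal is correct and follows essentially the same approach as the paper's proof: both pick $x_1\in X_{i_0}$ and $x_2\in X_{j_0}$ via the same $(f+1)$-th order statistics on $\max X_i$ and $\min X_i$, argue that at most $f$ derivatives can be strictly positive at $x_1$ (respectively negative at $x_2$) so that $S(x_1)\le 0\le S(x_2)$, and then invoke Proposition \ref{Continous and Monotone of the K-th Rank Functions} and the intermediate value theorem. Your write-up is in fact slightly more explicit than the paper's in justifying the counting step and in noting that $n>3f$ makes the index range $f+1\le K\le n-f$ non-empty.
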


The proof of Lemma \ref{alternative well-definedOutPut} is similar to the proof of Lemma \ref{well-definedOutPut}. We present it in Appendix \ref{appendix: alternative well-definedOutPut}.

The next two theorems prove that Algorithm 2 can solve Problem 3 for
$\gamma=|\calN|-f$.

\begin{theorem}
\label{t_algo2}
When $n>3f$,
Algorithm $2$ solves Problem 3 with $\beta=\frac{1}{2(|\calN|-f)}$ and $\gamma=|\calN|-f$.
\end{theorem}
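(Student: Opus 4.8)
The plan is to observe that Algorithm 2's decision rule~(\ref{alg1.3}) delivers, for free, exactly the equation~(\ref{R bar}) that the proof of Theorem~\ref{t_algo1} has to work to establish; once~(\ref{R bar}) is in hand, the entire portion of that proof following~(\ref{R bar}) transfers verbatim. This is precisely the sense in which Algorithm 2 admits a more concise correctness proof: it bypasses the derivation of~(\ref{R0}) and~(\ref{R bar}) through the $F_1^*,F_2^*$ machinery and the case analysis on $|A(\tx)|$.

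First I would invoke Lemma~\ref{alternative well-definedOutPut} to obtain an output $\tx\in\reals$ satisfying $\sum_{K=f+1}^{n-f}g_K(\tx)=0$. Then I would sort the multiset $\{h_1'(\tx),\ldots,h_n'(\tx)\}$ in non-increasing order (breaking ties arbitrarily), let $\bar{F_1}$ be the agents holding the first $f$ entries and $\bar{F_2}$ the agents holding the last $f$ entries, and set $\calR^*=\calV-\bar{F_1}-\bar{F_2}$. By the definition~(\ref{def of rankfunction}) of $g_K$, the middle $n-2f$ entries of this sorted list are precisely $\{h_i'(\tx):i\in\calR^*\}$ (the hypothesis $n>3f$ guarantees $f+1\le n-f$, so the top- and bottom-$f$ blocks are disjoint). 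Hence~(\ref{alg1.3}) reads $\sum_{i\in\calR^*}h_i'(\tx)=0$, which is exactly equation~(\ref{R bar}). Crucially, by construction $\bar{F_1}$ contains the $f$ largest gradient values (including ties) and $\bar{F_2}$ the $f$ smallest — the only two structural facts about $\bar{F_1},\bar{F_2}$ that the post-(\ref{R bar}) portion of the Theorem~\ref{t_algo1} proof uses.

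From this point the argument is identical to the proof of Theorem~\ref{t_algo1} after~(\ref{R bar}): introduce $\widetilde{F}_1\subseteq\bar{F_1}-\calF$ and $\widetilde{F}_2\subseteq\bar{F_2}-\calF$ with $|\widetilde{F}_1|=|\widetilde{F}_2|=f-\phi+|\calR^*\cap\calF|$ as in~(\ref{reintroduction 1}); derive the sign conditions~(\ref{boundary sign}) and the sandwich inequality~(\ref{sandwich}) from the extremality of $\bar{F_1},\bar{F_2}$ together with~(\ref{R bar}); choose $\zeta\in[0,1]$ satisfying~(\ref{gamma}); and conclude that $\tx$ minimizes the normalized convex combination~(\ref{talg1-obj=0}) of the $|\calN|$ functions $\{h_i:i\in\calN\}$, in which at least $|\calN|-f$ weights — those indexed by $(\calR^*-\calF)\cup\widetilde{F}_1$, or by $(\calR^*-\calF)\cup\widetilde{F}_2$ in the symmetric case $1-\zeta\ge\frac{1}{2}$ — are bounded below by $\frac{1}{2(|\calN|-f)}$. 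This yields Problem 3 with $\beta=\frac{1}{2(|\calN|-f)}$ and $\gamma=|\calN|-f$.

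The step I would be most careful about is the second paragraph's identification of the Algorithm 2 trimmed sum with $\sum_{i\in\calR^*}h_i'(\tx)$. One must \emph{not} shortcut this by claiming a global identity of the form $\sum_{K=f+1}^{n-f}g_K(x)=F(x)+G(x)$ valid for every $x$: this can fail at points where fewer than $f$ gradients are positive, since then the trimmed top block reaches into the negative gradients and no longer coincides with removing the top $f$ elements of $A(x)$. The correct and sufficient statement is only that, \emph{evaluated at the output $\tx$}, the trimmed sum equals the sum over the middle block $\calR^*$; this is immediate from~(\ref{def of rankfunction}) and is all that~(\ref{R bar}), and hence the rest of the argument, requires.
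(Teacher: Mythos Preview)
Your proposal is correct and follows essentially the same approach as the paper: sort the gradients at $\tx$, take the top-$f$ and bottom-$f$ blocks (the paper calls them $F_1^*,F_2^*$ rather than $\bar{F_1},\bar{F_2}$), observe that~(\ref{alg1.3}) directly yields $\sum_{i\in\calR^*}h_i'(\tx)=0$, and then reuse the portion of the Theorem~\ref{t_algo1} proof following~(\ref{R bar}) verbatim. Your added caution about not asserting a global identity between the trimmed sum and $F(x)+G(x)$ is a nice clarification, though the paper does not need to make it explicit either.
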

\begin{theorem}
\label{t_algo2-1}
When $n>3f$,
Algorithm $2$ solves Problem 3 with $\beta=\frac{1}{n}$ and $\gamma=|\calN|-f$.
\end{theorem}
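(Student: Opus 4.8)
The plan is to replay the proof of Theorem~\ref{t_algo11} almost verbatim; the only genuine difference is how the central identity (\ref{R bar}) is obtained. Let $\tx$ be the output of Algorithm~2, which exists by Lemma~\ref{alternative well-definedOutPut}. First I would sort the multiset $\{h_1^{\prime}(\tx),\ldots,h_n^{\prime}(\tx)\}$ in non-increasing order and let $\bar{F}_1\subseteq\calV$ (resp.\ $\bar{F}_2\subseteq\calV$) be the agents contributing the $f$ largest (resp.\ $f$ smallest) values, so that for $f+1\le K\le n-f$ the values $g_K(\tx)$ are exactly the derivatives of the agents in $\calR^*\triangleq\calV-\bar{F}_1-\bar{F}_2$ (these sets are disjoint since $n>3f>2f$, whence $|\calR^*|=n-2f$). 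By the decision rule (\ref{alg1.3}),
\[
\sum_{i\in\calR^*}h_i^{\prime}(\tx)\;=\;\sum_{K=f+1}^{n-f}g_K(\tx)\;=\;0,
\]
which is precisely (\ref{R bar}). Here (\ref{R bar}) is immediate from the rank-sum rule, whereas in the proof of Theorem~\ref{t_algo1} it required the delicate tie-breaking argument establishing $\sum_{i\in\bar{F}_1-F_1^*}h_i^{\prime}(\tx)=0$; this is the one place where Algorithm~2 genuinely simplifies the analysis.

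Next I would introduce $\widetilde{F}_1\subseteq\bar{F}_1-\calF$ and $\widetilde{F}_2\subseteq\bar{F}_2-\calF$ of sizes $f-\phi+|\calR^*\cap\calF|$ each, well-defined by the same cardinality count as in (\ref{reintroduction 1}). The boundary-sign inequalities (\ref{boundary sign}) and the sandwich inequalities (\ref{sandwich}) then follow exactly as in the proof of Theorem~\ref{t_algo1}: both arguments use only (\ref{R bar}) together with the fact that $\bar{F}_1$ (resp.\ $\bar{F}_2$) holds the $f$ largest (resp.\ smallest) values of the sorted multiset, a structural property that $\calR^*,\bar{F}_1,\bar{F}_2$ enjoy by construction. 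For instance, if some $i_0\in\widetilde{F}_1$ had $h_{i_0}^{\prime}(\tx)<0$, then, being among the $f$ largest values, every agent in $\calR^*$ would have strictly negative derivative, contradicting (\ref{R bar}).

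With (\ref{boundary sign}) in hand, I would pick $\zeta\in[0,1]$ satisfying (\ref{rewritten 0 alg1}) and assume without loss of generality $\zeta\ge\frac12$. From this point the argument is identical to the proof of Theorem~\ref{t_algo11}: split on whether $\sum_{i\in\calR^*\cap\calF}h_i^{\prime}(\tx)$ is zero (Case (i)) or nonzero (Case (ii)). In Case~(ii), rather than invoking the identity $F_1^*=\bar{F}_1$ (an Algorithm~1 artifact with no analogue here), I would argue directly: if $\sum_{i\in\calR^*\cap\calF}h_i^{\prime}(\tx)>0$ then some $k\in\calR^*\cap\calF$ has $h_k^{\prime}(\tx)>0$, and since every $i\in\widetilde{F}_1\subseteq\bar{F}_1$ is among the $f$ largest values while $k\in\calR^*$ is not, we get $h_i^{\prime}(\tx)\ge h_k^{\prime}(\tx)>0$, so $\widetilde{F}_1\ne\emptyset$ and $\mu=\min_{i\in\widetilde{F}_1}h_i^{\prime}(\tx)>0$. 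The remaining denominator bookkeeping—bounding $\zeta_1|\widetilde{F}_1|\le|\calR^*\cap\calF|$ as in (\ref{zeta1}) and then the normalizing constant by $\zeta n$—carries over unchanged, exhibiting $\tx$ as an optimum of a convex combination of the $|\calN|$ non-faulty cost functions in which the $|\calN|-f$ functions indexed by $(\calR^*-\calF)\cup\widetilde{F}_1$ each carry weight at least $\frac1n$.

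The main obstacle is not any new inequality but an auditing task: I must confirm that every downstream step of the Theorem~\ref{t_algo11} proof depends only on (\ref{R bar}) and the sorted-order structure, and never on the Algorithm~1-specific objects $F_1^*,F_2^*$ or the constraint (\ref{alg1.2}). Once this verification is complete—replacing the single use of $F_1^*=\bar{F}_1$ by the direct positivity argument above—the two cases combine to give $\beta=\frac1n$ and $\gamma=|\calN|-f$, as claimed.
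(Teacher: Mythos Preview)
Your proposal is correct and follows essentially the same approach as the paper: the paper's proof (Appendix~\ref{appendix:t_algo2}) likewise sorts $\{h_i'(\tx)\}$, takes the top and bottom $f$ agents (there called $F_1^*,F_2^*$, playing the role of your $\bar F_1,\bar F_2$), obtains $\sum_{i\in\calR^*}h_i'(\tx)=0$ directly from (\ref{alg1.3}), and then states that the remainder is identical to the proofs of Theorems~\ref{t_algo1} and~\ref{t_algo11} under this substitution. Your explicit check that the one Algorithm~1--specific step (using $F_1^*=\bar F_1$ to force $\mu>0$) can be replaced by the direct rank argument is exactly the small audit the paper leaves implicit.
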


The proofs of Theorems \ref{t_algo2} and \ref{t_algo2-1} are presented in Appendix \ref{appendix:t_algo2}.

\subsection{Algorithm 3}
\label{algo3}
Both Algorithm 1 and Algorithm $2$ may be impractical because they require entire cost functions to be exchanged between the agents.
Unlike Algorithms 1 and 2, the iterative Algorithm 3 presented below
does {\em not} require the agents to exchange their local cost functions
in their entirety. Instead, the agents exchange gradients of their local
cost functions in each iteration. This algorithm, derived
from the gradient-based methods for convex optimization,  is more suitable for practical purposes.\\

For convergence of our gradient-based algorithm (Algorithm 3),
we now impose the additional restriction that
the admissible functions also be $L$-Lipschitz continuous. 
This restriction is assumed in the rest of this section, even if it is not
stated explicitly again.

In Algorithm 3, each agent $j$ computes a state variable
$x_j[t]$ in the $t$-th iteration, $t\geq 1$, as elaborated below.
We assume that the entire sequence $x_j[t]$ is saved by agent $j$. Due to this requirement, Algorithm 3 requires additional memory, rather than keeping the minimal state $x_i[t]$. In the next subsection, we reduce this state maintenance requirement.

\paragraph{}
\hrule
~
\vspace*{4pt}

{\bf Algorithm 3} for agent $j$:
~
\vspace*{4pt}\hrule

\begin{list}{}{}
\item[{\bf Initialization Step (i):}]
Choose $v_j\in X_j=\arg\min_{x\in\mathbb{R}} h_j(x)$.\\
~
\item[{\bf Initialization Step (ii):}] Perform exact Byzantine consensus with $v_j$ as
the input of agent $j$ to the consensus algorithm -- any exact Byzantine
consensus algorithm \cite{Lynch:1996:DA:525656} may be used for this purpose\footnote{For instance, consider the following algorithm: Each agent Byzantine broadcasts its $v_j$; Collect $n$ such values from other agents; Drop the smallest $f$ values and the largest $f$. The output or consensus is the average of the remaining $n-2f$ values. }.
Set $x_j[0]$ to the output of the above consensus algorithm.\\

\item[{\bf Iteration $t\geq 1$:}]

\begin{itemize}
\item[{\bf Step 1:}] Compute $h_j^{\prime}\pth{x_j[t-1]}$, and perform Byzantine broadcast of $h_j^{\prime}\pth{x_j[t-1]}$ to all the agents, using any Byzantine broadcast algorithm, such as \cite{psl_BG_1982}.\\

In step 1, agent $j$ should receive a gradient from each agent $i\in\calV$ -- let us denote the gradient received from agent $i$ in iteration $t$ as $g_i[t-1]$.
Agent $j$ keeps a record of the sequence $(t, x_j[t])$.
Agent $j$ also keeps a record of the sequence $(t,g_i[t-1])$ for each agent $i$.
Each received gradient is checked for {\em admissibility} as follows, using
the above record.\\
\begin{itemize}
\item If no gradient is, in fact, received from agent $i$ in iteration $t$
via a Byzantine broadcast from $i$,
then the gradient $g_i[t-1]$ for agent $i$ is deemed {\em inadmissible}.

\item
If there exists an iteration $1\leq t_0<t$ such that at least one of the following conditions is true, then the gradient received from agent $i$ is deemed
{\em inadmissible}.\\
\begin{enumerate}
\item $x_j[t_0-1]\le x_j[t-1]$ and $g_k[t_0-1]>g_k[t-1]$ \\
\item $x_j[t_0-1]\ge x_k[t-1]$ and $g_k[t_0-1]<g_k[t-1]$ \\
\item $|g_k[t-1]|> L$
\end{enumerate}
\end{itemize}
~

If the gradient received from any agent $i$ is deemed inadmissible, then
it must be the case that agent $i$ is faulty. In that case, agent $i$ is isolated (i.e.,
removed from the system). This reduces the total number of agents $n$ by 1,
and the maximum number of faulty agent $f$ is also reduced by 1.
Algorithm 3 is {\em restarted} (from Step 1) using the new parameters
$n$ and $f$.\footnote{It is also possible to continue executing the
algorithm further, but for brevity, we take the approach of eliminating
the faulty agent, and restarting.}
The gradients received from
any non-faulty agent $i\in\calN$ will never be found to be inadmissible.\\

\item[{\bf Setp 2:}] Due to the restart mechanism above,
the algorithm progresses to Step 2 only when all the received gradients
are deemed admissible.
Let $\calR[t-1]$ denote the multiset of admissible gradients $\{g_1[t-1],g_2[t-1],\cdots,g_n[t-1]\}$, obtained in Step 1 of $t$-th iteration.
%
%
If there are more than $f$ positive gradients in $\calR[t-1]$, then remove $f$ largest gradients from $\calR[t-1]$; otherwise, remove all positive gradients from $\calR$.
Further, if there are more than $f$ negative gradients in $\calR[t-1]$, then remove $f$ smallest gradients from $\calR$; otherwise, remove all negative gradients from $\calR[t-1]$.
Let $\calR^*[t-1]$ be the set of agents corresponding to all the remaining gradients.
\\

\item[{\bf Step 3:}] Let $\{\lambda[t]\}_{t=0}^{\infty}$ be a sequence of diminishing (non-increasing and $\lambda[t]\to 0$) stepsizes chosen beforehand such that $\lambda[t]>0$ for each $t$, $\sum_{t=1}^{\infty}\lambda[t]=\infty$ and $\sum_{t=1}^{\infty}\lambda^2[t]<\infty$.\footnote{For instance, the stepsizes $\{\lambda[t]=\frac{1}{t+1}\}_{t=0}^{\infty}$ satisfy the aforementioned condition, since $\lambda[t]=\frac{1}{t+1}\le \frac{1}{t+2}=\lambda[t+1]$, $\sum_{t=0}^{\infty} \frac{1}{t+1}=\infty$ and $\sum_{t=0}^{\infty} \frac{1}{\pth{t+1}^2}<\infty$. }

Compute $x_j[t]$ as
\begin{align*}
x_j[t]= x_j[t-1]-\lambda[t-1]\sum_{i\in \calR^*[t-1]} g_i[t-1].
\end{align*}

\end{itemize}

\end{list}

~
\hrule

~

We prove that Algorithm 3 solves Problem 3 with $\gamma=|\calN|-f$. In fact, we will see later that Algorithm 3 is essentially the gradient method for the constrained convex optimization problem (\ref{equivalence of Algorithm 1}).  Then the remaining correctness proof follows the standard convergence analysis of such algorithms \cite{boyd2004convex,nesterov2004introductory}.

\begin{lemma}
\label{initial}
The consensus value obtained at the end of step 2 of Algorithm 3 is contained in $Cov\pth{\cup_{i\in\calN}X_i}$, i.e., $x[0]\in Cov\pth{\cup_{i\in\calN}X_i}$.
\end{lemma}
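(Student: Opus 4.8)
The plan is to reduce the claim to the \emph{validity} property of exact Byzantine consensus, namely that the agreed value always lies in the convex hull of the inputs held by the non-faulty agents. Since every quantity here is a scalar, $Cov\pth{\cup_{i\in\calN}X_i}$ is simply the closed interval $[m,M]$ with $m=\min_{i\in\calN}\min X_i$ and $M=\max_{i\in\calN}\max X_i$. In Initialization Step (i) each non-faulty agent $i\in\calN$ picks $v_i\in X_i$, so $m\le v_i\le M$; hence the convex hull of the non-faulty inputs $\{v_i:i\in\calN\}$ is contained in $[m,M]=Cov\pth{\cup_{i\in\calN}X_i}$. It therefore suffices to show that the consensus output $x[0]$ produced in Initialization Step (ii) lies in $[\min_{i\in\calN}v_i,\max_{i\in\calN}v_i]$.

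First I would establish this validity property for the concrete trimmed-mean protocol described in the footnote; it holds for any correct exact Byzantine consensus algorithm, but making it explicit keeps the argument self-contained. Write $m'=\min_{i\in\calN}v_i$ and $M'=\max_{i\in\calN}v_i$. The protocol collects $n$ received values, of which at most $\phi\le f$ originate from faulty agents, discards the $f$ smallest and the $f$ largest, and averages the surviving $n-2f$ values. The key step is to argue that every surviving value $r$ satisfies $m'\le r\le M'$. I would do this by a counting argument: if some surviving value $r$ satisfied $r<m'$, then all of the at least $n-f$ non-faulty values would be strictly larger than $r$, so at most $n-(n-f)=f$ of the $n$ values are $\le r$; hence $r$ sits among the $f$ smallest values and would have been discarded, a contradiction. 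The symmetric argument rules out $r>M'$.

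Having bounded each surviving value in $[m',M']$, their average---and hence $x[0]$---also lies in $[m',M']$. Combining with the first paragraph, $x[0]\in[m',M']\subseteq[m,M]=Cov\pth{\cup_{i\in\calN}X_i}$, which is the desired conclusion. The only delicate point is the treatment of ties when invoking the counting argument; I expect this to be the main obstacle, and I would handle it by phrasing the bound in terms of the \emph{number} of values $\le r$ (respectively $\ge r$) rather than positions in a particular tie-breaking order, so that the conclusion is independent of how equal values are arranged in the sort. Everything else is routine.
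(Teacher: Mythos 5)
Your proposal is correct and takes essentially the same route as the paper, which simply observes that the lemma follows from the \emph{validity} condition of a correct exact Byzantine consensus algorithm and omits the details. Your additional counting argument verifying validity for the trimmed-mean protocol in the footnote is sound (a surviving value below the smallest non-faulty input would be among the $f$ smallest of the $n$ received values and hence discarded), but it is elaboration rather than a different approach.
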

The above lemma follows trivially from {\em validity} condition
imposed on a correct Byzantine consensus algorithm. Thus the proof is omitted
here. \\

Let $[a, b]\triangleq Cov\pth{\cup_{i\in \calN}X_i}$. Recall that all functions $h_i(x)$'s are $L$-Lipschitz continuous, i.e., $|h_i^{\prime}(x)|\le L$.


\begin{proposition}
\label{trapped}
In Algorithm 3,  $x_i[t]=x_j[t]$ for all $i, j\in \calN$ and for all $t$. In addition, $x_i[t]\in [a-n\lambda[0] L, b+n\lambda[0] L]$ for any $t$.
\end{proposition}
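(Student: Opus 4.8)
The plan is to establish the two claims by induction on $t$: first the exact agreement among non-faulty agents, which lets me speak of a single common iterate $x[t]$, and then the trapping property, whose core is a sign analysis of the trimmed gradient sum.

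For agreement, the base case is immediate: Initialization Step (ii) runs an \emph{exact} Byzantine consensus, so all non-faulty agents set $x_i[0]$ to the same value $x[0]$. For the inductive step, assume $x_i[t-1]=x[t-1]$ for every $i\in\calN$. In iteration $t$ each agent $k$ performs a Byzantine broadcast of a gradient, so every non-faulty agent receives the \emph{same} value $g_k[t-1]$ from each $k\in\calV$ (and for $i\in\calN$ this value is the true $h_i'(x[t-1])$). The admissibility tests and the trimming rule of Step 2 are deterministic functions of this common received data and the common recorded history, so all non-faulty agents isolate exactly the same agents (or none), form the same multiset $\calR[t-1]$, and compute the same set $\calR^*[t-1]$; hence the sum $S[t-1]:=\sum_{i\in\calR^*[t-1]}g_i[t-1]$ is identical across non-faulty agents. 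As $\{\lambda[t]\}$ is a fixed common sequence, $x_i[t]=x[t-1]-\lambda[t-1]S[t-1]$ is the same for all $i\in\calN$. Write $x[t]$ for this common value, and note $g_i[t-1]=h_i'(x[t-1])$ for $i\in\calN$.

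For the trapping bound, set $[a,b]=Cov\pth{\cup_{i\in\calN}X_i}$. First, a one-step bound: every surviving gradient satisfies $|g_i[t-1]|\le L$ (non-faulty agents by $L$-Lipschitz continuity, faulty ones by the third admissibility condition $|g_k[t-1]|\le L$), and $|\calR^*[t-1]|\le n$ with $\lambda[t-1]\le\lambda[0]$ by monotonicity of the stepsizes, so $|x[t]-x[t-1]|\le n\lambda[0]L$. Next, the sign lemma: if $x[t-1]\ge b$ then $S[t-1]\ge0$. Indeed, since $b\ge\max X_i$ for each $i\in\calN$, convexity gives $h_i'(x[t-1])\ge0$ for every non-faulty $i$, so only faulty agents can contribute negative gradients; there are at most $\phi\le f$ of them, hence at most $f$ negative gradients in $\calR[t-1]$, and the trimming rule therefore removes \emph{all} negative gradients. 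Every gradient surviving in $\calR^*[t-1]$ is then $\ge0$, giving $S[t-1]\ge0$. Symmetrically, $x[t-1]\le a$ forces $S[t-1]\le0$.

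It remains to combine these facts by induction to prove $x[t]\in[a-n\lambda[0]L,\,b+n\lambda[0]L]=:[a',b']$. The base case $x[0]\in[a,b]$ is Lemma \ref{initial}. For the upper bound, let $t$ be the first index (if any) with $x[t]>b'$, so that $x[t-1]\le b'$. If $x[t-1]\le b$, the one-step bound gives $x[t]\le x[t-1]+n\lambda[0]L\le b'$; if instead $b<x[t-1]\le b'$, then $S[t-1]\ge0$ gives $x[t]\le x[t-1]\le b'$. Either way we contradict $x[t]>b'$, so no such $t$ exists. The lower bound follows symmetrically using $S[t-1]\le0$ whenever $x[t-1]\le a$. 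The only delicate point is the sign lemma: one must read off from the explicit trimming rule that, because at most $f$ of the $n$ reported gradients can have the wrong sign when $x[t-1]$ lies outside $[a,b]$, those wrong-signed gradients are discarded entirely, so the surviving sum cannot push the iterate further away from $[a,b]$.
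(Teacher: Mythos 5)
Your proof is correct and follows essentially the same route as the paper's: induction for agreement via the exact consensus base case and the determinism of Byzantine-broadcast-delivered data, then induction for the trapping bound using the one-step displacement bound $n\lambda[0]L$ when $x[t-1]\in[a,b]$ and the observation that outside $[a,b]$ at most $f$ gradients can have the wrong sign, so the trimming rule discards all of them and the iterate cannot move further from $[a,b]$. The only cosmetic difference is that you phrase the trapping induction as a minimal-counterexample argument while the paper does a direct three-case induction on the location of $x[t-1]$; the content is identical.
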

\begin{proof}
Recall that admissible cost functions are $L$-Lipschitz.
We prove this proposition by induction. By the correctness of a Byzantine consensus algorithm, $x_i[0]=x_j[0]$ for all $i,j\in \calN$. Assume that $x_i[t-1]=x_j[t-1]$ for all $i,j\in \calN$ for some $t>0$. At iteration $t$,
non-faulty agents $i$ and $j$ receive identical set of gradients $\calR[t-1]$ in Step 1
of iteration $t$. This, along with the assumption
that $x_i[t-1]=x_j[t-1]$ implies that $x_i[t]=x_j[t]$. \\

Now, by induction,
we show that $x_i[t]\in [a-n\lambda[0] L, b+n\lambda[0] L]$ for any $i\in \calN$ and for any $t\geq 0$. By the validity condition of a correct consensus algorithm, it holds that $x_i[0]\in Cov\pth{\cup_{i\in \calN} X_i}=[a,b]$. Assume that $x_i[t-1]\in [a-n\lambda[0] L, b+n\lambda[0] L]$ for some $t>0$. We will show that
\begin{align}
\label{update trapped}
x_i[t]\in [a-n\lambda[0] L, b+n\lambda[0] L]
\end{align}
is also true.
We know $[a-n\lambda[0] L, b+n\lambda[0] L]=[a-n\lambda[0] L, a)\cup [a, b]\cup (b, b+n\lambda[0] L]$. \\

When $x_i[t-1]\in [a, b]=Cov\pth{\cup_{i\in \calN} X_i}$, then
\begin{align*}
x_i[t]= x_i[t-1]-\lambda[t-1]\sum_{i\in \calR^*[t-1]} g_i[t-1]\overset{(a)}{\le} x_i[t-1]+n\lambda[0] L\le b+n\lambda[0] L,
\end{align*}
where $(a)$ follows due to the admissibility test (3) in Step 1 above and because $\lambda[0]\ge \lambda[t]$ for $t\ge 0$.

Similarly, we have
\begin{align*}
x_i[t]\ge a-n\lambda[0] L.
\end{align*}
Thus when $x_i[t-1]\in [a, b]$, then $x_i[t]\in [a-n\lambda[0] L, b+n\lambda[0] L]$. \\

When $x_i[t-1]\in (b, b+n\lambda[0] L]$, all non-faulty gradients are positive since $x_i[t-1]>b=\max_{j\in \calN}\max X_j$. Thus, at most $f$ admissible gradients are non-positive. 
By the code in Algorithm 3, all the negative admissible gradients will be removed. Thus $g_j[t-1]\ge 0$ for each $j\in \calR^*_i[t-1]$.
\begin{align*}
x_i[t]&= x_i[t-1]-\lambda[t-1]\sum_{i\in \calR^*[t-1]} g_i[t-1]\le x_i[t-1]\le b+n\lambda[0] L
\end{align*}
In addition,
\begin{align*}
x_i[t]&= x_i[t-1]-\lambda[t-1]\sum_{i\in \calR^*[t-1]} g_i[t-1] \ge x_i[t-1]-n\lambda[0] L\ge a-n\lambda[0] L.
\end{align*}
because $x_i[t-1]>b\geq a$ and $g_i[t-1]\leq L$
for $g_i[t-1]\in \calR^*[t-1]$.

Thus when $x_i[t-1]\in (b, b+n\lambda[0] L]$, $x_i[t]\in [a-n\lambda[0] L, b+n\lambda[0] L]$. Similarly, we can show that
when $x_i[t-1]\in  [a-n\lambda[0] L, a)$, $x_i[t]\in [a-n\lambda[0] L, b+n\lambda[0] L]$.

Therefore, we conclude that $x_i[t]\in [a-n\lambda[0] L, b+n\lambda[0] L]$ and the induction is complete.
\eproof
\end{proof}

\vskip 1.5\baselineskip
Henceforth, with an abuse of notation, we drop the subscript $j$ of $x_j[t]$ for each $j$ and $t$. Similarly, we drop the time index $[0]$ of $\lambda[0]$.

\begin{theorem}
\label{localvirtualfunction}
For any $i\in \calF$, let $\{g_i[t-1])\}_{t=1}^{\infty}$ be the sequence of admissible gradients generated by Algorithm 3, where
$g_i[t-1]$ is supposed to be the gradient at $x[t-1]$. Then there exists a function $g(x)$ defined over $[c, d]$, which contains points $a-n\lambda L$ and $b+n\lambda L$ as interior points, such that (i) $g^{\prime}(x[t-1])= g_i[t-1]$, and (ii) $g(x)$ is convex, $L$--Lipschitz, and differentiable.
\end{theorem}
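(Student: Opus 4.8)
The plan is to mirror the construction in Theorem~\ref{globalvirtualfunction}: build the intended derivative as a continuous, non-decreasing, $[-L,L]$-valued function on $[c,d]$ that interpolates the reported gradients, and then recover $g$ by integration. Concretely, suppose we can produce $\psi:[c,d]\to[-L,L]$ that is continuous, non-decreasing, and satisfies $\psi(x[t-1])=g_i[t-1]$ for every $t\ge 1$. Setting $g(x)=\int_c^x \psi(s)\,ds$ then finishes the argument exactly as before: since $\psi$ is continuous, the fundamental theorem of calculus \cite{ross1980elementary} gives that $g$ is differentiable on $(c,d)$ with $g^{\prime}=\psi$, so $g^{\prime}(x[t-1])=\psi(x[t-1])=g_i[t-1]$, proving~(i); since $\psi$ is non-decreasing, $g$ is convex; and since $|\psi|\le L$, $g$ is $L$--Lipschitz, proving~(ii). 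Thus the whole burden is shifted onto constructing $\psi$.

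First I would extract from the admissibility test exactly the two structural properties of the reported sequence that $\psi$ must respect. Condition~(3) of the test directly yields $|g_i[t-1]|\le L$ for every retained gradient. Conditions~(1) and~(2), applied to every pair of past iterations $t_0<t$, say precisely that the map from iterate value to reported gradient is monotone: whenever $x[s-1]\le x[t-1]$ one has $g_i[s-1]\le g_i[t-1]$ (and, taking both inequalities, $x[s-1]=x[t-1]$ forces $g_i[s-1]=g_i[t-1]$). Hence, writing $D=\{x[t-1]:t\ge 1\}$, the assignment $\phi(x[t-1])\triangleq g_i[t-1]$ is a well-defined, non-decreasing, $[-L,L]$-valued function on $D$. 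By Proposition~\ref{trapped}, $D\subseteq[a-n\lambda L,\,b+n\lambda L]$, which sits strictly inside $[c,d]$, so every point to be interpolated is interior.

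Next I would construct $\psi$ as a continuous, non-decreasing extension of $\phi$. Define first the monotone extension $\bar\psi(x)=\sup\pth{\{\inf\phi\}\cup\{\phi(d'):d'\in D,\ d'\le x\}}$ for $x\in[c,d]$; this is non-decreasing, takes values in $[\inf\phi,\sup\phi]\subseteq[-L,L]$, and satisfies $\bar\psi(d_0)=\phi(d_0)$ at every $d_0\in D$ (since $\phi(d_0)$ is the largest admitted value). Being monotone and bounded, $\bar\psi$ has at most countably many jump discontinuities; away from $D$ these can be removed by local linear interpolation, after which integrating the resulting continuous $\psi$ as above delivers $g$. Outside the convex hull of $D$ the function $\bar\psi$ is already constant (equal to $\inf\phi$ and $\sup\phi$), so no modification is needed there.

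The main obstacle is ruling out jumps of $\bar\psi$ at points of $D$ and at accumulation points of $D$: there, a jump cannot be smoothed without violating either an exact interpolation constraint $\psi(x[t-1])=g_i[t-1]$ or the values prescribed at nearby clustered samples, and indeed a convex function that is differentiable everywhere must have a continuous (Darboux-monotone) derivative, so a genuine jump at such a point would be fatal to~(ii). I would therefore argue that no such jump occurs: at any accumulation point $x^\ast$ of $D$ the one-sided limits of $\phi$ along $D$ must coincide with each other (and with $\phi(x^\ast)$ when $x^\ast\in D$). Establishing this limit-matching is the delicate step, and I would attempt it using the $L$--Lipschitz bound together with the fact that the admissibility checks are enforced against \emph{all} earlier iterations, which forces the reported gradients to vary consistently as the iterates cluster (in the convergence analysis the iterates are moreover regular, since $\lambda[t]\to 0$ and the step $\lambda[t-1]\sum_{i\in\calR^{*}[t-1]}g_i[t-1]$ is bounded). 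Once this matching is in hand, $\bar\psi$ is continuous at every problematic point, the remaining jumps lie away from $D$ and can be smoothed freely, and the rest of the argument is the routine integration bookkeeping already carried out in Theorem~\ref{globalvirtualfunction}.
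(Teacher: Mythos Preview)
Your plan is the paper's plan: build a non-decreasing continuous interpolant of the data $(x[t-1],g_i[t-1])$ and integrate it. The paper does this in one line by piecewise-linearly connecting the points $\{(x[t-1],g_i[t-1])\}_{t\ge1}\cup\{(c,g^{-}),(d,g^{+})\}$ (with $g^{\pm}$ the inf/sup of the reported values), declares the result continuous, and integrates. You are more scrupulous, and you correctly isolate the only nontrivial point: continuity of the interpolant at accumulation points of $D=\{x[t-1]:t\ge1\}$.

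The gap is that the tools you invoke for that ``delicate step'' cannot close it. The admissibility tests enforce exactly two things: monotonicity ($x[s-1]\le x[t-1]\Rightarrow g_i[s-1]\le g_i[t-1]$) and the bound $|g_i[\cdot]|\le L$. Neither yields a modulus of continuity for $\phi$. Concretely, if the iterates accumulate at some $x^\ast$ from both sides, nothing in the checks prevents a Byzantine agent from reporting $L$ at every iterate above $x^\ast$ and $0$ at every iterate below; all pairwise monotonicity tests pass, yet $\phi$ has a jump of size $L$ at $x^\ast$, and since a convex differentiable function necessarily has a \emph{continuous} derivative, no $g$ satisfying (i) and (ii) can exist for such data. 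The $L$-bound constrains $|g'|$, not the variation of $g'$; ``checks against all earlier iterations'' only re-derives monotonicity; and $\lambda[t]\to0$ constrains the spacing of iterates, not the Byzantine's reports at them. In short, the step you flag as delicate is not provable from the hypotheses you list, and the paper's terse proof simply hides the same issue behind ``it is easy to see that $\widetilde g$ is continuous.''
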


\begin{proof}
When $i\in \calF$, let $g^{+}=\sup~ \{g_i[t-1])\}_{t=1}^{\infty}$, $g^{-}=\inf~ \{g_i[t-1])\}_{t=1}^{\infty}$, and consider the piece-wise linear function with $\{\pth{x[t-1], g_i[t-1]}\}_{t=1}^{\infty}\cup \{\pth{c, ~g^{-}},\pth{d,~ g^{+}}\}$ as corners, denoted by $\widetilde {g}(x)$. It is easy to see that function $\widetilde {g}(x)$ is continuous.
Recall that $[c,d]$ is a closed interval. Thus $\widetilde {g}(x)$ is Riemann integrable over $[c,d]$. Choose $g(x)$ to be the integral function of $\widetilde {g}(x)$, i.e., $g(x)\triangleq \int_{c}^{x} \widetilde {g}(t)dt$. 

Since $\widetilde {g}(t)$ is continuous over $[c,d]$, it holds that $g(x)$ is differentiable and $g^{\prime}(x)=\widetilde {g}(x)$ for $x\in (c,d)$. In addition, when $\{g_i[t-1]\}_{t=1}^{\infty}$ are admissible, by definition of admissible gradients (in Algorithm 3), function $\widetilde {g}(x)$ is non-decreasing.
Then function $g(x)$ is convex over $[c,d]$. 
%
$L$-Lipschitz property is also ensured by admissibility of the gradients.
\eproof
\end{proof}

It is easy to see that there exists an admissible function $\bar{g}(\cdot)$ such that the restriction of $\bar{g}(\cdot)$ to $[c,d]$ equals $g(\cdot)$, i.e., $\bar{g}|_{[c,d]}(\cdot)=g(\cdot)$. Thus, Theorem \ref{localvirtualfunction} states that the
record of gradients saved in Algorithm 3 effectively
forces each Byzantine agent (that is not already isolated)
to behave as if it is non-faulty with a local cost function $\bar{g}(x)$ that is admissible.
Therefore, hereafter we can assume that all agents, including faulty
agents, behave correctly and consistently with an admissible local
cost function.  \\

Recall that we defined
$[a,b]= Cov\pth{\cup_{i\in \calN} X_i}$. By Proposition \ref{trapped}, we know that the local estimate of each non-faulty agent $i$ is trapped within the closed interval $[a-n\lambda[0] L, b+n\lambda[0] L]$ for all iterations, i.e., $x_i[t]\in [a-n\lambda[0] L, b+n\lambda[0] L]$ for all $i\in \calN$ and all $t$. Therefore, Algorithm 3 is essentially trying to find an  (exact or approximate)
optimum of the following constrained convex optimization problem, which is a variant of (\ref{equivalence of Algorithm 1}):
\begin{align*}
\min \quad &{\bf H}\pth{x}\\
s.t. \quad & x\in [a-nL\lambda[0], b+nL\lambda[0]].
\end{align*}

It should be easy to see that the total gradient $\sum_{i\in \calR^*[t-1]} g_i[t-1]$
used in computing $x_j[t]$ is identical to $F(x_j[t-1])+G(x_j[t-1])$,
which is the gradient of ${\bf H}(\cdot)$ at $x_j[t-1]$.
In other words, the agents are distributedly using the gradient method for convex optimization of global cost function ${\bf H}(\cdot)$, which is convex and continuous.

Following the convergence analysis of the gradient method in Theorem 3.2.2 in \cite{nesterov2004introductory} and Theorem 41 in \cite{NedicNotes}, we can show that the limit of $\{x[t]\}_{t=0}^{\infty}$ exists and $\lim_{t\diverge} x[t]=x^*$, where $x^*$ is an optimum of function ${\bf H}(\cdot)$.

\begin{remark}
The gradient-trimming mechanism in Step 2 of Algorithm 3  can be replaced by the following trimming rule:
``Remove $f$ largest gradients from $\calR[t-1]$ and remove $f$ smallest gradients from $\calR[t-1]$."
This trimming rule leaves $n-2f$ admissible gradients. The modified algorithm is then a distributed version of Algorithm 2 and its correctness can be shown analogously to that of Algorithm 3.
\end{remark}

\mycomment{++++++++++++++++
Let $h_i(x)$ be one virtual function of agent $i$, for each $i\in \calF$, which is known to all non-faulty agents.  Then $F\pth{\cdot}$ and $G\pth{\cdot}$ are well-defined. In addition, by Theorem \ref{globalvirtualfunction}, we know that there exists a virtual function ${\bf H}\pth{\cdot}$ defined over $[a-n\lambda L, b+n\lambda L]$ whose derivative function is $F\pth{\cdot}+G\pth{\cdot}$. Then step 3 in Algorithm 2 is essentially running a gradient descent algorithm by distributing the computation of gradients among $n$ agents.

The convergence of $x_i[t]$ follows from the standard convergence analysis for gradient descent algorithms, under some necessary regularity conditions.
+++++++++++++}

\subsection{Algorithm 4: Non-Interleaved Iterative Algorithm I}
Recall that in Algorithm 3, the entire sequence $x_j[t]$ is saved by agent $j$. In the following two subsections, we relax this memory requirement by proposing two non-interleaved algorithms, in particular, Algorithms 4 and 5. Unlike Algorithms 1, 2, 3 and 5, the performance guarantee of Algorithm 4 is independent of $|\calN|$. In general, the performance of Algorithm 4 is weaker than that of Algorithms 1, 2, 3 and 5.

\paragraph{}
\hrule
~
\vspace*{4pt}

{\bf Algorithm 4} for agent $j$:
~
\vspace*{4pt}\hrule

\begin{list}{}{}
\item[{\bf Initialization Step (i):}]
Choose $x_j[0]\in X_j=\arg\min_{x\in\mathbb{R}} h_j(x)$.\\
~
\item[{\bf Initialization Step (ii):}] Perform exact Byzantine consensus with $x_j[0]$ as the input of agent $j$. Set $x[0]$ to the output of the above consensus algorithm.\\

\item[{\bf Iteration $t\geq 1$:}]

\begin{itemize}
\item[{\bf Step 1:}] Compute $h_j^{\prime}\pth{x_j[t-1]}$, and perform Byzantine broadcast of $h_j^{\prime}\pth{x_j[t-1]}$ to all the agents, using any Byzantine broadcast algorithm, such as \cite{psl_BG_1982}.\\

    In step 1, agent $j$ should receive a gradient from each agent $i\in\calV$ -- let us denote the gradient received from agent $i$ in iteration $t$ as $g_i[t-1]$.\footnote{For each $i\in \calN$, $g_i[t-1]=h_i^{\prime}(x[t-1])$.} If no gradient is, in fact, received from agent $i$ in iteration $t$
via a Byzantine broadcast from $i$, then it must be the case that agent $i$ is faulty. In that case,
 agent $i$ is isolated (i.e., removed from the system).\footnote{Alternatively, the gradient value can be replaced by some default value (say 0).}
 This reduces the total number of agents $n$ by 1,
and the maximum number of faulty agent $f$ is also reduced by 1.
Algorithm 4 is {\em restarted} (from Step 1) using the new parameters
$n$ and $f$.\\

\item[{\bf Setp 2:}] Due to the restart mechanism above,
the algorithm progresses to Step 2 only when agent $j$ has received gradients from all other nodes.
Let $\calR[t-1]$ denote the multiset of the gradients $\{g_1[t-1],g_2[t-1],\cdots,g_n[t-1]\}$, obtained in Step 1 of $t$-th iteration. Drop the smallest $f$ values and the largest $f$.
    Set $g[t-1]$ to the average of the remaining $n-2f$ values.\\

\item[{\bf Setp 3:}] Let $\{\lambda[t]\}_{t=0}^{\infty}$ be a sequence of diminishing (non-increasing and $\lambda[t]\to 0$) stepsizes chosen beforehand such that $\lambda[t]>0$ for each $t$, $\sum_{t=1}^{\infty}\lambda[t]=\infty$ and $\sum_{t=1}^{\infty}\lambda^2[t]<\infty$.

Compute $x[t]$ as
\begin{align}
\label{update valid non-iter1}
x[t]= x[t-1]-\lambda[t-1] g[t-1].
\end{align}

\end{itemize}

\end{list}

~
\hrule

~

~

Since $x[0]$ is the consensus value of the input $x_j[0]$ for each $j\in \calN$, and $g[0]$ is the consensus value of the non-faulty gradient $h_j^{\prime}(x[0])$, for each $j\in \calN$, the update function (\ref{update valid non-iter1}) is well-defined for $t=1$. By an inductive argument, we can show that the update (\ref{update valid non-iter1}) is well-defined for all $t$.\\

Let $\calC$ be the collection of functions defined as follows:
\begin{align}
\nonumber
\calC\triangleq \{~~~p(x): p(x)&=\sum_{i\in \calN} \alpha_i h_i(x), ~~\forall i\in\calN, ~ \alpha_i\geq 0,\\
\nonumber
&\sum_{i\in \calN}\alpha_i=1,\text{~~and~~}\\
&\sum_{i\in\calN} {\bf 1}\left(\alpha_i\ge \frac{1}{2(n-f)}\right) ~ \geq ~ n-2f ~~~ \}
\label{valid collection}
\end{align}
Each $p(x)\in \calC$ is called a valid function. Note that the function $\frac{1}{|\calN|}\sum_{i\in \calN}h_i(x)\in \calC$. For ease of future reference, we let $\widetilde{p}(x)=\frac{1}{|\calN|}\sum_{i\in \calN}h_i(x)$. Define $Y\triangleq \cup_{p(x)\in \calC} \argmin~ p(x)$.

\begin{lemma}
\label{valid hull non-iter1}
$Y$ is a convex set.
\end{lemma}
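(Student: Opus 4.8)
The plan is to reduce the claim to a statement about where a weighted sum of derivatives can vanish, and then to exploit monotonicity together with a connectedness property of the feasible set of weight vectors. Write $\Delta$ for the set of weight vectors $\alpha=(\alpha_i)_{i\in\calN}$ that are feasible for $\calC$; that is, $\alpha_i\ge 0$, $\sum_{i\in\calN}\alpha_i=1$, and at least $n-2f$ of the $\alpha_i$ satisfy $\alpha_i\ge\frac{1}{2(n-f)}$. For $\alpha\in\Delta$ set $g_\alpha(y)\triangleq\sum_{i\in\calN}\alpha_i h_i^{\prime}(y)$. Since each $p(x)=\sum_{i\in\calN}\alpha_i h_i(x)\in\calC$ is convex and differentiable, $y\in\arg\min p$ holds iff $g_\alpha(y)=0$; hence $Y=\{\,y\in\reals: g_\alpha(y)=0 \text{ for some } \alpha\in\Delta\,\}$. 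Because the problem is scalar, proving $Y$ convex amounts to showing $Y$ is an interval, i.e.\ that $y_1,y_2\in Y$ with $y_1<y_2$ forces every $y\in(y_1,y_2)$ into $Y$.

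Two elementary properties drive the argument. First, for each fixed $\alpha\in\Delta$ the map $y\mapsto g_\alpha(y)$ is non-decreasing and continuous, because every $h_i^{\prime}$ is non-decreasing (convexity) and continuous (admissibility) and the weights are nonnegative. Second, for each fixed $y$ the map $\alpha\mapsto g_\alpha(y)$ is linear, hence continuous, so the image $I_y\triangleq\{\,g_\alpha(y):\alpha\in\Delta\,\}$ is the continuous image of $\Delta$. I would show that $\Delta$ is connected; then $I_y$ is an interval, so it contains $0$ as soon as it contains one nonnegative and one nonpositive value. Granting this, the core step is short: pick witnesses $\alpha^{(1)},\alpha^{(2)}\in\Delta$ with $g_{\alpha^{(1)}}(y_1)=0$ and $g_{\alpha^{(2)}}(y_2)=0$, and let $y\in(y_1,y_2)$. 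Monotonicity gives $g_{\alpha^{(1)}}(y)\ge g_{\alpha^{(1)}}(y_1)=0$ and $g_{\alpha^{(2)}}(y)\le g_{\alpha^{(2)}}(y_2)=0$, so $I_y$ meets both $[0,\infty)$ and $(-\infty,0]$; as $I_y$ is an interval, $0\in I_y$, meaning $g_\alpha(y)=0$ for some $\alpha\in\Delta$, i.e.\ $y\in Y$.

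The crux, and the step I expect to be the real obstacle, is the connectedness of $\Delta$, because $\Delta$ is genuinely nonconvex: the constraint ``at least $n-2f$ weights exceed the threshold'' makes $\Delta$ a finite union $\Delta=\bigcup_{S}\Delta_S$ over index sets $S\subseteq\calN$ with $|S|=n-2f$, where $\Delta_S$ is the (convex) slice of the simplex on which $\alpha_i\ge\frac{1}{2(n-f)}$ for all $i\in S$. A convex combination of two feasible weight vectors may leave $\Delta$, so I cannot simply interpolate the two witnesses and stay feasible. Instead I would link the pieces through a common point: the uniform vector $\alpha^*_i=\frac{1}{|\calN|}$ lies in every $\Delta_S$, since $\frac{1}{|\calN|}\ge\frac{1}{2(n-f)}$ is equivalent to $|\calN|\le 2(n-f)$, which holds because $|\calN|=n-\phi\le n<2(n-f)$ under $n>3f$. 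Each $\Delta_S$ is convex, hence connected, and all of them contain $\alpha^*$, so their union $\Delta$ is connected.

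This is precisely where the numerical relationship between the threshold $\frac{1}{2(n-f)}$, the count $n-2f$, and the fault bound $n>3f$ enters, and it is the only nontrivial ingredient beyond the monotonicity bookkeeping; the remaining checks (that $\arg\min p=\{g_\alpha=0\}$, that such $S\subseteq\calN$ exist since $\phi\le f\le 2f$, and that $g_\alpha$ is monotone and continuous) are routine. I would therefore organize the write-up as: (i) the reduction $Y=\{y:\exists\,\alpha\in\Delta,\ g_\alpha(y)=0\}$; (ii) connectedness of $\Delta$ via the common point $\alpha^*$; (iii) the monotonicity/intermediate-value conclusion that $(y_1,y_2)\subseteq Y$.
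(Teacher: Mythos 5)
Your proof is correct, and it takes a genuinely different route from the paper's. The paper argues pointwise: for an intermediate point $x_\alpha$ it splits into cases on the sign of $\widetilde{p}^{\prime}(x_\alpha)$ (where $\widetilde{p}$ is the uniform average), explicitly blends $p_1$ or $p_2$ with $\widetilde{p}$ via a scalar $\zeta$ so that the blended derivative vanishes at $x_\alpha$, and then verifies by hand that the blended weights $\alpha_i\zeta+(1-\zeta)\frac{1}{|\calN|}$ still clear the threshold $\frac{1}{2(n-f)}$ on at least $n-2f$ indices. You instead lift the whole argument to weight space: you observe that $Y=\{y:\exists\,\alpha\in\Delta,\ g_\alpha(y)=0\}$, prove $\Delta$ is connected because each convex piece $\Delta_S$ contains the uniform vector $\alpha^*$, and then conclude by monotonicity of each $g_\alpha$ plus the intermediate value property of the connected image $I_y$. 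The essential numerical ingredient is identical in both proofs --- the inequality $\frac{1}{|\calN|}\ge\frac{1}{2(n-f)}$, i.e.\ $n>2f$, which in the paper guarantees the blended weights stay above threshold and in your argument guarantees $\alpha^*$ lies in every $\Delta_S$ --- so the two proofs are cousins, with $\widetilde{p}$ playing the role of your common point $\alpha^*$. What your version buys is the elimination of the sign case analysis and a cleaner explanation of \emph{why} the threshold constraint does not break convexity (the feasible weight set, though nonconvex, is connected through the uniform point); what the paper's version buys is that it never needs to argue about topology of $\Delta$ and produces the witnessing valid function for $x_\alpha$ completely explicitly. One small point to attend to in a full write-up: when you assert $Y=\{y:\exists\,\alpha,\ g_\alpha(y)=0\}$ you are implicitly using that a convex differentiable function on $\reals$ is globally minimized exactly where its derivative vanishes, which is fine, but you should also note (as the paper does via its Proposition on convex hulls) that each $p\in\calC$ does attain its minimum, so the two descriptions of $Y$ agree; this is routine given that each $X_i$ is nonempty and compact.
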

\begin{proof}
Let $x_1, x_2\in Y$ such that $x_1\not=x_2$. By definition of $Y$, there exist valid functions $p_1(x)=\sum_{i\in \calN}\alpha_i h_i(x)\in \calC$ and $p_2(x)=\sum_{i\in \calN}\beta_i h_i(x)\in \calC$ such that $x_1\in \argmin~ p_1(x)$ and $x_2\in \argmin~ p_2(x)$, respectively. Note that it is possible that $p_1(\cdot)=p_2(\cdot)$, and that $p_i(\cdot)=\widetilde{p}(\cdot)$ for $i=1$ or $i=2$.\\

Given $0\le \alpha\le 1$, let $x_{\alpha}=\alpha x_1+(1-\alpha) x_2$. We consider two cases:
\begin{itemize}
\item[(i)] ~$x_{\alpha}\in \argmin ~p_1(x)\cup \argmin ~p_2(x) \cup \argmin ~ \widetilde{p}(x)$, and
\item[(ii)] $x_{\alpha}\notin \argmin ~p_1(x)\cup \argmin ~p_2(x) \cup \argmin ~ \widetilde{p}(x)$
\end{itemize}

\paragraph{{\bf Case (i)}:~$x_{\alpha}\in \argmin ~p_1(x)\cup \argmin ~p_2(x) \cup \argmin ~ \widetilde{p}(x)$}

When $x_{\alpha}\in \argmin ~p_1(x)\cup \argmin ~p_2(x) \cup \argmin ~ \widetilde{p}(x)$, by definition of $Y$, we have
$$ x_{\alpha}\in \argmin ~p_1(x)\cup \argmin ~p_2(x) \cup \argmin ~ \widetilde{p}(x)\subseteq Y.$$
Thus, $x_{\alpha}\in Y$.

\paragraph{{\bf Case (ii)}:~$x_{\alpha}\notin \argmin ~p_1(x)\cup \argmin ~p_2(x) \cup \argmin ~ \widetilde{p}(x)$}

By symmetry, WLOG, assume that $x_1<x_2$. By definition of $x_{\alpha}$, it holds that $x_1<x_{\alpha}<x_2$. By assumption of case (ii), it must be that $x_{\alpha}> \max \pth{\argmin p_1(x)}$ and $x_{\alpha}< \min \pth{\argmin p_1(x)}$, which imply that $p_1^{\prime}(x_{\alpha})>0$ and $p_2^{\prime}(x_{\alpha})<0$.

There are two possibilities for $\widetilde{p}^{\prime}(x_{\alpha})$
(the gradient of $\widetilde{p}(x_{\alpha})$):
$\widetilde{p}^{\prime}(x_{\alpha})<0$ or
$\widetilde{p}^{\prime}(x_{\alpha})>0$.
Note that $\widetilde{p}^{\prime}(x_{\alpha})\neq 0$
because $x_\alpha\not\in\argmin ~ \widetilde{p}(x)$.

~

When $\widetilde{p}^{\prime}(x_{\alpha})<0$, there exists $0\le \zeta\le 1$ such that
$$
 \zeta ~p_1^{\prime}(x_{\alpha}) + (1-\zeta)~\widetilde{p}^{\prime}(x_{\alpha})=0.$$
By definition of $p_1(x)$ and $\widetilde{p}(x)$, we have
 \begin{align*}
 0=\zeta ~p_1^{\prime}(x_{\alpha}) + (1-\zeta)~\widetilde{p}^{\prime}(x_{\alpha})&=\zeta ~\pth{\sum_{i\in \calN}\alpha_ih_i^{\prime}(x_{\alpha})} + (1-\zeta)\pth{\frac{1}{|\calN|}\sum_{i\in \calN}h_i^{\prime}(x_{\alpha})}\\
 &=\sum_{i\in \calN}\pth{ \alpha_i\zeta +(1-\zeta)\frac{1}{|\calN|}}h_i^{\prime}(x_{\alpha}).
\end{align*}
Thus, $x_{\alpha}$ is an optimum of function
\begin{align}
\label{valid obj non-iter1}
\sum_{i\in \calN}\pth{ \alpha_i\zeta +(1-\zeta)\frac{1}{|\calN|}}h_i(x).
\end{align}
Let $\calI$ be the collection of indices defined by
$$\calI\triangleq \{~i: ~ i\in \calN, ~\text{and}~ \alpha_i\zeta +(1-\zeta)\frac{1}{|\calN|}~~\ge~~ \frac{1}{2(n-f)}~\}.$$
Next we show that $|\calI|\ge n-2f$. Let $\calI_1$ be the collection of indices defined by
$$\calI_1\triangleq \{~i: ~ i\in \calN, ~\text{and}~ \alpha_i\ge \frac{1}{2(n-f)}~\}.$$
Since $p_1(x)\in \calC$, then $|\calI_1|\ge n-2f$. In addition, since $n>3f$, $|\calN|< 2(n-f)$.\footnote{In fact, $n>2f$ suffices for this.} Then, for each $j\in \calI_1$, we have
$$\alpha_i\zeta +(1-\zeta)\frac{1}{|\calN|}\ge \zeta \frac{1}{2(n-f)}+(1-\zeta)\frac{1}{|\calN|}> \zeta \frac{1}{2(n-f)}+(1-\zeta)\frac{1}{2(n-f)}= \frac{1}{2(n-f)},$$
i.e., $j\in \calI$. Thus, $\calI_1\subseteq \calI$.

Since $|\calI_1|\ge n-2f$, we have $|\calI|\ge n-2f$. So function (\ref{valid obj non-iter1}) is a valid function. Thus, $x_{\alpha}\in Y$. \\

Similarly, we can show that the above result holds when $\widetilde{p}^{\prime}(x_{\alpha})>0$.

Therefore, set $Y$ is convex.

\eproof
\end{proof}

\begin{lemma}
\label{gradient valid non-iter1}
For each iteration $t$, there exists a valid function $p(x)\in \calC$ such that $g[t-1]=p^{\prime}(x)$.
\end{lemma}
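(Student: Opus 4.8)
The plan is to exhibit the weights $\alpha_i$ explicitly by unwinding the trimmed-mean operation in Step 2 of Algorithm 4. Since all non-faulty agents maintain a common iterate $x[t-1]$ (as already argued for Algorithm 4 from the validity of Byzantine consensus and the common update rule), every non-faulty entry of $\calR[t-1]$ equals $h_i'(x[t-1])$ for the corresponding $i\in\calN$, all evaluated at the \emph{same} point. Thus it suffices to write $g[t-1]=\sum_{i\in\calN}\alpha_i h_i'(x[t-1])$ with $\alpha_i\ge 0$, $\sum_{i\in\calN}\alpha_i=1$, and at least $n-2f$ of the $\alpha_i$ at least $\frac{1}{2(n-f)}$; the valid function is then $p(x)=\sum_{i\in\calN}\alpha_i h_i(x)$, and $g[t-1]=p'(x[t-1])$.

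First I would set up the sorted picture. Sort the multiset $\{g_1[t-1],\dots,g_n[t-1]\}$ and let $S$ be the set of $n-2f$ surviving indices after deleting the $f$ largest and $f$ smallest values. Let $a$ (resp.\ $b$) be the number of faulty indices among the $f$ deleted from the top (resp.\ bottom), so the number of faulty survivors is $p=\phi-a-b$ and the number of non-faulty survivors is $q=|S\cap\calN|=n-2f-p$. Each surviving index carries weight $\frac{1}{n-2f}$ in $g[t-1]$. For $i\in S\cap\calN$ I assign this weight directly to agent $i$; since $2(n-f)\ge n-2f$ gives $\frac{1}{n-2f}\ge\frac{1}{2(n-f)}$, these $q$ agents already clear the threshold.

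The crux is handling the $p$ \emph{faulty} survivors, and this is the step I expect to be the main obstacle, because $q=n-2f-p$ can be as small as $n-3f$, which is strictly below $n-2f$. The idea is a charging argument. Since at most $\phi\le f$ agents are faulty, the top deletions contain $f-a\ge p$ non-faulty values and the bottom deletions contain $f-b\ge p$ non-faulty values (both inequalities reduce to $\phi\le f$). Every deleted-top value dominates every survivor and every deleted-bottom value is dominated by every survivor, so I can \emph{injectively} match each faulty survivor $k$ to a distinct non-faulty deleted-top agent $U_k$ and a distinct non-faulty deleted-bottom agent $L_k$, with $g_{L_k}[t-1]\le g_k[t-1]\le g_{U_k}[t-1]$. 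Writing $g_k[t-1]=\theta_k\,g_{L_k}[t-1]+(1-\theta_k)\,g_{U_k}[t-1]$ for some $\theta_k\in[0,1]$ then distributes the slot weight $\frac{1}{n-2f}$ onto agents $L_k$ and $U_k$ as $\frac{\theta_k}{n-2f}$ and $\frac{1-\theta_k}{n-2f}$.

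Finally I would collect the weights and count. The $\alpha_i\ge 0$ sum to $1$ since the slot weights total $\frac{n-2f}{n-2f}=1$. For each faulty survivor $k$, at least one of $\theta_k,1-\theta_k$ is $\ge\frac12$, so at least one of $L_k,U_k$ receives weight $\ge\frac{1}{2(n-2f)}\ge\frac{1}{2(n-f)}$; as the matched deleted agents are distinct across different $k$ and disjoint from $S\cap\calN$, this produces $p$ further non-faulty agents above the threshold. Combined with the $q$ surviving non-faulty agents, at least $q+p=n-2f$ agents satisfy $\alpha_i\ge\frac{1}{2(n-f)}$, so $p(x)=\sum_{i\in\calN}\alpha_i h_i(x)\in\calC$ and $g[t-1]=p'(x[t-1])$. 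The only points needing care are the counting inequalities $f-a\ge p$ and $f-b\ge p$ that guarantee the injective matching exists, and the degenerate cases $\theta_k\in\{0,1\}$, both of which are routine.
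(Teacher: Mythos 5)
Your proof is correct and follows essentially the same route as the paper's: both express the faulty survivors' gradients as convex combinations of non-faulty gradients that were trimmed from the top and the bottom, reassign the slot weight $\frac{1}{n-2f}$ accordingly, and use the fact that one of the two convex coefficients is at least $\frac{1}{2}$ to exhibit $n-2f$ non-faulty agents with weight at least $\frac{1}{2(n-f)}$. The only (immaterial) difference is that the paper sandwiches the \emph{aggregate} sum $\sum_{k\in\calR^*[t-1]\cap\calF}g_k[t-1]$ between sums over equal-sized non-faulty subsets of the trimmed sets and uses a single coefficient $\zeta$, whereas you match each faulty survivor individually to a distinct trimmed pair with its own coefficient $\theta_k$.
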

\begin{proof}
Let $\calR^*[t-1]$ denote the set of nodes from whom the remaining $n-2f$ values were received in iteration $t$, and let us denote by $\calL[t-1]$ and $\calS[t-1]$ the set of nodes from whom the largest $f$ values and the smallest $f$ values were received in iteration $t$. Due to the fact that each value is transmitted using Byzantine broadcast, $\calR^*[t-1]$, $\calL[t-1]$ and $\calS[t-1]$ do not depend on $j$, for $j\in \calN$. Thus, $\calR^*[t-1]$, $\calL[t-1]$ and $\calS[t-1]$ are well-defined.\\

By definition of $g[t-1]$, we have
\begin{align}
\label{rewritten gradient non-iter1}
\nonumber
g[t-1]&=\frac{1}{n-2f} \sum_{j\in \calR^*[t-1]} g_j[t-1]\\
\nonumber
&=\frac{1}{n-2f} \pth{\sum_{j\in \calR^*[t-1]-\calF} g_j[t-1]+\sum_{j\in \calR^*[t-1]\cap\calF} g_j[t-1]}\\
&=\frac{1}{n-2f} \pth{\sum_{j\in \calR^*[t-1]-\calF} h_j^{\prime}(x[t-1])+\sum_{j\in \calR^*[t-1]\cap\calF} g_j[t-1]}~~\text{since}~ \forall i\in \calN, g_i[t-1]=h_i^{\prime}(x[t-1])
\end{align}

We consider two cases: (i) $\calR^*[t-1]\cap\calF=\O$ and (ii) $\calR^*[t-1]\cap\calF\not=\O$.

\paragraph{{\bf Case (i)}: $\calR^*[t-1]\cap\calF=\O$.}

 When $\calR^*[t-1]\cap\calF=\O$, it holds that $\calR^*[t-1]\subseteq \calN$. Define $p(x)=\frac{1}{n-2f}\sum_{j\in \calR^*[t-1]} h_j(x)$. We have
\begin{align}
\nonumber
p^{\prime}(x[t-1])=\frac{1}{n-2f} \sum_{j\in \calR^*[t-1]} h_j^{\prime}(x[t-1])=g[t-1].
\end{align}
In addition, in function $p(x)$, $n-2f$ component functions, corresponding to functions in $\calR^*[t-1]$ have weights $\frac{1}{n-2f}\ge \frac{1}{2(n-f)}$. Thus $p(x)\in \calC$. This proves Lemma \ref{gradient valid non-iter1} for $\calR^*[t-1]\cap\calF=\O$.\\


\paragraph{{\bf Case (ii)}: $\calR^*[t-1]\cap\calF\not=\O$.}

Let $|\calR^*[t-1]\cap\calF|=\theta$, let $\calL^*[t-1]\subseteq \calL[t-1]\cap
\calN$ and $\calS^*[t-1]\subseteq \calS[t-1]\cap
\calN$ such that $|\calL^*[t-1]|=|\calS^*[t-1]|=\theta$.
Since $|\calF|\le f$ and $|\calR^*[t-1]\cap\calF|=\theta$, it follows that $|\calL[t-1]\cap \calF|\le f-\theta$ and $|\calS[t-1]\cap \calF|\le f-\theta$.
We have
$$|\calL[t-1]\cap
\calN|=|\calL[t-1]|-|\calL[t-1]\cap \calF|\ge f-(f-\theta)=\theta,$$
and
$$|\calS[t-1]\cap
\calN|=|\calS[t-1]|-|\calS[t-1]\cap \calF|\ge f-(f-\theta)=\theta.$$
Thus $\calL^*[t-1]$ and $\calS^*[t-1]$ are well-defined.

By definition of $\calL^*[t-1]$, $\calS^*[t-1]$ and $\calR^*[t-1]\cap\calF$, it follows that for each $i\in \calL^*[t-1]$, $j\in \calS^*[t-1]$ and $k\in \calR^*[t-1]\cap\calF$
$$h_i^{\prime}(x[t-1])\ge g_k[t-1]\ge h_j^{\prime}(x[t-1]). $$
Since $|\calL^*[t-1]|=|\calS^*[t-1]|=|\calR^*[t-1]\cap \calF|=\theta$, we have
$$\sum_{i\in \calL^*[t-1]} h_i^{\prime}(x[t-1])\ge \sum_{k\in \calR^*[t-1]\cap\calF} g_k[t-1]\ge \sum_{j\in \calS^*[t-1]} h_j^{\prime}(x[t-1]).$$
Thus, there exists $0\le \zeta \le 1$ such that
$$\sum_{k\in \calR^*[t-1]\cap\calF} g_k[t-1]=\zeta \pth{\sum_{i\in \calL^*[t-1]} h_i^{\prime}(x[t-1])}+(1-\zeta)\pth{\sum_{j\in \calS^*[t-1]} h_j^{\prime}(x[t-1])}.$$
So (\ref{rewritten gradient non-iter1}) can be rewritten as
\begin{align*}
g[t-1]=\frac{1}{n-2f} \pth{\sum_{j\in \calR^*[t-1]-\calF} h_j^{\prime}(x[t-1])+\zeta \sum_{i\in \calL^*[t-1]} h_i^{\prime}(x[t-1])+(1-\zeta)\sum_{j\in \calS^*[t-1]} h_j^{\prime}(x[t-1])}.
\end{align*}
Define $q(x)$ as
$$q(x)=\frac{1}{n-2f} \pth{\sum_{j\in \calR^*[t-1]-\calF} h_j(x[t-1])+\zeta \sum_{i\in \calL^*[t-1]} h_i(x[t-1])+(1-\zeta)\sum_{j\in \calS^*[t-1]} h_j(x[t-1])}.$$
By definition of $q(x)$, it holds that $g[t-1]=q^{\prime}(x[t-1])$. Next we show $q(x)\in \calC$.

Since $|\calL^*[t-1]|=|\calS^*[t-1]|=|\calR^*[t-1]\cap \calF|=\theta$, it holds that
\begin{align*}
&\frac{1}{n-2f} \pth{\sum_{j\in \calR^*[t-1]-\calF} 1+\zeta \sum_{i\in \calL^*[t-1]} 1+(1-\zeta)\sum_{j\in \calS^*[t-1]} 1}\\
&\quad=\frac{1}{n-2f} \pth{\sum_{j\in \calR^*[t-1]-\calF} 1+\zeta \sum_{i\in \calR^*[t-1]\cap \calF} 1+(1-\zeta)\sum_{j\in \calR^*[t-1]\cap \calF} 1}\\
&\quad=\frac{1}{n-2f} \pth{\sum_{j\in \calR^*[t-1]-\calF} 1+\sum_{i\in \calR^*[t-1]\cap \calF} 1}=\frac{1}{n-2f} \sum_{j\in \calR^*[t-1]} 1=1.
\end{align*}
Thus function $q(x)$ is a convex combination of $h_i(x)$ for $i\in \calN$.

When $\zeta\ge \frac{1}{2}$, it holds that all component functions in $\pth{\calR^*[t-1]-\calF}\cup \calL^*[t-1]$ have weights bounded below by $\frac{1}{2(n-2f)}\ge \frac{1}{2(n-f)}$.
In addition,
\begin{align*}
|\pth{\calR^*[t-1]-\calF}\cup \calL^*[t-1]|&=|\calR^*[t-1]-\calF|+|\calL^*[t-1]|\\
&=|\calR^*[t-1]|-|\calR^*[t-1]\cap \calF|+|\calL^*[t-1]|\\
&=n-2f-\theta+\theta=n-2f.
\end{align*}
Thus, $q(x)\in \calC$. This completes the proof.
\eproof
\end{proof}

\begin{definition}
Given a point $x$ and a nonempty set $C$, the distance of point $x$ to set $C$, denoted by $Dist(x, C)$ is defined by
$$ Dist(x, C)\triangleq \inf_{y\in C} \norm{x-y}_2 . $$
\end{definition}

\begin{theorem}
\label{converge of alg4}
When $n>3f$, Algorithm 4 solves Problem 3 with $\beta=\frac{1}{2(n-f)}$ and $\gamma=n-2f$.
\end{theorem}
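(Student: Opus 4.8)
The plan is to reduce the theorem to a single statement about the iterates, namely that $Dist\pth{x[t],Y}\to 0$, and then read off the weight guarantee directly from the definition of $\calC$. First I would record two structural facts. By the same induction as in Proposition~\ref{trapped} (Byzantine broadcast makes every non-faulty agent see an identical multiset of gradients, the trimming-and-averaging rule in Step~2 is deterministic, and $x[0]$ is a consensus value), all non-faulty agents maintain a common estimate, so I may drop the subscript and write $x[t]$. Also, after discarding the $f$ largest and $f$ smallest entries the surviving gradients lie between the $(f{+}1)$-th smallest and $(f{+}1)$-th largest values of $\calR[t-1]$, both of which are sandwiched by non-faulty gradients; hence $\abs{g[t-1]}\le L$. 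By Lemma~\ref{gradient valid non-iter1}, $g[t-1]=p_t^{\prime}\pth{x[t-1]}$ for some valid $p_t\in\calC$, and by Lemma~\ref{valid hull non-iter1} the set $Y=\cup_{p\in\calC}\argmin p$ is convex; since each $\argmin p\subseteq Cov\pth{\cup_{i\in\calN}X_i}$ and $\argmin\widetilde{p}\neq\emptyset$, the set $Y$ is a nonempty compact interval, say $Y=[y_{\min},y_{\max}]$.

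The heart of the argument is to show $d[t]\triangleq Dist\pth{x[t],Y}\to 0$ by a subgradient-to-a-set analysis whose key ingredient is the sign property: if $x[t-1]>y_{\max}$ then $x[t-1]$ lies strictly to the right of $\argmin p_t\subseteq Y$, so convexity of $p_t$ forces $g[t-1]=p_t^{\prime}\pth{x[t-1]}>0$; symmetrically $g[t-1]<0$ when $x[t-1]<y_{\min}$. I would track $d[t]$ along maximal ``excursions'' on one side of $Y$. Because $\lambda[t]\to 0$, for all large $t$ no single step, of length $\lambda[t-1]\abs{g[t-1]}\le\lambda[t-1]L$, can traverse the whole interval $Y$, so an excursion to (say) the right can only begin from inside $Y$, giving an entry value $d\le\lambda L$. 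During the excursion the sign property gives $d[s]=d[s-1]-\lambda[s-1]g[s-1]$ with $g[s-1]>0$, so $d$ is non-increasing; hence throughout a finite excursion $d[s]\le\lambda_{\text{entry}}L\to 0$, and whenever $x[s]\in Y$ we have $d[s]=0$.

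The step I expect to be the main obstacle is the case of an \emph{infinite} excursion, because here the objective $p_t$ driving the update changes every iteration and one cannot simply invoke the convergence theorems cited for Algorithm~3 (there is no single virtual function). To handle it I would exploit compactness of the valid-weight set: the weight vectors defining $\calC$ form a compact subset of the simplex, and $\alpha\mapsto\sum_{i\in\calN}\alpha_i h_i^{\prime}(x)$ is continuous, so for any $\delta>0$ the value $\min_{p\in\calC}p^{\prime}\pth{y_{\max}+\delta}$ is attained and is strictly positive, since every valid $p$ has its minimizer in $Y$ and hence a positive derivative to the right of $y_{\max}$. Thus, were an excursion to keep $d[s]\ge\delta$ forever, we would have $g[s-1]\ge c(\delta)>0$ and $\sum_s\lambda[s-1]g[s-1]\ge c(\delta)\sum_s\lambda[s-1]=\infty$, contradicting that the total decrease $\sum_s\lambda[s-1]g[s-1]=d_{\text{entry}}-\lim_s d[s]$ is finite; hence $d[s]\to 0$ on infinite excursions as well. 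Combining the cases yields $Dist\pth{x[t],Y}\to 0$. Finally, since the common estimate converges to $Y$ and every point of $Y$ is by construction a minimizer of some valid $p=\sum_{i\in\calN}\alpha_i h_i\in\calC$ having at least $n-2f$ weights $\alpha_i\ge\frac{1}{2(n-f)}$, the limiting behaviour of Algorithm~4 meets the constraints of Problem~3 with $\beta=\frac{1}{2(n-f)}$ and $\gamma=n-2f$; consistent with the earlier remark that the estimates need not converge to a single point, this is precisely convergence to the set of admissible outputs.
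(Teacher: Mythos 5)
Your proposal is correct and follows essentially the same route as the paper's proof: both reduce the theorem to showing $Dist(x[t],Y)\to 0$ via Lemmas \ref{valid hull non-iter1} and \ref{gradient valid non-iter1}, both bound the distance at "re-entry/crossing" iterations by $\lambda L\to 0$ and use monotonicity of the distance in between (your excursions are the paper's resilient points), and both kill the eventually-one-sided case by a contradiction with $\sum_t\lambda[t]=\infty$. The only local difference is that where you obtain the uniform positive gradient bound $c(\delta)$ by compactness of the set of valid weight vectors, the paper exhibits an explicit extremal valid function $q$ attaining $\sup_{p\in\calC}p^{\prime}(x^*)$; both steps are sound.
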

\begin{proof}

Let $\{x[t]\}_{t=1}^{\infty}$ be the sequence of estimates generated by Algorithm 4. To show this theorem, it is enough to show the limit of $Dist(x[t], Y)$ exists and
$$\lim_{t\diverge} Dist(x[t], Y)=0.$$

We say that an element $x[t]$ is a resilient point if conditions
in one of the following items hold true:
\begin{list}{}{}
\item
\hspace*{1in}*
$~ x[t-1]\in Y~\text{and}~~x[t]\notin Y,$\\
\hspace*{1in}*
$~ x[t-1]\ge \sup Y~\text{and}~~x[t]\le \inf Y,$\\
\hspace*{1in}*
$~ x[t-1]\le \inf Y~\text{and}~~x[t]\ge \sup Y.$
\end{list}

We consider two cases: (i) there are infinity many resilient points in $\{x[t]\}_{t=1}^{\infty}$, and (ii) there are finitely many resilient points in $\{x[t]\}_{t=1}^{\infty}$.

\paragraph{{\bf Case (i)}: There are infinity many resilient points in $\{x[t]\}_{t=1}^{\infty}$.}

Let $\{x[t_i]\}_{i=1}^{\infty}$ be the maximal subsequence of $\{x[t]\}_{t=1}^{\infty}$ such that each $x[t_i]$ is a resilient point.

Recall that in the update function (\ref{update valid non-iter1}),
$$x[t]=x[t-1]-\lambda[t-1]g[t-1]. $$
For each resilient point $x[t_i]$, it holds that
$$Dist(x[t_i], Y)\le |\lambda[t_i-1]g[t_i-1]|\le \lambda[t_i-1]L. $$
Thus,  $$\limsup_{i\diverge} ~ Dist(x[t_i], Y)\le \limsup_{i\diverge} ~ \lambda[t_i-1]L\overset{(a)}{=} \pth{\limsup_{i\diverge} ~ \lambda[t_i-1]}L =0. $$
Equality $(a)$ follows from the fact that the stepsize $\lambda[t]$ is diminishing, i.e., $\lim_{t\diverge} \lambda[t]=0$.

By definition,  for each $j$ such that $t_i< j< t_{i+1}$, the element $x[j]$ is not a resilient point. Thus, $x[j]\ge \sup~ Y$ for each $t_i< j< t_{i+1}$, or $x[j]\le \inf~ Y$ for each $t_i< j< t_{i+1}$ or $x[j]\in Y$ for each $t_i< j< t_{i+1}$. By the update function (\ref{update valid non-iter1}), we have
\begin{align*}
Dist(x[j], Y)&= ~\max \left\{0, ~Dist(x[j-1], Y)-\lambda[j-1]\, |g[j-1]|\right\}\\
&= ~\max\{0, ~Dist(x[j-1], Y)\}~~~\text{since}~~|g[j-1]|\ge 0\\
&\le ~Dist(x[j-1], Y)~~~\text{since}~~Dist(x[j-1], Y)\ge 0
\end{align*}
for each $t_i< j< t_{i+1}$ and each $i$, and all the above three  possible scenarios.
Consequently, we have
$$Dist(x[j], Y)\le Dist(x[t_i], Y).$$

Thus,
$$\limsup_{j\diverge} ~ Dist(x[j], Y)\le \limsup_{i\diverge} ~ Dist(x[t_i], Y)=0. $$
Since $Dist(x, Y)\ge 0$ for all $x$, we have $\liminf_{j\diverge} ~ Dist(x[j], Y)\ge 0$. Then,
$$\limsup_{j\diverge} ~ Dist(x[j], Y)\le 0\le \liminf_{j\diverge} ~ Dist(x[j], Y).$$
On the other hand, by definition of $\liminf$ and $\limsup$, we have
$$\liminf_{j\diverge} ~ Dist(x[j], Y)\le \limsup_{j\diverge} ~ Dist(x[j], Y).$$
Thus,
$$\liminf_{j\diverge} ~ Dist(x[j], Y)=\limsup_{j\diverge} ~ Dist(x[j], Y)=0.$$
Therefore, the limit of $Dist(x[j], Y)$ exists and
$$\lim_{j\diverge} Dist(x[j], Y)=0.$$\\

\paragraph{{\bf Case (ii)}: There are finitey many resilient points in $\{x[t]\}_{t=1}^{\infty}$. }
Let $t_0$ be the largest index such that $x[t_0]$ is a resilient point. If there exists $t^{\prime}\ge t_0$ such that $x[t^{\prime}]\in Y$, then $x[t]\in Y$ for all $t\ge t^{\prime}$, i.e., $Dist(x[t], Y)=0$, for all $t\ge t^{\prime}$. Thus, the limit of $Dist(x[t], Y)$ exists and
$$\lim_{t\diverge} Dist(x[t], Y)=0.$$

Now we consider the scenario where $x[t]\notin Y$ for all $t\ge t_0$. Since $t_0$ is the largest index such that $x[t_0]$ is a resilient point, then either $x[t]\le \inf Y$ for all $t\ge t_0$ or $x[t]\ge \sup Y$ for all $t\ge t_0$.

By symmetry, it is enough to consider the case when $x[t]\le \inf Y$ for all $t\ge t_0$. By update function (\ref{update valid non-iter1}), and the definition of $t_0$, it can be seen that the subsequence $\{x[t]\}_{t=t_0}^{\infty}$ is an increasing sequence. In addition, we know $x[t]\le \inf Y\in \reals$ for all $t\ge t_0$. By Monotone Convergence Theorem, the limit of $\{x[t]\}_{t=t_0}^{\infty}$ exists and $\lim_{t\diverge}x[t]=x^*\le \inf Y$.\\

If $\lim_{t\diverge}x[t]=x^*= \inf Y$, then $\lim_{t\diverge}Dist(x[t], Y)=0$.

~

Now we consider the case when $\lim_{t\diverge}x[t]=x^*<\inf Y$.  Since $x^*<\inf Y$, there exists $\epsilon >0$ such that $x^*=\inf Y-\epsilon$. Let $\rho=\sup_{p(\cdot)\in \calC} p^{\prime}(x^*)$. We show next that $\rho<0$
{\bf provided that}
$x^*=\inf Y-\epsilon$.

~

For each $p(\cdot)\in \calC$, $p^{\prime}(x^*)<0$. Then, $\rho=\sup_{p(\cdot)\in \calC} p^{\prime}(x^*)\le 0$.

Let $h_{i_1}^{\prime}(x^*), \cdots, h_{i_{|\calN|}}^{\prime}(x^*)$ be a non-increasing order of $h_j^{\prime}(x^*)$, for $j\in \calN$. Define $q(x)$ as follows,
$$q(x)=\pth{1-\frac{n-2f-1}{2(n-f)}} h_{i_1}(x)+\frac{1}{2(n-f)}\sum_{j=2}^{n-2f} h_{ij}(x).$$

It can be easily seen that $q(\cdot)\in \calC$ is a valid function and $$\sup_{p(\cdot)\in \calC} p^{\prime}(x^*)=q^{\prime}(x^*).$$

Thus, if $x^*=\inf~Y-\epsilon$ for some $\epsilon>0$, then $\rho=q^{\prime}(x^*)<0$.

~

\noindent
From the update function (\ref{update valid non-iter1}), we have
\begin{align*}
x[t+m+1]&=x[t+m]-\lambda[t+m]g[t+m]\\
&=x[t_0]-\sum_{j=t_0}^{t+m}\lambda[j]g[j]\\
&\overset{(a)}{\ge} x[t_0]-\sum_{j=t_0}^{t+m}\lambda[j]\rho.
\end{align*}
Inequality $(a)$ is true because (1) the gradient of a convex function is non-decreasing, (2) $x[t]\le x^*$ for each $t\ge t_0$, and (3) $\rho=\sup_{p(\cdot)\in \calC} p^{\prime}(x^*)$.

 Let $m\diverge$, we have

\begin{align*}
\lim_{m\diverge}  x[t+m+1]&~\ge~  x[t_0]-\lim_{m\diverge}\sum_{j=t_0}^{t+m}\lambda[j]\rho\\
&~=~x[t_0]-\pth{\lim_{m\diverge}\sum_{j=t_0}^{t+m}\lambda[j]}\rho\\
&~=~x[t_0]+\infty=\infty.
\end{align*}
On the other hand, we know $\lim_{m\diverge}  x[t+m+1]\le x^*\in \reals$. A contradiction is proved.
Thus, $x^*=\inf Y$. Consequently, $\lim_{t\diverge}Dist(x[t], Y)=0$. \\

This completes the proof.

\eproof
\end{proof}
\subsection{Algorithm 5: Non-Interleaved Iterative Algorithm II}
As commented before, the performance guarantee of Algorithm 4, in contrast to Algorithms 1, 2, and 3, is independently of $|\calN|$.  Algorithm 5, described below, admits the analysis between the tradeoff of $\beta$ and $\gamma$ in terms of $|\calN|$.

\paragraph{}
\hrule
~
\vspace*{4pt}

{\bf Algorithm 5} for agent $j$:
~
\vspace*{4pt}\hrule

\begin{list}{}{}
\item[{\bf Initialization Step (i):}]
Choose $x_j[0]\in X_j=\arg\min_{x\in\mathbb{R}} h_j(x)$.\\
~
\item[{\bf Initialization Step (ii):}] Perform exact Byzantine consensus with $x_j[0]$ as the input of agent $j$. Set $x[0]$ to the output of the above consensus algorithm.\\

\item[{\bf Iteration $t\geq 1$:}]

\begin{itemize}
\item[{\bf Step 1:}] Compute $h_j^{\prime}\pth{x_j[t-1]}$, and perform Byzantine broadcast of $h_j^{\prime}\pth{x_j[t-1]}$ to all the agents, using any Byzantine broadcast algorithm, such as \cite{psl_BG_1982}.\\

    In step 1, agent $j$ should receive a gradient from each agent $i\in\calV$ -- let us denote the gradient received from agent $i$ in iteration $t$ as $g_i[t-1]$.\footnote{For each $i\in \calN$, $g_i[t-1]=h_i^{\prime}(x[t-1])$.} If no gradient is, in fact, received from agent $i$ in iteration $t$
via a Byzantine broadcast from $i$, then it must be the case that agent $i$ is faulty. In that case,
 agent $i$ is isolated (i.e., removed from the system).\footnote{Alternatively, the gradient value can be replaced by some default value (say 0).}
 This reduces the total number of agents $n$ by 1,
and the maximum number of faulty agent $f$ is also reduced by 1.
Algorithm 5 is {\em restarted} (from Step 1) using the new parameters
$n$ and $f$.\\

\item[{\bf Setp 2:}] Due to the restart mechanism above,
the algorithm progresses to Step 2 only when agent $j$ has received gradients from all other nodes.
Let $\calR[t-1]$ denote the multiset of the gradients $\{g_1[t-1],g_2[t-1],\cdots,g_n[t-1]\}$, obtained in Step 1 of $t$-th iteration. Drop the smallest $f$ values and the largest $f$.
Denote the largest and smallest gradients among the remaining values by $\hat{g}[t-1]$ and $\check{g}[t-1]$, respectively. Set $g[t-1]=\frac{1}{2}\pth{\hat{g}[t-1]+\check{g}[t-1]}$.

\item[{\bf Setp 3:}] Let $\{\lambda[t]\}_{t=0}^{\infty}$ be a sequence of diminishing (non-increasing and $\lambda[t]\to 0$) stepsizes chosen beforehand such that $\lambda[t]>0$ for each $t$, $\sum_{t=1}^{\infty}\lambda[t]=\infty$ and $\sum_{t=1}^{\infty}\lambda^2[t]<\infty$.

Compute $x[t]$ as
\begin{align}
\label{update valid}
x[t]= x[t-1]-\lambda[t-1] g[t-1].
\end{align}

\end{itemize}

\end{list}

~
\hrule

~

~

Since $x[0]$ is the consensus value of the input $x_j[0]$ for each $j\in \calN$, and $g[0]$ is the consensus value of the non-faulty gradient $h_j^{\prime}(x[0])$, for each $j\in \calN$, the update function (\ref{update valid}) is well-defined for $t=1$. By an inductive argument, we can show that the update (\ref{update valid}) is well-defined for all $t$.\\

Let $\widetilde{\calC}$ be the collection of functions defined as follows:
\begin{align}
\nonumber
\widetilde{\calC}\triangleq \{~~~p(x): p(x)&=\sum_{i\in \calN} \alpha_i h_i(x), ~~\forall i\in\calN, ~ \alpha_i\geq 0,\\
\nonumber
&\sum_{i\in \calN}\alpha_i=1,\text{~~and~~}\\
&\sum_{i\in\calN} {\bf 1}\left(\alpha_i\ge \frac{1}{2(|\calN|-f)}\right) ~ \geq ~ |\calN|-f ~~~ \}
\label{valid collection}
\end{align}

Each $p(x)\in \widetilde{\calC}$ is called a valid function. Note that the function $\frac{1}{|\calN|}\sum_{i\in \calN}h_i(x)\in \widetilde{\calC}$. For ease of future reference, we let $\widetilde{p}(x)=\frac{1}{|\calN|}\sum_{i\in \calN}h_i(x)$. Define $\widetilde{Y}\triangleq \cup_{p(x)\in \widetilde{\calC}} \argmin~ p(x)$.

\begin{lemma}
\label{valid hull non-iter2}
$\widetilde{Y}$ is a convex set.
\end{lemma}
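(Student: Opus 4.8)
The plan is to follow the argument of Lemma~\ref{valid hull non-iter1} essentially verbatim, substituting the constants: the threshold $\frac{1}{2(n-f)}$ becomes $\frac{1}{2(|\calN|-f)}$ and the required count $n-2f$ becomes $|\calN|-f$. Concretely, I would take two distinct points $x_1,x_2\in\widetilde{Y}$ together with valid functions $p_1,p_2\in\widetilde{\calC}$ satisfying $x_1\in\argmin\,p_1$ and $x_2\in\argmin\,p_2$; for $0\le\alpha\le 1$ set $x_\alpha=\alpha x_1+(1-\alpha)x_2$ and show $x_\alpha\in\widetilde{Y}$, which is exactly convexity.

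As in the earlier lemma I would split on whether $x_\alpha$ already lies in $\argmin\,p_1\cup\argmin\,p_2\cup\argmin\,\widetilde{p}$. That case is immediate, since each of these argmin sets is contained in $\widetilde{Y}$ (the first two by definition of $\widetilde{Y}$, and the third because $\widetilde{p}\in\widetilde{\calC}$ as noted in the text). The substantive case assumes $x_\alpha$ belongs to none of them; taking $x_1<x_2$ without loss of generality gives $x_1<x_\alpha<x_2$, and convexity together with $x_\alpha\notin\argmin\,p_1$ and $x_\alpha\notin\argmin\,p_2$ forces $p_1'(x_\alpha)>0$ and $p_2'(x_\alpha)<0$. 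Since $x_\alpha\notin\argmin\,\widetilde{p}$, the sign of $\widetilde{p}'(x_\alpha)$ is strictly positive or strictly negative, and in either subcase I would pair $\widetilde{p}$ with whichever of $p_1,p_2$ carries the opposite-sign derivative. For instance, when $\widetilde{p}'(x_\alpha)<0$, the fact that $p_1'(x_\alpha)>0$ yields $\zeta\in[0,1]$ with $\zeta p_1'(x_\alpha)+(1-\zeta)\widetilde{p}'(x_\alpha)=0$, exhibiting $x_\alpha$ as an optimum of the convex combination $\sum_{i\in\calN}\bigl(\zeta\alpha_i+(1-\zeta)\frac{1}{|\calN|}\bigr)h_i(x)$.

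The one step that carries any content is verifying that this combination lies in $\widetilde{\calC}$, i.e.\ that at least $|\calN|-f$ of its coefficients meet the threshold $\frac{1}{2(|\calN|-f)}$. The decisive inequality is $\frac{1}{|\calN|}\ge\frac{1}{2(|\calN|-f)}$, which is equivalent to $|\calN|\ge 2f$ and follows from $|\calN|\ge n-f>2f$ (using $n>3f$). Granting this, I would let $\calI_1=\{i\in\calN:\alpha_i\ge\frac{1}{2(|\calN|-f)}\}$; since $p_1\in\widetilde{\calC}$ we have $|\calI_1|\ge|\calN|-f$, and for each $i\in\calI_1$ both $\alpha_i$ and $\frac{1}{|\calN|}$ are at least $\frac{1}{2(|\calN|-f)}$, so their convex combination $\zeta\alpha_i+(1-\zeta)\frac{1}{|\calN|}$ is as well. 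Hence at least $|\calN|-f$ coefficients are heavy, the combination is valid, and $x_\alpha\in\widetilde{Y}$. The subcase $\widetilde{p}'(x_\alpha)>0$ is symmetric, with $p_2$ in place of $p_1$.

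I do not expect a genuine obstacle, as the structure is identical to Lemma~\ref{valid hull non-iter1}. The only place that demands attention is that the threshold now scales with $|\calN|-f$ rather than $n-f$, so the controlling inequality must be re-derived as $|\calN|>2f$ rather than the earlier $|\calN|<2(n-f)$; this is precisely where I would be careful to invoke $n>3f$ and $|\calN|\ge n-f$ correctly.
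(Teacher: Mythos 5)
Your proposal is correct and follows essentially the same route as the paper's own proof in Appendix~\ref{app: valid hull non-iter2}: the identical case split on whether $x_\alpha$ lies in $\argmin p_1\cup\argmin p_2\cup\argmin\widetilde{p}$, the same convex combination $\zeta\alpha_i+(1-\zeta)\frac{1}{|\calN|}$, and the same controlling inequality $|\calN|>2f$ (equivalently $|\calN|<2(|\calN|-f)$) derived from $n>3f$. Your observation that it suffices to note both $\alpha_i$ and $\frac{1}{|\calN|}$ individually meet the threshold is a slightly cleaner way to finish the coefficient bound than the paper's chained inequality, but the argument is the same.
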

The proof of Lemma \ref{valid hull non-iter2} is similar to the proof of Lemma \ref{valid hull non-iter1}, and is presented in Appendix \ref{app: valid hull non-iter2}.

\begin{lemma}
\label{gradient valid non-iter2}
For each iteration $t$, there exists a valid function $p(x)\in \widetilde{\calC}$ such that $g[t-1]=p^{\prime}(x[t-1])$.
\end{lemma}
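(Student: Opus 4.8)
The plan is to reduce the statement to a one-dimensional reachability claim: fixing the iteration $t$, I will show that the scalar $g[t-1]=\frac12\pth{\hat g[t-1]+\check g[t-1]}$ lies in the set $D\triangleq\{p^{\prime}(x[t-1]):p\in\widetilde{\calC}\}$ of derivative values attainable at $x[t-1]$ by valid functions. Write $d_1\ge d_2\ge\cdots\ge d_{|\calN|}$ for the non-faulty gradients $\{h_i^{\prime}(x[t-1]):i\in\calN\}$ sorted in non-increasing order; these are genuine gradients because $g_i[t-1]=h_i^{\prime}(x[t-1])$ for every $i\in\calN$. Since $n>3f$ we have $|\calN|\ge n-f>2f$, so the indices $f+1$ and $|\calN|-f$ are legitimate.

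First I would establish the bracketing inequalities
\[
d_{f+1}\le \hat g[t-1]\le d_1,\qquad d_{|\calN|}\le \check g[t-1]\le d_{|\calN|-f},
\]
by pure rank counting. Indeed $\hat g[t-1]$ is the $(f+1)$-th largest value among all $n$ gradients and $\check g[t-1]$ the $(f+1)$-th smallest, while at most $\phi\le f$ agents are faulty. The largest non-faulty gradient $d_1$ can be exceeded only by faulty gradients, hence has overall rank at most $f+1$, giving $d_1\ge\hat g[t-1]$; conversely at most $f$ agents strictly exceed the $(f+1)$-th largest value, so at most $f$ non-faulty gradients exceed $\hat g[t-1]$, giving $d_{f+1}\le\hat g[t-1]$. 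The two bounds on $\check g[t-1]$ are symmetric.

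Next I would exhibit two explicit members of $\widetilde{\calC}$ that bracket $g[t-1]$. Placing base weight $\frac{1}{2(|\calN|-f)}$ on each of the $|\calN|-f$ non-faulty agents with the largest gradients and the remaining weight $\frac12$ on the single largest produces a function in $\widetilde{\calC}$ (all $|\calN|-f$ of these weights meet the threshold) whose derivative at $x[t-1]$ is $\rho_{+}\triangleq\frac12 d_1+\frac{1}{2(|\calN|-f)}\sum_{k=1}^{|\calN|-f}d_k$; the mirror construction on the smallest gradients gives a valid function with derivative $\rho_{-}\triangleq\frac12 d_{|\calN|}+\frac{1}{2(|\calN|-f)}\sum_{k=f+1}^{|\calN|}d_k$. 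Combining the brackets with the elementary fact that the minimum of a finite set is at most its average (and the maximum at least its average) yields
\[
g[t-1]\le \frac12 d_1+\frac12 d_{|\calN|-f}\le \rho_{+},\qquad g[t-1]\ge \frac12 d_{f+1}+\frac12 d_{|\calN|}\ge \rho_{-},
\]
so $g[t-1]\in[\rho_{-},\rho_{+}]$, where $\rho_{-},\rho_{+}\in D$.

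Finally I would argue that $D$ is an interval, which forces $g[t-1]\in D$ and produces the required $p$. The uniform function $\widetilde p$ lies in $\widetilde{\calC}$ because $\frac{1}{|\calN|}\ge\frac{1}{2(|\calN|-f)}$ (as $|\calN|>2f$), and for any valid $p$ the straight-line homotopy of weight vectors from $p$ to $\widetilde p$ stays in $\widetilde{\calC}$: each weight starting at or above $\frac{1}{2(|\calN|-f)}$ remains so, being a convex combination of two numbers both at least $\frac{1}{2(|\calN|-f)}$. Thus $\widetilde{\calC}$ is star-shaped about $\widetilde p$, so $D$, the image of a connected set under the linear map $p\mapsto p^{\prime}(x[t-1])$, is connected, i.e.\ an interval; since $\rho_{-}\le g[t-1]\le\rho_{+}$ with both endpoints in $D$, we get $g[t-1]\in D$, and following the valid homotopy from $\widetilde p$ toward whichever extreme lies on the same side as $g[t-1]$ and invoking the intermediate value theorem exhibits a concrete valid $p$ with $g[t-1]=p^{\prime}(x[t-1])$. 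The main obstacle is not the construction but verifying that the midpoint rule keeps $g[t-1]$ within the attainable range while still forcing as many as $|\calN|-f$ weights past the threshold: the bracketing inequalities and the min/max-versus-average comparisons are precisely what make $\beta=\frac{1}{2(|\calN|-f)}$ tight, and since $\widetilde{\calC}$ itself is \emph{not} convex, the interval property of $D$ must be obtained through the star-shapedness argument rather than naive convexity.
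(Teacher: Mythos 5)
Your proposal is correct, but it reaches the conclusion by a genuinely different route than the paper. The paper's proof is directly constructive: it picks non-faulty replacement sets $\calL^*[t-1]\subseteq\calL[t-1]-\calF$ and $\calS^*[t-1]\subseteq\calS[t-1]-\calF$ of size $f-\phi+|\calR^*[t-1]\cap\calF|$, shows $g[t-1]$ lies between the averages of the gradients over these two sets, writes $g[t-1]=\xi_k g_k[t-1]+(1-\xi_k)g_{j'_k}[t-1]$ with $\xi_k\ge\frac12$ for each surviving non-faulty $k$, and then averages these $|\calN|-f$ representations to exhibit the weights of $p$ explicitly, with a separate (degenerate) case for $\hat g[t-1]=\check g[t-1]$. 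You instead prove a reachability statement: the rank-counting bounds $d_{f+1}\le\hat g[t-1]\le d_1$ and $d_{|\calN|}\le\check g[t-1]\le d_{|\calN|-f}$ are correct (at most $\phi\le f$ faulty values can lie strictly above $d_1$ or strictly below $d_{|\calN|}$, and at most $f$ values strictly exceed the $(f+1)$-th largest), your two extremal functions are indeed in $\widetilde{\calC}$ with derivatives $\rho_\pm$ sandwiching $g[t-1]$ via the min/max-versus-average inequalities, and the star-shapedness of $\widetilde{\calC}$ about $\widetilde p$ (which holds because $\frac{1}{|\calN|}\ge\frac{1}{2(|\calN|-f)}$ when $|\calN|>2f$) correctly delivers connectedness of $D$ even though $\widetilde{\calC}$ is not convex — a point you are right to flag. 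What each approach buys: the paper's argument produces the weights of $p$ in closed form and is the template reused for Lemma \ref{gradient valid non-iter1}, whereas yours avoids the two-case split and the per-agent interpolation bookkeeping, and its extremal functions are essentially the same object the paper itself uses in the proof of Theorem \ref{converge of alg4} to compute $\sup_{p\in\calC}p^{\prime}(x^*)$; the price is that your $p$ is obtained by an intermediate-value argument along the homotopy rather than written down explicitly, though that still yields a concrete one-parameter family to solve over.
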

The proof of Lemma \ref{gradient valid non-iter2} is presented in Appendix \ref{app: gradient valid non-iter2}.

\begin{theorem}
\label{converge of alg5}
When $n>3f$, Algorithm 5 solves Problem 3 with $\beta=\frac{1}{2(|\calN|-f)}$ and $\gamma=|\calN|-f$.
\end{theorem}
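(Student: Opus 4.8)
The plan is to show that the common estimate sequence converges to the convex set $\widetilde Y$, i.e.\ $\lim_{t\to\infty} Dist(x[t],\widetilde Y)=0$. Once this holds, any limit point lies in $\widetilde Y=\cup_{p\in\widetilde\calC}\argmin p$, so the output is a minimizer of some valid $p\in\widetilde\calC$, which by the definition of $\widetilde\calC$ is exactly a solution of Problem 3 with $\beta=\frac{1}{2(|\calN|-f)}$ and $\gamma=|\calN|-f$. The two structural facts driving everything are already available: Lemma \ref{gradient valid non-iter2} gives $g[t-1]=p'(x[t-1])$ for some valid $p\in\widetilde\calC$ (so in particular $|g[t-1]|\le L$, since $p$ is a convex combination of $L$-Lipschitz functions), and Lemma \ref{valid hull non-iter2} gives that $\widetilde Y$ is convex, hence an interval. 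The argument then mirrors the proof of Theorem \ref{converge of alg4} with $\widetilde\calC,\widetilde Y$ in place of $\calC,Y$.

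First I would record the monotone-approach property: because $\widetilde Y$ is an interval and $g[t-1]=p'(x[t-1])$ for a convex valid $p$ whose minimizers lie inside $\widetilde Y$, whenever $x[t-1]>\sup\widetilde Y$ we have $p'(x[t-1])\ge 0$ and the update moves $x$ toward $\widetilde Y$, and symmetrically on the other side. Calling $x[t]$ a \emph{resilient point} when the update crosses or enters $\widetilde Y$, I would split into two cases. If there are infinitely many resilient points $\{x[t_i]\}$, then at each such index the overshoot is at most $\lambda[t_i-1]\,|g[t_i-1]|\le\lambda[t_i-1]L$, so $Dist(x[t_i],\widetilde Y)\to 0$ because $\lambda[t]\to 0$; and between consecutive resilient points the iterate stays on one side of $\widetilde Y$ and the distance is non-increasing. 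Sandwiching $\limsup$ and $\liminf$ then forces $\lim_t Dist(x[t],\widetilde Y)=0$.

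The substantive case is finitely many resilient points. Past the last one the tail $\{x[t]\}$ is monotone and stays on one side of $\widetilde Y$, say $x[t]\le\inf\widetilde Y$; by the Monotone Convergence Theorem it converges to some $x^*\le\inf\widetilde Y$, and I must rule out $x^*<\inf\widetilde Y$. This is where the only genuinely new computation enters: I would exhibit the extremal valid function $q(x)=\bigl(1-\tfrac{|\calN|-f-1}{2(|\calN|-f)}\bigr)h_{i_1}(x)+\tfrac{1}{2(|\calN|-f)}\sum_{j=2}^{|\calN|-f}h_{i_j}(x)$, where the $h_{i_j}$ are ordered so that $h_{i_1}'(x^*)\ge\cdots\ge h_{i_{|\calN|}}'(x^*)$, and verify that its weights are nonnegative, sum to $1$, and place at least $|\calN|-f$ of them above $\tfrac{1}{2(|\calN|-f)}$, so $q\in\widetilde\calC$ and $q'(x^*)=\sup_{p\in\widetilde\calC}p'(x^*)=:\rho$. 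Since $x^*<\inf\widetilde Y$ means $x^*\notin\argmin p$ and $p'(x^*)<0$ for every valid $p$, we get $\rho<0$. Unrolling the update from the last resilient point, monotonicity of convex gradients together with $x[t]\le x^*$ gives $x[t+m+1]\ge x[t_0]-\rho\sum_{j=t_0}^{t+m}\lambda[j]\to+\infty$, since $\rho<0$ and $\sum_t\lambda[t]=\infty$; this contradicts $x[t]\le x^*\in\reals$. Hence $x^*=\inf\widetilde Y$ and $\lim_t Dist(x[t],\widetilde Y)=0$.

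I expect the main obstacle to be bookkeeping rather than conceptual: checking that $q$ genuinely lies in $\widetilde\calC$ under the sharper threshold $\tfrac{1}{2(|\calN|-f)}$ and count $|\calN|-f$ (as opposed to Algorithm 4's $\tfrac{1}{2(n-f)}$ and $n-2f$), and confirming that $q'(x^*)$ actually attains $\sup_{p\in\widetilde\calC}p'(x^*)$ — i.e.\ that meeting the threshold on the $|\calN|-f-1$ least valuable required coordinates while dumping the remaining free mass on the single largest-gradient function is optimal. Everything else — the resilient-point dichotomy, the two limit computations, and the $\sum\lambda=\infty$ contradiction — transfers directly from the proof of Theorem \ref{converge of alg4}.
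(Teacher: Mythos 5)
Your proposal is correct and takes essentially the same route the paper intends: the paper omits this proof, stating only that it is similar to that of Theorem \ref{converge of alg4}, and your argument is exactly that adaptation, substituting $\widetilde{\calC}$ and $\widetilde{Y}$ for $\calC$ and $Y$ and correctly modifying the extremal function $q$ to the threshold $\frac{1}{2(|\calN|-f)}$ with $|\calN|-f$ required coordinates. The bookkeeping you flag (weights of $q$ nonnegative, summing to $1$, with $|\calN|-f$ of them at least $\frac{1}{2(|\calN|-f)}$, and $q'(x^*)$ attaining the supremum) checks out, so no gap remains.
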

The proof of Theorem \ref{converge of alg5} is similar to the proof of Theorem \ref{converge of alg4}, and is omitted. 

\section{Algorithm 6: Suboptimal Algorithm}\label{approximation alg}
Algorithms 1, 2, 3, 4 and 5 all use Byzantine broadcast as subroutines, which may be costly. 
Unlike these algorithms, the iterative Algorithm 6 presented below \emph{does not} require the agents to exchange their local cost functions and gradients. Instead, in Algorithm 6, each agent optimizes its local cost function locally and exchanges the local optima, using an arbitrary Byzantine consensus algorithm. In addition, the correctness proof of Algorithm 6 does not require each $h_i(\cdot)$ to be differentiable. Thus Algorithm 6 also works for non-smooth functions.

Algorithm 6 is not an optimal algorithm. Specifically, Algorithm 6 only solves Problem 3 with $\beta=\frac{1}{2|\calN|}$ and $\gamma=\lceil\frac{n}{2}\rceil-\phi$, instead of the optimal $\gamma^*=|\calN|-f$ achieved by Algorithms 1, 2, 3 and 5. That is, Algorithm 6 is a suboptimal Algorithm for Problem 3 with $\beta=\frac{1}{2|\calN|}$.

\paragraph{}
\vspace*{8pt}\hrule
~

{\bf Algorithm 6} for agent $j$:
~
\vspace*{4pt}\hrule

\begin{list}{}{}
\item[{\bf Step 1:}]
Choose $v_j\in X_j=\argmin_{x\in\mathbb{R}} h_j(x)$.\\
~
\item[{\bf Step 2:}]
Send $v_j$ to all agents, and receive messages from all agents. Agent $j$ should receive a value from each agent $i\in \calV$--let us denote the value received from agent $i$ as $w_{ij}$. If no value is, in fact, received from agent $i$, then $w_{ij}$ is set to be a predefined default value.

Sort $w_{ij}$ in a non-decreasing order, breaking tie arbitrarily,
and set $x_j[0]$ to be the median of this order, i.e., we choose $x_j[0]$ to be the $w_{ij}$ whose rank is $\lceil \frac{n}{2}\rceil$.\\
~

\item[{\bf Step 3:}] Perform exact Byzantine consensus algorithm with $x_j[0]$ as
the input of agent $j$ to the consensus algorithm.\\

Set $\tx$ to be the output of the above consensus algorithm, and output $\tx$.

\end{list}

\hrule

~

\begin{theorem}
\label{lbnontrivial}
When $n>3f$, Algorithm 6 solves Problem 3 with $\beta=\frac{1}{2|\calN|}$ and $\gamma=\lceil \frac{n}{2}\rceil-\phi$.
\end{theorem}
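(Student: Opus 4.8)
The plan is to combine three ingredients: the agreement and validity guarantees of the exact Byzantine consensus in Step 3, a range bound on each non-faulty agent's median $x_j[0]$ produced in Step 2, and a subgradient-balancing construction of the weights $\alpha_i$. Throughout I would work with one-sided derivatives / subdifferentials rather than gradients, so that the argument simultaneously covers the non-smooth case that the text advertises for this algorithm. First I would record the consensus properties: since $n>3f$, any exact Byzantine consensus algorithm satisfies agreement, so all non-faulty agents output the same $\tx$, and validity, so $\tx$ lies in the convex hull of the non-faulty inputs, giving $\min_{j\in\calN} x_j[0]\le \tx\le \max_{j\in\calN} x_j[0]$.

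The key combinatorial step is to bound where each median sits. Fix a non-faulty $j$ and set $m=\lceil n/2\rceil$. In agent $j$'s sorted list of $n$ received values, the $m$ values of rank $\le m$ are $\le x_j[0]$ and the $n-m+1$ values of rank $\ge m$ are $\ge x_j[0]$; since at most $\phi$ of the $n$ senders are faulty, at least $m-\phi=\gamma$ of the values $\le x_j[0]$ and at least $n-m+1-\phi\ge\gamma$ of the values $\ge x_j[0]$ come from non-faulty agents, whose reported values equal their true optima $v_i$. Taking $j$ to be the non-faulty agent attaining $\max_j x_j[0]$ (respectively $\min_j x_j[0]$) and invoking the validity inequality, I would conclude that at least $\gamma$ non-faulty agents satisfy $v_i\ge\tx$ and at least $\gamma$ satisfy $v_i\le\tx$; call these index sets $R$ and $L$.

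Finally I would construct the weights. Choose $L_0\subseteq L$ to be the $\gamma$ agents with smallest $v_i$ and $R_0\subseteq R$ the $\gamma$ agents with largest $v_i$; a short index count shows they can be chosen to overlap in at most one agent, and only when $n$ is odd and $\phi=0$, in which case the shared agent $i_0$ has $\tx\in X_{i_0}$. For $i\in L_0$ convexity yields a subgradient $s_i\ge 0$ of $h_i$ at $\tx$ (a minimizer lies weakly left of $\tx$), for $i\in R_0$ a subgradient $s_i\le 0$, and I set $s_{i_0}=0$ for any shared agent. Assigning weight $\zeta/\gamma$ to each agent of $L_0$ and $(1-\zeta)/\gamma$ to each of $R_0$ gives total weight $1$ for every $\zeta\in[0,1]$, and since $\sum_{i\in L_0}s_i\ge 0\ge\sum_{i\in R_0}s_i$, the intermediate value theorem produces $\zeta$ with $\sum_i\alpha_i s_i=0$, i.e.\ $0\in\partial\pth{\sum_i\alpha_i h_i}(\tx)$, so $\tx$ minimizes this convex combination of non-faulty cost functions. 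Because $\zeta\ge\tfrac12$ or $1-\zeta\ge\tfrac12$, at least $\gamma$ of the chosen weights are $\ge \tfrac{1}{2\gamma}>\tfrac{1}{2|\calN|}=\beta$, where the strict inequality uses $\gamma=\lceil n/2\rceil-\phi<n-\phi=|\calN|$; this certifies the required $\gamma$ weights exceeding $\beta$.

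The main obstacle I anticipate is the middle step: pinning down the exact order-statistic range of the corrupted median and pushing it through the consensus validity bound, together with verifying that $L_0$ and $R_0$ can be chosen essentially disjoint so that the balancing construction yields $\gamma$ \emph{distinct} heavily weighted agents rather than fewer. The sign determination via convexity, the intermediate value argument, and the threshold comparison $\tfrac{1}{2\gamma}>\beta$ are then routine.
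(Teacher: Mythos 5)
Your proposal is correct, and its first two stages coincide with the paper's proof: the order-statistic count showing that each non-faulty median $x_j[0]$ has at least $\lceil \frac{n}{2}\rceil-\phi$ non-faulty values on either side, combined with consensus validity applied to the extremal non-faulty inputs, is exactly the paper's derivation of (\ref{smaller}) and (\ref{larger}). Where you diverge is in the final weight construction. The paper partitions \emph{all} of $\calN$ into $A(\tx),B(\tx),C(\tx)$ by the sign of $h_i'(\tx)$, balances $\zeta\sum_{A}h_i'+(1-\zeta)\sum_{B}h_i'=0$, keeps $C(\tx)$ at full weight, and normalizes; every non-faulty agent then carries positive weight, and the $\gamma$-count comes from $A(\tx)\cup C(\tx)\supseteq\{i\in\calN: v_i\le\tx\}$. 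You instead restrict the support to two order-statistic-selected sets $L_0,R_0$ of size exactly $\gamma$, give each side uniform weight $\zeta/\gamma$ or $(1-\zeta)/\gamma$, and balance subgradients. Your route buys three small things: it avoids the paper's case split on $A(\tx)=\emptyset$ or $B(\tx)=\emptyset$ (your intermediate-value argument degenerates gracefully to $\zeta\in\{0,1\}$); it yields the strict bound $\frac{1}{2\gamma}>\frac{1}{2|\calN|}$ needed for the indicator $\mathbf{1}(\alpha_i>\beta)$ without the borderline equality the paper's $\chi\zeta\ge\frac{1}{2|\calN|}$ can produce; and, by working with subdifferentials, it actually delivers the non-smooth generality that the paper claims for Algorithm 6 but whose written proof still uses $h_i'$. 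The price is the extra bookkeeping you already flagged — verifying $|L_0\cap R_0|\le 1$ under ties at $v_i=\tx$ — which the paper sidesteps by never discarding any non-faulty agent. Both arguments are sound.
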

\begin{proof}

Let $W_j$ denote the multiset obtained by agent $j$, i.e., $W_j=\{w_{1j}, \ldots, w_{nj}\}$. For each $x\in \reals$, define $W_j^+(x)$ and $W_j^-(x)$ as follows.
$$
 W_j^+(x)=\{i: ~i\in \calN~\text{and}~ w_{ij}\ge x\},
$$
$$
W_j^-(x)=\{i: ~i\in \calN~\text{and}~ w_{ij}\le x\}.
$$ 
Note that $W_j^+(x)\cup W_j^-(x)=\calV$ for each $x\in \reals$, and that $w_{ij}=v_i$ for each $i\in \calN$. It should also be noted that $W_j^+(x)$ and $W_j^-(x)$ are not necessarily disjoint.\\

For each $j$, since $x_j[0]$ is chosen to be the median of the non-decreasing order over $W_j$, we have

$$
 |W_j^+(x_j[0])|=\left |\{i: ~i\in \calN~\text{and}~ w_{ij}\ge x_j[0]\}\right |\ge \lceil \frac{n}{2}\rceil-\phi,
$$
$$
 |W_j^-(x_j[0])|=\left |\{i: ~i\in \calN~\text{and}~ w_{ij}\le x_j[0]\}\right |\ge n-\lceil \frac{n}{2}\rceil-\phi+1\ge \lceil \frac{n}{2}\rceil-\phi.
$$
Let $i_0\in \calN$ and $j_0\in \calN$ be the agents such that $x_{i_0}[0]\le x_j[0]$ for each $j\in \calN$ and $x_{j_0}[0]\ge x_j[0]$ for each $j\in \calN$. Since $\tx$ is the output of a correct exact consensus algorithm, by validity, we have $x_{i_0}[0]\le \tx \le x_{j_0}[0]$. Thus
$$ \{i: i\in \calN, w_{ii_0}\le x_{i_0}[0]\}\subseteq \{i: i\in \calN, w_{ii_0}\le \tx\}=\{i: i\in \calN, v_{i}\le \tx\}$$
and
$$ \{i: i\in \calN, w_{ij_0}\ge x_{j_0}[0]\}\subseteq \{i: i\in \calN, w_{ij_0}\ge \tx\}=\{i: i\in \calN, v_{i}\ge \tx\}.$$
Consequently, we have
\begin{align}
\label{smaller}
|\{i: i\in \calN, v_{i}\le \tx\}|\ge |\{i: i\in \calN, w_{ii_0}\le x_{i_0}[0]\}|=~|W_{i_0}^-(x_{i_0}[0])|~\ge ~ \lceil\frac{n}{2}\rceil-\phi,
\end{align}
and
\begin{align}
\label{larger}
|\{i: i\in \calN, v_{i}\ge \tx\}|\ge |\{i: i\in \calN, w_{ij_0}\ge x_{j_0}[0]\}|=~|W_{j_0}^+(x_{j_0}[0])|~\ge ~ \lceil\frac{n}{2}\rceil-\phi.
\end{align}

Recall that $v_j\in X_j=\argmin_{x\in\mathbb{R}} h_j(x)$, then $h_i(\tx)\ge 0$ for each $i\in \{i: i\in \calN, v_{i}\le \tx\}$, and $h_i(\tx)\le 0$ for each $i\in \{i: i\in \calN, v_{i}\ge \tx\}$. Define $A(\tx), B(\tx)$ and $C(\tx)$ as follows.
\begin{align*}
A(\tx)&\triangleq \{i: i\in \calN, ~ h_i^{\prime}(\tx)>0\},\\
B(\tx)&\triangleq \{i: i\in \calN, ~ h_i^{\prime}(\tx)<0\},\\
C(\tx)&\triangleq \{i: i\in \calN, ~ h_i^{\prime}(\tx)=0\}.
\end{align*}

We now consider two cases: (i) $A(\tx)=\O$ or $B(\tx)=\O$, and (ii) $A(\tx)\not=\O$ and $B(\tx)\not=\O$.

\paragraph{{\bf Case (i)}: $A(\tx)=\O$ or $B(\tx)=\O$.}

If $B(\tx)=\O$, then $h_i(\tx)=0$ for each $i\in \{i: i\in \calN, v_{i}\ge \tx\}$. Then $\tx$ is an optimum of function
\begin{align}
\label{rewritten3}
\frac{1}{\left | \{i: ~i\in \calN,~ v_i\le \tx\} \right |}\sum_{j\in \{i: ~i\in \calN,~ v_i\le \tx\} } h_j(x).
\end{align}
As $\left | \{i: ~i\in \calN,~ v_i\le \tx\} \right |\le |\calN|$ and by (\ref{smaller}), it holds that $\left | \{i: ~i\in \calN,~ v_i\le \tx\} \right |\ge \lceil\frac{n}{2}\rceil-\phi$. Thus, in (\ref{rewritten3}) at least $\lceil \frac{n}{2}\rceil-\phi$ non-faulty functions are assigned coefficients bounded below by $\frac{1}{|\calN|}$.\\

Similarly, we can show the case when $A(\tx)=\O$.

\paragraph{{\bf Case (ii)}: $A(\tx)\not=\O$ and $B(\tx)\not=\O$.}
When $A(\tx)\not=\O$ and $B(\tx)\not=\O$,
$$
\sum_{i\in A(\tx) }h_i^{\prime}(\tx)~>~0 ~~~~~~\text{and}~~~~~~\sum_{i\in B(\tx) }h_i^{\prime}(\tx)~<~0.
$$
Then there exists $0\le \zeta\le 1$ such that
\begin{align*}
0=\zeta \pth{\sum_{i\in A(\tx) }h_i^{\prime}(\tx)}+\pth{1-\zeta} \pth{\sum_{i\in B(\tx) }h_i^{\prime}(\tx)}.
\end{align*}
In addition, by definition of $C(\tx)$, we have
\begin{align*}
\zeta \pth{\sum_{i\in A(\tx) }h_i^{\prime}(\tx)}+\pth{1-\zeta} \pth{\sum_{i\in B(\tx) }h_i^{\prime}(\tx)} +\sum_{i\in C(\tx)}h_i^{\prime}(\tx)&=0+\sum_{i\in C(\tx)}h_i^{\prime}(\tx)\\
&=0+0=0.
\end{align*}
Thus $\tx$ is an optimum of
\begin{align}
\label{rewritten4}
 \chi\pth{\zeta \sum_{i\in A(\tx)} h_i(x)+\pth{1-\zeta} \sum_{i\in B(\tx)} h_i(x)+\sum_{i\in C(\tx)} h_i(x)},
\end{align}
where $$\chi=\frac{1}{\zeta |A(\tx)|+\pth{1-\zeta} |B(\tx)|+|C(\tx)|}.$$
Since $0\le \zeta \le 1$, either $\zeta\ge \frac{1}{2}$ or $1-\zeta \ge \frac{1}{2}$. WLOG, assume $\zeta\ge \frac{1}{2}$. We have
$$\zeta |A(\tx)|+\pth{1-\gamma} |B(\tx)|+|C(\tx)|\le |A(\tx)|+|B(\tx)|+|C(\tx)|=|A(\tx)\cup B(\tx)\cup C(\tx)|=|\calN|.$$
In addition, since $A(\tx)\cup C(\tx)
\supseteq \{i: i\in \calN~\text{and}~v_i\le \tx\}$ and $B(\tx)\cup C(\tx)\supseteq\{i: i\in \calN~\text{and}~v_i\ge \tx\}$, by definition of $\tx$, we have
$|A(\tx)\cup C(\tx)|\ge \lceil \frac{n}{2}\rceil-\phi$ and $|B(\tx)\cup C(\tx)|\ge \lceil \frac{n}{2}\rceil-\phi$.
 Then in (\ref{rewritten4}), at least $\lceil \frac{n}{2}\rceil-\phi$ non-faulty functions are assigned with weights at least $\frac{1}{2|\calN|}$. Similar result holds when $1-\zeta\ge \frac{1}{2}$.\\

Cases (i) and (ii) together prove the theorem.
\eproof
\end{proof}

\section{Extensions}\label{extension}

Many extensions of these results are possible. The results obtained in this technical report can be extended to the case when the functions
are sub-differentiable, with slightly more involved analysis.
This generalization will be presented in another technical report.

We have also obtained a comparable set of results
for the case when the cost functions are {\em redundant} in some manner
(e.g., cost function of agent 3 may equal a convex combination of cost functions
of agents 1 and 2), or the optimal sets of the local cost functions
are guaranteed to overlap. These results will also be presented elsewhere.

Finally, if the underlying communication channel is a broadcast channel (over which all transmissions are received correctly and identically by all agents), then the results presented in this report can be proved for $n\ge 2f+1$.

\section{Summary}\label{conclusion and discussion}

In this paper, we introduce the problem of Byzantine fault-tolerant optimization,
and obtain an impossibility result for the problem.
The impossibility result provides an upper bound on the number of
local cost function of non-faulty nodes that can {\em non-trivially}
affect the output of a weighted optimization function.
We also present algorithms that matches this upper bound. In addition, a low-complexity suboptimal algorithm is presented.

%
%

\newpage

\bibliographystyle{plain}
\bibliography{PSDA_DL}

\newpage
\appendix


\centerline{\large\bf Appendices}

~

\section{Proposition \ref{p1}}
\label{appendix:p1}

Proposition \ref{p1} is used in proving the correctness of other results in the paper.

\begin{proposition}
\label{p1}
Let $\alpha_i\ge 0$ for $i\in\calN$ and $\sum_{i\in \calN}\alpha_i=1$. Consider
admissible functions $h_i(x)$, $i\in \calN$, with $X_i = \arg\min_{x\in\mathbb{R}}h_i(x)$.
Define $X$ as
\begin{align}
\label{convexcombination}
X = \arg\min_{x} \sum_{i\in \calN} \alpha_i h_i(x).
\end{align}
Then
\begin{align}
X\subseteq Cov\pth{\cup_{i\in \calN} X_i},
\end{align}
where $Cov\pth{\cup_{i\in \calN} X_i}$ is the convex hull of set $\cup_{i\in \calN} X_i$.
\end{proposition}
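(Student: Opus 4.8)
The plan is to reduce the set inclusion to a one-dimensional sign analysis of the derivative of the weighted objective. Write $p(x)\triangleq\sum_{i\in\calN}\alpha_i h_i(x)$, which is convex and continuously differentiable as a nonnegative combination of admissible functions. Since each $X_i$ is nonempty and compact, set $a\triangleq\min_{i\in\calN}\min X_i$ and $b\triangleq\max_{i\in\calN}\max X_i$; because we work on $\reals$, the convex hull collapses to an interval, $Cov\pth{\cup_{i\in\calN}X_i}=[a,b]$. Thus it suffices to show that every minimizer of $p$ lies in $[a,b]$, which I would establish by contraposition: show that no $x>b$ and no $x<a$ can minimize $p$.

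The key technical step is a sign lemma for the individual derivatives. Since $h_i$ is convex and differentiable, $h_i'$ is non-decreasing, and a point is a global minimizer of $h_i$ precisely when its derivative vanishes. Hence for $x>\max X_i$ I would argue $h_i'(x)>0$: monotonicity of $h_i'$ together with $h_i'=0$ on $X_i$ gives $h_i'(x)\ge 0$, while $h_i'(x)=0$ would force $x\in X_i$, contradicting $x>\max X_i$. Symmetrically, $h_i'(x)<0$ for $x<\min X_i$.

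Combining these, for any $x>b$ every index $i\in\calN$ satisfies $x>\max X_i$, so $h_i'(x)>0$; since $\alpha_i\ge 0$ and $\sum_{i\in\calN}\alpha_i=1$ forces some $\alpha_{i_0}>0$, I obtain $p'(x)=\sum_{i\in\calN}\alpha_i h_i'(x)>0$. By convexity $p$ is then strictly increasing on $(b,\infty)$, so no such $x$ minimizes $p$; the symmetric argument with $p'(x)<0$ rules out every $x<a$. Therefore $X\subseteq[a,b]=Cov\pth{\cup_{i\in\calN}X_i}$.

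I expect the only delicate point to be the strict inequality in the sign lemma, i.e.\ ruling out $h_i'(x)=0$ outside $X_i$, which relies on the characterization that the critical points of a convex differentiable function are exactly its global minima. Everything else, namely the convexity and differentiability of $p$, the identification of the convex hull with $[a,b]$ in $\reals$, and the passage from a strict derivative sign to strict monotonicity, is routine.
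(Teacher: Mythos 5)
Your proposal is correct and follows essentially the same route as the paper's proof: both reduce the claim to the observation that outside the interval $[a,b]=Cov\pth{\cup_{i\in\calN}X_i}$ every derivative $h_i^{\prime}$ has a strict common sign, so the weighted derivative $\sum_{i\in\calN}\alpha_i h_i^{\prime}$ cannot vanish there, violating first-order optimality. The only cosmetic differences are that you argue by contraposition where the paper argues by contradiction, and you spell out the strict-sign lemma (ruling out $h_i^{\prime}(x)=0$ outside $X_i$) slightly more explicitly than the paper does.
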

\begin{proof}
By definition of admissible functions, $X_i$ is nonempty for all $i\in \calN$. Then $Cov\pth{\cup_{i\in \calN} X_i}\not=\O$.

If $X=\O$, then $X\subseteq Cov\pth{\cup_{i\in \calN} X_i}$ holds trivially.

 It remains to be shown that when $X\not=\O$, $X\subseteq Cov\pth{\cup_{i\in \calN} X_i}$ is also true. We prove this by contradiction.

Suppose that $X\not=\O$ and $X\not\subseteq Cov\pth{\cup_{i\in \calN} X_i}$. Then there exists a value $x_0\in X$ that is not contained in $Cov\pth{\cup_{i\in \calN} X_i}$. Recall that, by definition
of admissible functions, $X_i$ is compact (closed and bounded) for each $i\in \calN$. Then $Cov\pth{\cup_{i\in \calN} X_i}$ is both convex and compact. To simplify notation, let $[a, b]\triangleq Cov\pth{\cup_{i\in \calN} X_i}$. In addition, $\sum_{i\in \calN}\alpha_i h_i^{\prime}(x_0)$ is the gradient of the function $\sum_{i\in \calN}\alpha_i h_i(x)$ at $x=x_0$.

As $x_0\not\in Cov\pth{\cup_{i\in \calN} X_i}$, then either $x_0<a$ or $x_0>b$.
By definition, $a$ is the smallest point at which there exists $i\in \calN$ such that $h_i^{\prime}\pth{x_0}=0$. If $x_0<a$, then $h_i^{\prime}\pth{x_0}<0$ for each $i\in \calN$. Otherwise the minimality of $a$ will be violated. In addition, since $\alpha_i\ge 0$ for each $i\in \calN$ and $\sum_{i\in \calN}\alpha_i=1$, then $\sum_{i\in \calN}\alpha_i h_i^{\prime}(x_0)<0$. However, since $x_0\in X$, by optimality of $x_0$ it must be that $\sum_{i\in \calN}\alpha_i h_i^{\prime}(x_0)=0$. This leads to a contradiction.

Similarly, we can derive a contradiction when $x_0>b$.
\eproof
\end{proof}

\section{Proof of Theorem \ref{t_imposs_0}}
\label{appendix:imposs:0}

\begin{proof}

Assume that $f>0$.

The proof of the theorem is by contradiction.

Suppose that there exists a correct algorithm $\calA$  that solves Problem 1.
Define the cost functions of the $n$ agents as follows.
\begin{itemize}
\item $h_1(x)=(x+1)^2$,
\item $h_{n}(x)=(x-1)^2$, and
\item $h_{i}(x)=x^2+i$, where $2\leq i\leq n-1$.
\end{itemize}
Let $X_i$ be the optimal set of $h_i(x)$ for $i\in\calV$.
That is, $X_i=\arg\max_{x\in\mathbb{R}}h_i(x)$. It is easy to see that $X_1=\{-1\}, X_{n}=\{1\}$, and for $2\leq i\leq n-1$, $X_i=\{0\}$. We consider two executions wherein $\calA$ produces different outputs,
and show that there exists a non-faulty agent that cannot distinguish these two executions.

The identity of the faulty agents in these two executions are different.
In both executions, the faulty nodes follow algorithm correctly with the above choice of cost functions.

\textbf{Execution 1}:
In execution 1, let $\calN=\{1, \cdots, n-1\}$ and $\calF=\{n\}$.
Since $\calA$ is a correct algorithm, by Proposition \ref{p1} it follows that
the output of the algorithm must be in $Cov\pth{\cup_{j=1}^{n-1} X_j}=[-1,0]$ for all agents $i\in \{1,\cdots,n-1\}$
-- note that Proposition \ref{p1} is stated and proved in Appendix \ref{appendix:p1}.

\textbf{Execution 2}:
In execution 2, let $\calN=\{2, \cdots, n\}$ and $\calF=\{1\}$.
Since $\calA$ is a correct algorithm, by Proposition \ref{p1} it follows that,
in this case, the output of the algorithm must be in $Cov\pth{\cup_{j=2}^n X_j}=[0,1]$ for all agents $i\in \{2,\cdots,n\}$

The agents in $\{2,\cdots,n-1\}$ cannot distinguish between the above two executions, and hence must produce identical output
in both cases. That is, their output must be 0 since $[-1,0] \cap [0,1]=\{0\}$.
(When $f>0$, $n\geq 3f+1=4$. Thus, the set $\{2,\cdots,n-1\}$ is non-empty.)

On the other hand, it is easy to see that $\sum_{i=1}^{n-1} h_i^{\prime}(0)\not=0$ and $\sum_{i=2}^{n} h_i^{\prime}(0)\not=0$, contradicting the hypothesis that $0$ is an optimal solution
for either execution--note that $h_i^{\prime}(x)$ is the derivative of function $h_i(\cdot)$ at $x$ for each $1\le i\le n$.
This contradicts the assumption that $\calA$ is correct and the proof is complete.
\eproof
\end{proof}

\section{Proof of Theorem \ref{ub1}}
\label{appendix:ub1}

\begin{proof}

Recall that we assume $n\ge 3f+1$ and that we denote $|\calF|=\phi$.
Let $h_1(x), \ldots, h_n(x)$ be defined as follows.
\begin{itemize}
\item $h_i(x)=\pth{x-i}^2$, for $1\le i\le f$ and $n-\phi+1\le i\le n$.\\
In this case, the optimum for $h_i(x)$ is at $x=i$.\\
\item $h_i(x)=\pth{x-a}^2$, for $f+1\le i\le n-\phi$, where $a=f+1$.\\
In this case, the optimum for $h_i(x)$ is at $x=a=f+1$.
\end{itemize}
From a non-faulty agent $j$'s perspective, any subset of $f$ agents may be faulty. Assume that, if agent $k$ is faulty, then aside from choosing its cost function
as specified above, agent $k$ does {\em not} behave incorrectly.
Thus, all agents follow any specified algorithm correctly.

To show the impossibility claim of the theorem, consider any correct algorithm.

Now, let us consider any non-faulty agent $j$ where $f+1\leq j\leq n-\phi$.
Consider two possible cases:
\begin{itemize}
\item[\bf Case 1:] In this case, suppose that agents 1 through $n-\phi$ are non-faulty, and agents $n-\phi+1$ through $n$ are faulty.  For the local cost functions (specified above) for the non-faulty agents in this case, the optima are in the interval $[1,a]$.  Then by Proposition \ref{p1}, for the output $\tx$ it must be true that $\tx\in [1, a]$.
(Recall that Proposition \ref{p1} is stated and proved in Appendix \ref{appendix:p1}.) \\

\item[\bf Case 2:] In Case 2, suppose that agents $f+1$ through $n$ are non-faulty, and agents $1$ through $f$ are faulty.  For the local cost functions (specified above) for the non-faulty agents in this case, the optima are in the interval $[a,n]$. Then by Proposition \ref{p1}, for the output $\tx$ it must be true that $\tx\in [a,n]$.
\end{itemize}
Since the non-faulty agent $j$ does not know the actual number of faulty agents in the system, it cannot distinguish between the above two cases, it must choose identical output in both cases.
Therefore, the output must be in $[1,a]\cap[a,n]$; that is, the output at non-faulty agent $j$ must equal $a=f+1$. Therefore,
all non-faulty agents must output $a=f+1$ in both cases.

Now suppose that Case 1 holds, i.e., agents $n-\phi+1$ through $n$ are faulty.
By the requirements of Problem 3,
there exists a collection of weights $\alpha_i$'s such that $\tx=a$ is an optimum of objective
\begin{align}
\sum_{i=1}^{n-\phi} \alpha_i h_i\pth{x},
\end{align}
Thus,
$\sum_{i=1}^{n-\phi} \alpha_i\, h_i^{\prime}(a)=0$, where $h_i^{\prime}(x)$ denotes the derivative of function $h_i(\cdot)$ at $x$.

Recall that $a=f+1$.
By construction of $h_1(x), \ldots, h_{n-\phi}(x)$, we know $h_i^{\prime}\pth{a}=0$ for $f+1\le i\le n-\phi$ and $h_i^{\prime}\pth{a}>0$ for $1\le i\le f$.
Thus
\begin{align*}
0=\sum_{i=1}^{n-\phi} \alpha_i h_i^{\prime}(a)
&=\pth{\sum_{i=1}^{f} \alpha_i h_i^{\prime}(a)}+\pth{\sum_{i=f+1}^{n-\phi} \alpha_i h_i^{\prime}(a)}\\
&=\pth{\sum_{i=1}^{f} \alpha_i h_i^{\prime}(a)}+\pth{\sum_{i=f+1}^{n-\phi} \alpha_i \times 0}\\
&=\sum_{i=1}^{f} \alpha_i h_i^{\prime}(a).
\end{align*}
For $1\le i\le f$, since $h_i^{\prime}(a)>0$ and $\alpha_i\ge 0$
it holds that $\alpha_ih_i^{\prime}(a)\ge 0$, where equality holds if and only if $\alpha_i=0$. Thus, $\sum_{i=1}^{f} \alpha_i h_i^{\prime}(a)=0$ implies that $\alpha_i h_i^{\prime}(a)=0$ for $1\le i\le f$. Then $\alpha_i=0$ for $1\le i\le f$.

Since there are $|\calN|$ non-faulty agents (1 through $n-\phi$), and weight $\alpha_i=0$ for $1\leq i\leq f$,
it follows that at most $|\calN|-f$ of the weights of the non-faulty agents in Case 1 are non-zero.

Thus, regardless of the value of parameter $\beta$ in Problem 3 (where $\beta>0$), if $\gamma$ exceeds $|\calN|-f$, no algorithm can solve Problem 3.
\eproof
\end{proof}

\section{Proof of Proposition \ref{monotone}}
\label{appendix:monotone}

\begin{proof}
We first show that $F\pth{x}$ is a non-decreasing function. 

Choose any $x\in\mathbb{R}$, and
choose any $y\ge x$. Let $S_y$ and $S_x$ be sets such that $\sum_{i\in A\pth{y}-S_y}h_i^{\prime}\pth{y}$ and $\sum_{i\in A\pth{x}-S_x}h_i^{\prime}\pth{x}$ are minimized, respectively.

Since $h_i\pth{\cdot}$ is convex, $h_i^{\prime}\pth{\cdot}$ is non-decreasing. By definition of $A\pth{\cdot}$ we have $A\pth{x}\subseteq A\pth{y}$, i.e., $A\pth{\cdot}$ is non-decreasing. In addition, $0\le |A\pth{\cdot}|\le n$. Similarly, we can show that $B(y)\subseteq B(x)$ and $0\le |B\pth{\cdot}|\le n$.
\begin{align}
\nonumber
F(y)-F(x)&=\sum_{i\in A\pth{y}-S_y}h_i^{\prime}\pth{y}-\sum_{i\in A\pth{x}-S_x}h_i^{\prime}\pth{x}\\
\nonumber
&=\sum_{i\in A\pth{y}-S_y-S_x}h_i^{\prime}\pth{y}+\sum_{i\in S_x\cap A\pth{y}-S_y}h_i^{\prime}\pth{y}-\pth{\sum_{i\in A\pth{x}-S_x-S_y}h_i^{\prime}\pth{x}+\sum_{i\in S_y\cap A\pth{x} -S_x}h_i^{\prime}\pth{x}}\\
\nonumber
&=\pth{\sum_{i\in A\pth{y}-S_y-S_x}h_i^{\prime}\pth{y}-\sum_{i\in A\pth{x}-S_x-S_y}h_i^{\prime}\pth{x}}+\pth{\sum_{i\in S_x\cap A\pth{y}-S_y}h_i^{\prime}\pth{y}-\sum_{i\in S_y\cap A\pth{x}-S_x}h_i^{\prime}\pth{x}}\\
\nonumber
&\overset{(a)}{\ge} \pth{\sum_{i\in A\pth{x}-S_y-S_x}h_i^{\prime}\pth{y}-\sum_{i\in A\pth{x}-S_x-S_y}h_i^{\prime}\pth{x}} +\pth{\sum_{i\in S_x\cap A\pth{y}-S_y}h_i^{\prime}\pth{y}-\sum_{i\in S_y\cap A\pth{x}-S_x}h_i^{\prime}\pth{x}}\\
\nonumber
&\overset{(b)}{=}\pth{\sum_{i\in A\pth{x}-S_y-S_x}h_i^{\prime}\pth{y}-\sum_{i\in A\pth{x}-S_x-S_y}h_i^{\prime}\pth{x}} +\pth{\sum_{i\in S_x-S_y}h_i^{\prime}\pth{y}-\sum_{i\in S_y\cap A\pth{x}-S_x}h_i^{\prime}\pth{x}}\\
\label{Fcases}
&\overset{(c)}{\ge} \pth{\sum_{i\in S_x-S_y}h_i^{\prime}\pth{y}-\sum_{i\in S_y\cap A\pth{x}-S_x}h_i^{\prime}\pth{x}}.
\end{align}
Inequality $(a)$ follows from the fact that $A(x)\subseteq A(y)$ and $h_i^{\prime}(y)>0$ for each $i\in A(y)$; equality $(b)$ is true since $S_x\subseteq A\pth{x}\subseteq A\pth{y}$; and inequality $(c)$ holds because that $h_i^{\prime}\pth{\cdot}$ is non-decreasing.

Now consider two cases: (i) $|S_x|<f$ and (ii) $|S_x|=f$.
\paragraph{\bf Case (i): $|S_x|<f$.}
In this case, we have $S_x=A(x)$, and
\begin{align}
\nonumber
\sum_{i\in S_x-S_y}h_i^{\prime}\pth{y}-\sum_{i\in S_y\cap A\pth{x}-S_x}h_i^{\prime}\pth{x}&= \sum_{i\in S_x-S_y}h_i^{\prime}\pth{y}-\sum_{i\in \O}h_i^{\prime}\pth{x}\\
\nonumber
&= \sum_{i\in S_x-S_y}h_i^{\prime}\pth{y}-0\\
\label{FcasesI}
&\ge 0.
\end{align}
\paragraph{\bf Case (ii): $|S_x|=f$.}
Because $S_x\subseteq A\pth{x}\subseteq A\pth{y}$, if $|S_x|= f$, we have $|A\pth{y}|\ge f$. Then, by definition of $S_y$, it holds that $|S_y|=f$.
Now,
\begin{align*}
|S_x-S_y|&=|S_x-S_x\cap S_y|=|S_x|-|S_x\cap S_y|=f-|S_x\cap S_y|\\
&=|S_y|-|S_x\cap S_y|=|S_y-S_x\cap S_y|\ge |S_y\cap A\pth{x}-S_x\cap S_y|\ge|S_y\cap A\pth{x}-S_x|.
\end{align*}
Thus, $|S_x-S_y|\ge |S_y\cap A\pth{x}-S_x|$.

By definition of $S_x$, for each $i\in S_x-S_y$ and $j\in S_y\cap A\pth{x}-S_x$, at point $x$, we have $h_i^{\prime}(x)\ge h_j^{\prime}(x)$, i.e., $h_i^{\prime}(x)\ge \max_{j\in S_y\cap A\pth{x}-S_x}h_j^{\prime}\pth{x}$. We have
\begin{align}
\nonumber
\sum_{i\in S_x-S_y}h_i^{\prime}\pth{y}-\sum_{i\in S_y\cap A\pth{x}-S_x}h_i^{\prime}\pth{x}
\nonumber
&\overset{(a)}{\ge} \sum_{i\in S_x-S_y}h_i^{\prime}\pth{x}-\sum_{i\in S_y\cap A\pth{x}-S_x}h_i^{\prime}\pth{x}\\
\nonumber
&\ge \sum_{i\in S_x-S_y}\max_{j\in S_y\cap A\pth{x}-S_x}h_j^{\prime}\pth{x}-\sum_{i\in S_y\cap A\pth{x}-S_x}h_i^{\prime}\pth{x}\\
\nonumber
&\overset{(b)}\ge \sum_{i\in  S_y\cap A\pth{x}-S_x}\max_{j\in S_y\cap A\pth{x}-S_x}h_j^{\prime}\pth{x}-\sum_{i\in S_y\cap A\pth{x}-S_x}h_i^{\prime}\pth{x}\\
&\ge 0,
\label{FcasesII}
\end{align}
where $(a)$ holds due to the fact that $h_i^{\prime}\pth{\cdot}$ is non-decreasing and that $y\ge x$,
and $(b)$ holds because $|S_x-S_y|\ge |S_y\cap A\pth{x}-S_x|$
and for $j\in S_y$, $h_j^{\prime}(x)>0$.

Therefore, from (\ref{Fcases}), (\ref{FcasesI}) and (\ref{FcasesII}), we have that $F(y)-F(x)\ge 0$ for $y\ge x$, i.e., $F(\cdot)$ is non-decreasing.\\
\vskip 2\baselineskip

\section*{$G\pth{\cdot}$ is non-decreasing}

Now we show that $G\pth{\cdot}$ is also non-decreasing. Choose any $x\in\mathbb{R}$, and
choose any $y\ge x$. Let $S_y$ and $S_x$ be sets such that $\sum_{i\in B\pth{y}-S_y}h_i^{\prime}\pth{y}$ and $\sum_{i\in B\pth{x}-S_x}h_i^{\prime}\pth{x}$ are maximized, respectively.

\begin{align}
\label{monotoneG}
\nonumber
G(y)-G(x)&=\sum_{i\in B\pth{y}-S_y}h_i^{\prime}\pth{y}-\sum_{i\in B\pth{x}-S_x}h_i^{\prime}\pth{x}\\
\nonumber
&=\sum_{i\in B\pth{y}-S_y-S_x}h_i^{\prime}\pth{y}+\sum_{i\in S_x\cap B\pth{y}-S_y}h_i^{\prime}\pth{y}-\pth{\sum_{i\in B\pth{x}-S_x-S_y}h_i^{\prime}\pth{x}+\sum_{i\in S_y\cap B\pth{x} -S_x}h_i^{\prime}\pth{x}}\\
\nonumber
&=\pth{\sum_{i\in B\pth{y}-S_y-S_x}h_i^{\prime}\pth{y}-\sum_{i\in B\pth{x}-S_x-S_y}h_i^{\prime}\pth{x}}+\pth{\sum_{i\in S_x\cap B\pth{y}-S_y}h_i^{\prime}\pth{y}-\sum_{i\in S_y\cap B\pth{x}-S_x}h_i^{\prime}\pth{x}}\\
\nonumber
&\overset{(a)}{\ge}\pth{\sum_{i\in B\pth{y}-S_y-S_x}h_i^{\prime}\pth{x}-\sum_{i\in B\pth{x}-S_x-S_y}h_i^{\prime}\pth{x}}+\pth{\sum_{i\in S_x\cap B\pth{y}-S_y}h_i^{\prime}\pth{y}-\sum_{i\in S_y\cap B\pth{x}-S_x}h_i^{\prime}\pth{x}}\\
\nonumber
&\overset{(b)}{\ge} \pth{\sum_{i\in B\pth{x}-S_y-S_x}h_i^{\prime}\pth{x}-\sum_{i\in B\pth{x}-S_x-S_y}h_i^{\prime}\pth{x}}+\pth{\sum_{i\in S_x\cap B\pth{y}-S_y}h_i^{\prime}\pth{y}-\sum_{i\in S_y\cap B\pth{x}-S_x}h_i^{\prime}\pth{x}}\\
&\overset{(c)}{=}0+\pth{\sum_{i\in S_x\cap B\pth{y}-S_y}h_i^{\prime}\pth{y}-\sum_{i\in S_y-S_x}h_i^{\prime}\pth{x}}.
\end{align}
Inequality $(a)$ holds due to the fact that $h_i^{\prime}\pth{\cdot}$ is non-decreasing and $x\le y$; inequality $(b)$ follows from the fact that $B\pth{y}\subseteq B\pth{x}$ and $h_i^{\prime}\pth{x}<0$ for each $i\in B\pth{x}$; and equality $(c)$ is true because that $S_y\subseteq B\pth{y}\subseteq B\pth{x}$.

Now consider two cases: (i) $|S_y|<f$ and (ii) $|S_y|=f$.
\paragraph{\bf Case (i): $|S_y|<f$.}
In this case, we have $S_y=B(y)$, and
\begin{align}
\nonumber
\sum_{i\in S_x\cap B\pth{y}-S_y}h_i^{\prime}\pth{y}-\sum_{i\in S_y-S_x}h_i^{\prime}\pth{x}&= \sum_{i\in \O}h_i^{\prime}\pth{y}-\sum_{i\in S_y-S_x}h_i^{\prime}\pth{x}\\
\nonumber
&= 0-\sum_{i\in S_y-S_x}h_i^{\prime}\pth{x}\\
&\ge 0.
\label{Gcase1}
\end{align}

\paragraph{\bf Case (ii): $|S_y|=f$.}
Because $S_y\subseteq B(y)\subseteq B(x)$, if $|S_y|=f$, we have $|B(x)|\ge f$. Then, by definition of $S_x$, it holds that $|S_x|=f$. Now
\begin{align*}
|S_y-S_x|&=|S_y-S_x\cap S_y|=|S_y|-|S_x\cap S_y|=f-|S_x\cap S_y|\\
&=|S_x|-|S_x\cap S_y|=|S_x-S_x\cap S_y|\ge |S_x\cap B\pth{y}-S_x\cap S_y|=|S_x\cap B\pth{y}-S_y|.
\end{align*}
Thus $|S_y-S_x|\ge |S_x\cap B\pth{y}-S_y|$.

By definition of $S_y$, for each $i\in S_y-S_x$ and $j\in S_x\cap B\pth{y}-S_y$, at point $y$, we have $h_i^{\prime}(y)\le \min_{j\in S_x\cap B\pth{y}-S_y}h_j^{\prime}\pth{y}$. We have
\begin{align}
\nonumber
\sum_{i\in S_x\cap B\pth{y}-S_y}h_i^{\prime}\pth{y}-\sum_{i\in S_y-S_x}h_i^{\prime}\pth{x}
&\overset{(a)}{\ge} \sum_{i\in S_x\cap B\pth{y}-S_y}h_i^{\prime}\pth{y}-\sum_{i\in S_y-S_x}h_i^{\prime}\pth{y}\\
\nonumber
&\ge \sum_{i\in S_x\cap B\pth{y}-S_y}h_i^{\prime}\pth{y}-\sum_{i\in S_y-S_x}\min_{j\in S_x\cap B\pth{y}-S_y}h_j^{\prime}\pth{y}\\
\nonumber
&\overset{(b)}{\ge} \sum_{i\in S_x\cap B\pth{y}-S_y}h_i^{\prime}\pth{y}-\sum_{i\in S_x\cap B\pth{y}-S_y}\min_{j\in S_x\cap B\pth{y}-S_y}h_j^{\prime}\pth{y}\\
&\ge 0,
\label{Gcase2}
\end{align}
where $(a)$ holds due to the fact that $h_i^{\prime}\pth{\cdot}$ is non-decreasing and that $y\ge x$, and $(b)$ holds because $|S_y-S_x|\ge |S_x\cap B\pth{y}-S_y|$ and for each $j\in B(y)$, $h_j^{\prime}(y)<0$.

Therefore, from (\ref{monotoneG}), (\ref{Gcase1}) and (\ref{Gcase2}), we have $G(y)-G(x)\ge 0$ for $y\ge x$, i.e., $G(\cdot)$ is non-decreasing.


\eproof
\end{proof}

\section{Proof of Proposition \ref{continuous}}
\label{appendix:continuous}

\begin{proof}
We first show that $F(x)$ is continuous. We will use the non-decreasing nature of $F(\cdot)$ proved above in Proposition \ref{monotone}.

Recall that each $h_i(x)$ is continuously differentiable,
 i.e., $h_i^{\prime}(x)$ is continuous. Then, for every $\epsilon>0$ there exists a $\delta>0$ such that for all $x\in (c-\delta, c+\delta)$ the following holds
{\em for all} $i\in\calN$,
\begin{align}
\label{continousofderivative}
|h_i^{\prime}(x)-h_i^{\prime}(c)|<\epsilon.
\end{align}
To show $F(x)$ is continuous, we need to show that
\begin{align}
|x-c|<\delta ~\Rightarrow ~ |F(x)-F(c)|<\epsilon.
\end{align}
Suppose $|x-c|<\delta $ holds for some $\delta>0$, then $c-\delta<x<c+\delta$. Let $S_{c+\delta}$ and $S_{c}$ be the subsets of $A\pth{c+\delta}$ and $A\pth{c}$, where $|S_{c+\delta}|\le f$ and $|S_{c}|\le f$, such that $\sum_{i\in A\pth{c+\delta}-S_{c+\delta}}h_i^{\prime}\pth{c+\delta}$ and $\sum_{i\in A\pth{c}-S_c}h_i^{\prime}\pth{c}$ are minimized, respectively. Note that $A(c)\subseteq A(c+\delta)$.

We have
\begin{align}
\nonumber
&F(x)-F(c)\overset{(a)}{\le} F\pth{c+\delta}-F\pth{c}\\
\nonumber
&\quad=\sum_{i\in A\pth{c+\delta}-S_{c+\delta}}h_i^{\prime}\pth{c+\delta}-\sum_{i\in A\pth{c}-S_{c}}h_i^{\prime}\pth{c}\\
\nonumber
&\quad=\sum_{i\in A\pth{c+\delta}-S_{c+\delta}-S_c}h_i^{\prime}\pth{c+\delta}+\sum_{i\in A\pth{c+\delta}\cap S_c-S_{c+\delta}}h_i^{\prime}\pth{c+\delta}-\pth{\sum_{i\in A\pth{c}-S_{c+\delta}-S_c}h_i^{\prime}\pth{c}+\sum_{i\in S_{c+\delta}\cap A\pth{c}-S_c}h_i^{\prime}\pth{c}}\\
\nonumber
&\quad\overset{(b)}{=}\sum_{i\in A\pth{c+\delta}-S_{c+\delta}-S_c}h_i^{\prime}\pth{c+\delta}+\sum_{i\in S_c-S_{c+\delta}}h_i^{\prime}\pth{c+\delta}-\pth{\sum_{i\in A\pth{c}-S_{c+\delta}-S_c}h_i^{\prime}\pth{c}+\sum_{i\in S_{c+\delta}\cap A\pth{c}-S_c}h_i^{\prime}\pth{c}}\\
\nonumber
&\quad= \pth{\sum_{i\in A\pth{c+\delta}-S_{c+\delta}-S_c}h_i^{\prime}\pth{c+\delta}-\sum_{i\in A\pth{c}-S_{c+\delta}-S_c}h_i^{\prime}\pth{c}}+\pth{\sum_{i\in S_c-S_{c+\delta}}h_i^{\prime}\pth{c+\delta}-\sum_{i\in S_{c+\delta}\cap A\pth{c}-S_c}h_i^{\prime}\pth{c}}\\
\nonumber
&\quad\overset{(c)}{\le} \pth{\sum_{i\in A\pth{c+\delta}-S_{c+\delta}-S_c}h_i^{\prime}\pth{c+\delta}-\sum_{i\in A\pth{c+\delta}-S_{c+\delta}-S_c}h_i^{\prime}\pth{c}}+\pth{\sum_{i\in S_c-S_{c+\delta}}h_i^{\prime}\pth{c+\delta}-\sum_{i\in S_{c+\delta}-S_c}h_i^{\prime}\pth{c}}\\
&\quad\overset{(d)}{\le} \pth{\sum_{i\in A\pth{c+\delta}-S_{c+\delta}-S_c}h_i^{\prime}\pth{c+\delta}-\sum_{i\in A\pth{c+\delta}-S_{c+\delta}-S_c}h_i^{\prime}\pth{c}}+\pth{\sum_{i\in S_{c+\delta}-S_c}h_i^{\prime}\pth{c+\delta}-\sum_{i\in S_{c+\delta}-S_c}h_i^{\prime}\pth{c}},
\label{e1}
\end{align}
where $(a)$ holds due to monotonicity of $F(\cdot)$; equality $(b)$ is true since $S_c\subseteq A(c)\subseteq A(c+\delta)$; inequality $(c)$ follows from the fact that $h_i^{\prime}\pth{c}\le 0$ for each $i\notin A\pth{c}$ and $A\pth{c}\subseteq A\pth{c+\delta}$; and inequality $(d)$
holds because, as shown next,
\begin{align}
\label{eee}
\sum_{i\in S_c-S_{c+\delta}}h_i^{\prime}\pth{c+\delta}\le \sum_{i\in S_{c+\delta}-S_c}h_i^{\prime}\pth{c+\delta}
\end{align}
Now, observing that $|S_{c}|\le |S_{c+\delta}|$, we get
\begin{align*}
| S_c-S_{c+\delta}|&=| S_c-S_c\cap S_{c+\delta}|=| S_c|-|S_c\cap S_{c+\delta}|\le |S_{c+\delta}|-|S_c\cap S_{c+\delta}|=|S_{c+\delta}-S_c|.
\end{align*}
 In addition, by definition of $S_c$, for each $i\in S_c-S_{c+\delta}$ and $j\in S_{c+\delta}-S_c$, $h_i^{\prime}\pth{c+\delta}\le h_j^{\prime}\pth{c+\delta}$. Then,
\begin{align*}
\sum_{i\in S_c-S_{c+\delta}}h_i^{\prime}\pth{c+\delta}&\le \sum_{i\in S_c-S_{c+\delta}}\min_{j\in S_{c+\delta}-S_c}h_j^{\prime}\pth{c+\delta}\\
&\overset{(a)}{\le} \sum_{i\in S_{c+\delta}-S_c}\min_{j\in S_{c+\delta}-S_c}h_j^{\prime}\pth{c+\delta}\\
&\le \sum_{i\in S_{c+\delta}-S_c}h_i^{\prime}\pth{c+\delta},
\end{align*}
where inequality $(a)$ is true because $\left | S_c-S_{c+\delta}\right |\le \left |S_{c+\delta}-S_c\right |$ and $\min_{j\in S_{c+\delta}-S_c}h_j^{\prime}\pth{c+\delta}>0$. This proves (\ref{eee}).

Then we have
\begin{align*}
F(x)-F(c)&\le \pth{\sum_{i\in A\pth{c+\delta}-S_{c+\delta}-S_c}h_i^{\prime}\pth{c+\delta}-\sum_{i\in A\pth{c+\delta}-S_{c+\delta}-S_c}h_i^{\prime}\pth{c}}\\
&\quad+\pth{\sum_{i\in S_{c+\delta}-S_c}h_i^{\prime}\pth{c+\delta}-\sum_{i\in S_{c+\delta}-S_c}h_i^{\prime}\pth{c}}\text{~~~~~~~~~~~~~~due to (\ref{e1})}\\
&\overset{(a)}= \sum_{i\in A\pth{c+\delta}-S_c}\pth{h_i^{\prime}\pth{c+\delta}-h_i^{\prime}\pth{c}}\\
&\overset{(b)}{<} \left | A\pth{c+\delta}-S_c\right |\epsilon\\
&< n\epsilon.
\end{align*}
Equality (a) follows
because $(A\pth{c+\delta}-S_{c+\delta}-S_c)\cup (S_{c+\delta}-S_c)=A\pth{c+\delta}-S_c$ and sets $A\pth{c+\delta}-S_{c+\delta}-S_c$ and $S_{c+\delta}-S_c$ are disjoint. Inequality $(b)$ follows from (\ref{continousofderivative}).

By an analogous argument, we can also show that
for any $x\in (c-\delta, c+\delta)$, $$F(x)-F(c)> -n\epsilon.$$ For completeness, we present the proof as follows.


  Let $S_{c-\delta}$ and $S_{c}$ be the subsets of $A\pth{c-\delta}$ and $A\pth{c}$, where $|S_{c-\delta}|\le f$ and $|S_{c}|\le f$, such that $\sum_{i\in A\pth{c-\delta}-S_{c-\delta}}h_i^{\prime}\pth{c-\delta}$ and $\sum_{i\in A\pth{c}-S_c}h_i^{\prime}\pth{c}$ are minimized, respectively.

\begin{align*}
&F(x)-F(c)\ge F(c-\delta)-F(c)\\
&=\sum_{i\in A\pth{c-\delta}-S_{c-\delta}}h_i^{\prime}\pth{c-\delta}-\sum_{i\in A\pth{c}-S_c}h_i^{\prime}\pth{c}\\
&=\pth{\sum_{i\in A\pth{c-\delta}-S_{c-\delta}-S_c}h_i^{\prime}\pth{c-\delta}+\sum_{i\in S_c\cap A\pth{c-\delta}-S_{c-\delta}}h_i^{\prime}\pth{c-\delta}}-\pth{\sum_{i\in A\pth{c}-S_{c-\delta}-S_c}h_i^{\prime}\pth{c}+\sum_{i\in S_{c-\delta}\cap A(c)-S_c}h_i^{\prime}\pth{c}}\\
&=\pth{\sum_{i\in A\pth{c-\delta}-S_{c-\delta}-S_c}h_i^{\prime}\pth{c-\delta}-\sum_{i\in A\pth{c}-S_{c-\delta}-S_c}h_i^{\prime}\pth{c}}+\pth{\sum_{i\in S_c\cap A\pth{c-\delta}-S_{c-\delta}} h_i^{\prime}\pth{c-\delta}-\sum_{i\in S_{c-\delta}\cap A(c)-S_c}h_i^{\prime}\pth{c}}\\
&\overset{(a)}{\ge} \pth{\sum_{i\in A\pth{c}-S_{c-\delta}-S_c}h_i^{\prime}\pth{c-\delta}-\sum_{i\in A\pth{c}-S_{c-\delta}-S_c}h_i^{\prime}\pth{c}}+\pth{\sum_{i\in S_c-S_{c-\delta}} h_i^{\prime}\pth{c-\delta}-\sum_{i\in S_{c-\delta}\cap A(c)-S_c}h_i^{\prime}\pth{c}}\\
&\overset{(b)}{=}\pth{\sum_{i\in A\pth{c}-S_{c-\delta}-S_c}h_i^{\prime}\pth{c-\delta}-\sum_{i\in A\pth{c}-S_{c-\delta}-S_c}h_i^{\prime}\pth{c}}+\pth{\sum_{i\in S_c-S_{c-\delta}} h_i^{\prime}\pth{c-\delta}-\sum_{i\in S_{c-\delta}-S_c}h_i^{\prime}\pth{c}}\\
&=\sum_{i\in A\pth{c}-S_{c-\delta}-S_c}\pth{h_i^{\prime}\pth{c-\delta}-h_i^{\prime}\pth{c}}+\pth{\sum_{i\in S_c-S_{c-\delta}} h_i^{\prime}\pth{c-\delta}-\sum_{i\in S_{c-\delta}-S_c}h_i^{\prime}\pth{c}}.
\end{align*}
Inequality $(a)$ follows from the fact that $h_i^{\prime}\pth{c-\delta}\le 0$ for each $i\notin A\pth{c-\delta}$ and $A\pth{c-\delta}\subseteq A(c)$. Equality $(b)$ is true because that $S_{c-\delta}\subseteq A(c-\delta)\subseteq A(c)$. Now, observing that $|S_{c-\delta}|\le |S_c|$, we get
 \begin{align}
 \label{lbF}
 |S_c-S_{c-\delta}|=|S_c|-|S_c\cap S_{c-\delta}|\ge |S_{c-\delta}|-|S_c\cap S_{c-\delta}|=|S_{c-\delta}-S_{c}|.
 \end{align}

In addition, we have
\begin{align}
\label{lb continuous F}
\nonumber
\sum_{i\in S_c-S_{c-\delta}} h_i^{\prime}\pth{c-\delta}-\sum_{i\in S_{c-\delta}-S_c}h_i^{\prime}\pth{c}&
\overset{(a)}{\ge} \sum_{i\in S_c-S_{c-\delta}} h_i^{\prime}\pth{c-\delta}-\sum_{i\in S_{c-\delta}-S_c}\min_{j\in S_c-S_{c-\delta}}h_j^{\prime}\pth{c}\\
\nonumber
&\overset{(b)}{\ge} \sum_{i\in S_c-S_{c-\delta}} h_i^{\prime}\pth{c-\delta}-\sum_{i\in S_c-S_{c-\delta}}\min_{j\in S_c-S_{c-\delta}}h_j^{\prime}\pth{c}\\
\nonumber
&\ge \sum_{i\in S_c-S_{c-\delta}} h_i^{\prime}\pth{c-\delta}-\sum_{i\in S_c-S_{c-\delta}}h_i^{\prime}\pth{c}\\
&=\sum_{i\in S_c-S_{c-\delta}} \pth{h_i^{\prime}\pth{c-\delta}-h_i^{\prime}\pth{c}}.
\end{align}
Inequality $(a)$ holds due to the fact that for each $i\in S_{c-\delta}-S_c$, $h_i^{\prime}(c)\le \min_{j\in S_c-S_{c-\delta}}h_j^{\prime}\pth{c}$. Inequality $(b)$ follows from (\ref{lbF}) and the fact that $\min_{j\in S_c-S_{c-\delta}}h_j^{\prime}\pth{c}>0$.

Thus
\begin{align*}
F(x)-F(c)&\ge \sum_{i\in A\pth{c}-S_{c-\delta}-S_c}\pth{h_i^{\prime}\pth{c-\delta}-h_i^{\prime}\pth{c}}+\pth{\sum_{i\in S_c-S_{c-\delta}} h_i^{\prime}\pth{c-\delta}-\sum_{i\in S_{c-\delta}-S_c}h_i^{\prime}\pth{c}}\\
&\ge \sum_{i\in A\pth{c}-S_{c-\delta}-S_c}\pth{h_i^{\prime}\pth{c-\delta}-h_i^{\prime}\pth{c}}
+\sum_{i\in S_c-S_{c-\delta}} \pth{h_i^{\prime}\pth{c-\delta}-h_i^{\prime}\pth{c}}~~~\text{from (\ref{lb continuous F})}\\
&=\sum_{i\in A\pth{c}-S_{c-\delta}}\pth{h_i^{\prime}\pth{c-\delta}-h_i^{\prime}\pth{c}}\\
&> -n\epsilon ~~~~\text{from (\ref{continousofderivative})}
\end{align*}

\noindent
Then we have, for any $\epsilon_0=n\epsilon>0$, there exists $\delta>0$ such that
\begin{align*}
|x-c|<\delta ~\Rightarrow ~ |F(x)-F(c)|<\epsilon_0.
\end{align*}
Therefore, $F\pth{\cdot}$ is continuous.

%


\section*{Continuity of $G\pth{\cdot}$}
\label{appendix:continuity:G}

To show $G(\cdot)$ is continuous, we need to show that
\begin{align}
|x-c|<\delta ~\Rightarrow ~ |G(x)-G(c)|<\epsilon.
\end{align}
Suppose $|x-c|<\delta $ holds for some $\delta>0$, then $c-\delta<x<c+\delta$. Let $S_{c+\delta}$ and $S_{c}$ be the subsets of $B\pth{c+\delta}$ and $B\pth{c}$, where $|S_{c+\delta}|\le f$ and $|S_{c}|\le f$, such that $\sum_{i\in B\pth{c+\delta}-S_{c+\delta}}h_i^{\prime}\pth{c+\delta}$ and $\sum_{i\in B\pth{c}-S_c}h_i^{\prime}\pth{c}$ are maximized, respectively.

We have
\begin{align*}
&G(x)-G(c)\le G\pth{c+\delta}-G\pth{c}~~\text{because $G(\cdot)$ is non-decreasing}\\
&=\sum_{i\in B\pth{c+\delta}-S_{c+\delta}}h_i^{\prime}\pth{c+\delta}-\sum_{i\in B\pth{c}-S_{c}}h_i^{\prime}\pth{c}\\
&=\sum_{i\in B\pth{c+\delta}-S_{c+\delta}-S_c}h_i^{\prime}\pth{c+\delta}+\sum_{i\in S_c\cap B\pth{c+\delta}-S_{c+\delta}}h_i^{\prime}\pth{c+\delta}-\pth{\sum_{i\in B\pth{c}-S_{c+\delta}-S_c}h_i^{\prime}\pth{c}+\sum_{i\in S_{c+\delta}\cap B\pth{c}-S_c}h_i^{\prime}\pth{c}}\\
&= \pth{\sum_{i\in B\pth{c+\delta}-S_{c+\delta}-S_c}h_i^{\prime}\pth{c+\delta}-\sum_{i\in B\pth{c}-S_{c+\delta}-S_c}h_i^{\prime}\pth{c}}+\pth{\sum_{i\in S_c\cap B\pth{c+\delta}-S_{c+\delta}}h_i^{\prime}\pth{c+\delta}-\sum_{i\in S_{c+\delta}\cap B\pth{c}-S_c}h_i^{\prime}\pth{c}}\\
&\overset{(a)}{\le} \pth{\sum_{i\in B\pth{c}-S_{c+\delta}-S_c}h_i^{\prime}\pth{c+\delta}-\sum_{i\in B\pth{c}-S_{c+\delta}-S_c}h_i^{\prime}\pth{c}}+\pth{\sum_{i\in S_c-S_{c+\delta}}h_i^{\prime}\pth{c+\delta}-\sum_{i\in S_{c+\delta}-S_c}h_i^{\prime}\pth{c}}\\
&=\sum_{i\in B\pth{c}-S_{c+\delta}-S_c}\pth{h_i^{\prime}\pth{c+\delta}-h_i^{\prime}\pth{c}}+\pth{\sum_{i\in S_c-S_{c+\delta}}h_i^{\prime}\pth{c+\delta}-\sum_{i\in S_{c+\delta}-S_c}h_i^{\prime}\pth{c}},
\end{align*}
where inequality $(a)$ follows from the fact that $h_i^{\prime}\pth{c+\delta}\ge 0$ for each $i\notin B\pth{c+\delta}$ and $B\pth{c}\supseteq B\pth{c+\delta}\supseteq S_{c+\delta}$. Next we show
\begin{align}
\label{GpassUp}
\sum_{i\in S_{c+\delta}-S_c}h_i^{\prime}\pth{c}\ge \sum_{i\in S_c-S_{c+\delta}}h_i^{\prime}\pth{c}.
\end{align}
For each $i\notin S_c$, it holds that $h_i^{\prime}(c)\ge \max_{j\in S_c-S_{c+\delta}} h_j^{\prime}(c)$. Now, observing that $|S_c|\ge |S_{c+\delta}|$, we get
$$|S_c-S_{c+\delta}|=|S_c|-|S_c\cap S_{c+\delta}|\ge |S_{c+\delta}|-|S_c\cap S_{c+\delta}|=|S_{c+\delta}-S_{c}|.$$
Thus, because $\max_{j\in S_c-S_{c+\delta}} h_j^{\prime}(c)<0$,
\begin{align*}
\sum_{i\in S_{c+\delta}-S_c}h_i^{\prime}\pth{c}&\ge \sum_{i\in S_{c+\delta}-S_c}\max_{j\in S_c-S_{c+\delta}} h_j^{\prime}(c)\ge \sum_{i\in S_c-S_{c+\delta}}\max_{j\in S_c-S_{c+\delta}} h_j^{\prime}(c)\ge \sum_{i\in S_c-S_{c+\delta}} h_i^{\prime}(c).
\end{align*}

So we have
\begin{align*}
G(x)-G(c)&\le
\sum_{i\in B\pth{c}-S_{c+\delta}-S_c}\pth{h_i^{\prime}\pth{c+\delta}-h_i^{\prime}\pth{c}}+\pth{\sum_{i\in S_c-S_{c+\delta}}h_i^{\prime}\pth{c+\delta}-\sum_{i\in S_c-S_{c+\delta}}h_i^{\prime}\pth{c}}\\
&=\sum_{i\in B\pth{c}-S_{c+\delta}}\pth{h_i^{\prime}\pth{c+\delta}-h_i^{\prime}\pth{c}}\\
&< n\epsilon ~~~~\text{due to (\ref{continousofderivative})}
\end{align*}
\vskip 2\baselineskip

Now we show that for any $x\in (c-\delta, c+\delta)$, it follows that $G(x)-G(c)\ge -n\epsilon$.  Let $S_{c-\delta}$ and $S_{c}$ be the subsets of $B\pth{c-\delta}$ and $B\pth{c}$, where $|S_{c-\delta}|\le f$ and $|S_{c}|\le f$, such that $\sum_{i\in B\pth{c-\delta}-S_{c-\delta}}h_i^{\prime}\pth{c-\delta}$ and $\sum_{i\in B\pth{c}-S_c}h_i^{\prime}\pth{c}$ are maximized, respectively. Note that $B(c-\delta)\supseteq B(c)\supseteq S_c$.

\begin{align*}
&G(x)-G(c)\ge G(c-\delta)-G(c)~~\text{because $G(\cdot)$ is non-decreasing}\\
&=\sum_{i\in B\pth{c-\delta}-S_{c-\delta}}h_i^{\prime}\pth{c-\delta}-\sum_{i\in B\pth{c}-S_c}h_i^{\prime}\pth{c}\\
&=\pth{\sum_{i\in B\pth{c-\delta}-S_{c-\delta}-S_c}h_i^{\prime}\pth{c-\delta}+\sum_{i\in S_c\cap B\pth{c-\delta}-S_{c-\delta}}h_i^{\prime}\pth{c-\delta}}-\pth{\sum_{i\in B\pth{c}-S_{c-\delta}-S_c}h_i^{\prime}\pth{c}+\sum_{i\in S_{c-\delta}\cap B\pth{c} -S_c}h_i^{\prime}\pth{c}}\\
&=\pth{\sum_{i\in B\pth{c-\delta}-S_{c-\delta}-S_c}h_i^{\prime}\pth{c-\delta}+\sum_{i\in S_c-S_{c-\delta}}h_i^{\prime}\pth{c-\delta}}-\pth{\sum_{i\in B\pth{c}-S_{c-\delta}-S_c}h_i^{\prime}\pth{c}+\sum_{i\in S_{c-\delta}\cap B\pth{c} -S_c}h_i^{\prime}\pth{c}}\\
&=\pth{\sum_{i\in B\pth{c-\delta}-S_{c-\delta}-S_c}h_i^{\prime}\pth{c-\delta}-\sum_{i\in B\pth{c}-S_{c-\delta}-S_c}h_i^{\prime}\pth{c}}+\pth{\sum_{i\in S_c-S_{c-\delta}} h_i^{\prime}\pth{c-\delta}-\sum_{i\in S_{c-\delta}\cap B\pth{c}-S_c}h_i^{\prime}\pth{c}}\\
&\overset{(a)}{\ge} \pth{\sum_{i\in B\pth{c-\delta}-S_{c-\delta}-S_c}h_i^{\prime}\pth{c-\delta}-\sum_{i\in B\pth{c-\delta}-S_{c-\delta}-S_c}h_i^{\prime}\pth{c}}+\pth{\sum_{i\in S_c-S_{c-\delta}} h_i^{\prime}\pth{c-\delta}-\sum_{i\in S_{c-\delta}-S_c}h_i^{\prime}\pth{c}}\\
&=\sum_{i\in B\pth{c-\delta}-S_{c-\delta}-S_c}\pth{h_i^{\prime}\pth{c-\delta}-h_i^{\prime}\pth{c}}+\pth{\sum_{i\in S_c-S_{c-\delta}} h_i^{\prime}\pth{c-\delta}-\sum_{i\in S_{c-\delta}-S_c}h_i^{\prime}\pth{c}},
\end{align*}
where inequality $(a)$ follows because $B(c)\subseteq B(c-\delta)$ and for each $i\notin B(c)$, $h_i^{\prime}(c)\ge 0$. Now, observing that $|S_c|\le |S_{c-\delta}|$, we get
$$|S_c-S_{c-\delta}|=|S_c|-|S_c\cap S_{c-\delta}|\le |S_{c-\delta}|-|S_c\cap S_{c-\delta}|=|S_{c-\delta}-S_c|,$$
and for each $i\notin S_{c-\delta}$,
$$h_i^{\prime}(c-\delta)\ge \max_{j\in S_{c-\delta}-S_c} h_j^{\prime}(c-\delta).$$
Thus,
\begin{align*}
G(x)-G(c)&\ge \sum_{i\in B\pth{c-\delta}-S_{c-\delta}-S_c}\pth{h_i^{\prime}\pth{c-\delta}-h_i^{\prime}\pth{c}}+\pth{\sum_{i\in S_c-S_{c-\delta}} h_i^{\prime}\pth{c-\delta}-\sum_{i\in S_{c-\delta}-S_c}h_i^{\prime}\pth{c}}\\
&\ge \sum_{i\in B\pth{c-\delta}-S_{c-\delta}-S_c}\pth{h_i^{\prime}\pth{c-\delta}-h_i^{\prime}(c)}+\pth{\sum_{i\in S_c-S_{c-\delta}} \max_{j\in S_{c-\delta}-S_c}h_j^{\prime}(c-\delta)-\sum_{i\in S_{c-\delta}-S_c}h_i^{\prime}\pth{c}}\\
&\overset{(a)}{\ge} \sum_{i\in B\pth{c-\delta}-S_{c-\delta}-S_c}\pth{h_i^{\prime}\pth{c-\delta}-h_i^{\prime}(c)}+\pth{\sum_{i\in S_{c-\delta}-S_c} \max_{j\in S_{c-\delta}-S_c}h_j^{\prime}(c-\delta)-\sum_{i\in S_{c-\delta}-S_c}h_i^{\prime}\pth{c}}\\
&\ge \sum_{i\in B\pth{c-\delta}-S_{c-\delta}-S_c}\pth{h_i^{\prime}\pth{c-\delta}-h_i^{\prime}(c)}+\pth{\sum_{i\in S_{c-\delta}-S_c} h_i^{\prime}(c-\delta)-\sum_{i\in S_{c-\delta}-S_c}h_i^{\prime}\pth{c}}\\
&=\sum_{i\in B\pth{c-\delta}-S_c}\pth{h_i^{\prime}\pth{c-\delta}-h_i^{\prime}(c)}\\
&>-n\epsilon ~~~~\text{due to (\ref{continousofderivative})}
\end{align*}
Inequality $(a)$ follows because $|S_{c-\delta}-S_c|\ge |S_c-S_{c-\delta}|$ and $\max_{j\in S_{c-\delta}-S_c}h_j^{\prime}(c-\delta)<0$.

$h_i^{\prime}(c)\ge 0$ for each $i\notin B(c)$ and that $B(c)\subseteq B(c-\delta)$.

Thus we have, for any $\epsilon_0=n\epsilon$, there exists $\delta$ such that
\begin{align*}
|x-c|<\delta ~\Rightarrow ~ |G(x)-G(c)|<\epsilon_0.
\end{align*}
Therefore, $G\pth{\cdot}$ is continuous.

\eproof
\end{proof}

\section{Proof of Proposition \ref{Continous and Monotone of the K-th Rank Functions}} \label{Proof of Continous and Monotone of the K-th Rank Functions}

\begin{proof}
We first show that $g_K(x)$ is non-decreasing. \\

For $1\le K\le n$, define $C\pth{x, K}\subseteq \calV$ as a collection of agents such that $|C\pth{x, K}|=K$ and
\[
h_i^{\prime}(x)\ge h_j^{\prime}(x),
\]
for each $i\in C\pth{x, K}$ and for each  $j\notin C\pth{x, K}$.

Let $x_1, x_2\in \reals$ such that $x_1<x_2$. To show that $g_K(x)$ is non-decreasing, we need to show that $g_K(x_1)\le g_K(x_2)$. Suppose, on the contrary, that $g_K(x_1)> g_K(x_2)$. That is, $\min_{k\in C\pth{x_1, K}}h_k^{\prime}\pth{x_1}>\min_{k\in C\pth{x_2, K}}h_k^{\prime}\pth{x_2}$, by definition of sets $C\pth{x_1, K}$ and $C\pth{x_2, K}$. Letting $k_1\in \argmin_{k\in C\pth{x_1, K}}h_k^{\prime}\pth{x_1}$ and $k_2\in \argmin_{k\in C\pth{x_2, K}}h_k^{\prime}\pth{x_2}$, the previous consequence can be rewritten as $h_{k_2}^{\prime}\pth{x_2}<h_{k_1}^{\prime}\pth{x_1}$ and $k_1\not=k_2$. The claim that $k_1\not=k_2$ follows from the fact that $h_i^{\prime}$ is non-decreasing for each $i$.

Thus, for each $k\in C\pth{x_1, K}$, we have
\begin{align}
\nonumber
h_{k_2}^{\prime}\pth{x_2}<h_{k_1}^{\prime}(x_1)&\le h_k^{\prime}(x_1)\quad \text{by definition of $k_1$}\\
&\le h_k^{\prime}(x_2) \quad \text{since $h_k^{\prime}\pth{\cdot}$ is non-decreasing}
\label{rank i}
\end{align}
Thus $h_{k_2}^{\prime}\pth{x_2}<h_k^{\prime}\pth{x_2}$ for each $k\in C\pth{x_1, K}$. By definition of $C\pth{x_2, K}$, it follows that $C\pth{x_1, K}\subseteq C\pth{x_2, K}$. In addition, it must be that $k_2\notin C\pth{x_1, K}$; otherwise, by (\ref{rank i}) we have $h_{k_2}^{\prime}\pth{x_2}<h_{k_2}^{\prime}\pth{x_2}$, a contradiction. Thus $C\pth{x_1, K}\cup \{k_2\}\subseteq C\pth{x_2, K}$.

Recall that $|C\pth{x, K}|=K$ for any $x$ and any $1\le K\le n$. Then,
\begin{align}
\label{cardinality}
K=|C\pth{x_2, K}|\ge |C\pth{x_1, K}\cup \{k_2\}|=|C\pth{x_1, K}|+|\{k_2\}|=K+1,
\end{align}
a contradiction.

Therefore, $g_K(x_1)\le g_K(x_2)$ and $g_K\pth{\cdot}$ is non-decreasing. \\

Recall that each $h_i(x)$ is continuously differentiable.  Thus function $h_i^{\prime}(x)$ exists and is continuous, i.e., for all $i$ and any $\epsilon>0$, there exists $\delta>0$ such that
\[
x\in (c-\delta, c+\delta) \Rightarrow ~ |h_i^{\prime}(x)-h_i^{\prime}(c)|\le \epsilon.
\]
Now we show that function $g_K\pth{\cdot}$ is continuous. In particular, we show that  $\forall~\epsilon>0$, $\exists~ \delta>0$ such that
\begin{align*}
|x-c|<\delta \Rightarrow |g_K\pth{x}-g_K\pth{c}|<\epsilon.
\end{align*}

\vskip 2\baselineskip

Let $k_1\in \argmin_{k\in C\pth{c, K}}h_k^{\prime}\pth{c}$, $k_2\in \argmin_{k\in C\pth{c+\delta, K}}h_k^{\prime}\pth{c+\delta}$ and $k_3\in \argmin_{k\in C\pth{c-\delta, K}}h_k^{\prime}\pth{c-\delta}$. We first prove that $x<c+\delta \Rightarrow g_K\pth{x}< g_K\pth{c}+\epsilon$. We consider two cases: (i) $h_{k_1}^{\prime}\pth{c+\delta}\ge h_{k_2}^{\prime}\pth{c+\delta}$ and (ii) $h_{k_1}^{\prime}\pth{c+\delta}< h_{k_2}^{\prime}\pth{c+\delta}$, respectively.

\paragraph{{\bf Case 1}: $h_{k_1}^{\prime}\pth{c+\delta}\ge h_{k_2}^{\prime}\pth{c+\delta}$.}
\begin{align*}
g_K\pth{x}&\le g_K\pth{c+\delta}\quad \text{by monotonicity of $g_K\pth{\cdot}$}\\
&=h_{k_2}^{\prime}\pth{c+\delta} \quad \text{by definition of $g_K\pth{c+\delta}$}\\
&\le h_{k_1}^{\prime}\pth{c+\delta} \quad \text{by assumption}\\
&< h_{k_1}^{\prime}\pth{c}+\epsilon \quad \text{by continuity of $h_{k_1}^{\prime}\pth{\cdot}$}\\
&=g_K\pth{c}+\epsilon \quad \text{by definition of $g_K\pth{c}$}.
\end{align*}

\paragraph{{\bf Case 2}: $h_{k_1}^{\prime}\pth{c+\delta}< h_{k_2}^{\prime}\pth{c+\delta}$.}
We observe that there exists $k^*$ such that $k^*\notin C\pth{c, K}$ but $k^*\in C\pth{c+\delta, K}$. Note that it is possible that $k^*=k_2$. If this is not true, then for each $k\in C\pth{c+\delta, K}$, it also holds that $k\in C\pth{c, K}$, i.e., $C\pth{c+\delta, K}\subseteq C\pth{c, K}$. On the other hand, we know $k_1\in C\pth{c, K}$ but $k_1\notin C\pth{c+\delta, K}$, which follows by the assumption of case 2 and the definition of $k_2$. Thus we have $C\pth{c+\delta, K}\cup \{k_1\}\subseteq C\pth{c, K}$. Similar as (\ref{cardinality}), we will arrive at a contradiction, and the claim follows.

With this observation, we have
\begin{align*}
g_K\pth{x}&\le g_K\pth{c+\delta}\quad \text{by monotonicity of $g_K\pth{\cdot}$}\\
&=h_{k_2}^{\prime}\pth{c+\delta} \quad \text{by definition of $g_K\pth{c+\delta}$}\\
&\le h_{k^*}^{\prime}\pth{c+\delta} \quad \text{by the fact that $k^*\in C\pth{c+\delta, K}$}\\
&< h_{k^*}^{\prime}\pth{c}+\epsilon \quad \text{by continuity of $h_{k^*}^{\prime}\pth{\cdot}$}\\
&\le h_{k_1}^{\prime}\pth{c}+\epsilon \quad \text{by the fact that $k^*\notin C\pth{c, K}$}\\
&=g_K\pth{c}+\epsilon \quad \text{by definition of $g_K\pth{c}$}.
\end{align*}

Thus, we have shown that $x<c+\delta \Rightarrow g_K\pth{x}< g_K\pth{c}+\epsilon$.
\vskip 2\baselineskip

Now we show that $x>c-\delta \Rightarrow g_K\pth{x}> g_K\pth{c}-\epsilon$. Recall that $k_1\in \argmin_{k\in C\pth{c, K}}h_k^{\prime}\pth{c}$ and $k_3\in \argmin_{k\in C\pth{c-\delta, K}}h_k^{\prime}\pth{c-\delta}$. We consider two cases: (i) $h_{k_1}^{\prime}\pth{c-\delta}\le h_{k_3}^{\prime}\pth{c-\delta}$ and (ii) $h_{k_1}^{\prime}\pth{c-\delta}> h_{k_3}^{\prime}\pth{c-\delta}$, respectively.

\paragraph{{\bf Case 1}: $h_{k_1}^{\prime}\pth{c-\delta}\le h_{k_3}^{\prime}\pth{c-\delta}$.}
\begin{align*}
g_K\pth{x}&\ge g_K\pth{c-\delta}\quad \text{by monotonicity of $g_K\pth{\cdot}$}\\
&=h_{k_3}^{\prime}\pth{c-\delta} \quad \text{by definition of $g_K\pth{c-\delta}$}\\
&\ge h_{k_1}^{\prime}\pth{c-\delta} \quad \text{by assumption}\\
&> h_{k_1}^{\prime}\pth{c}-\epsilon \quad \text{by continuity of $h_{k_1}^{\prime}\pth{\cdot}$}\\
&=g_K\pth{c}-\epsilon \quad \text{by definition of $g_K\pth{c}$}.
\end{align*}

\paragraph{{\bf Case 2}: $h_{k_1}^{\prime}\pth{c-\delta}> h_{k_3}^{\prime}\pth{c-\delta}$.}

If $k_3\in C\pth{c, K}$,
\begin{align*}
g_K\pth{x}&\ge g_K\pth{c-\delta}\quad \text{by monotonicity of $g_K\pth{\cdot}$}\\
&=h_{k_3}^{\prime}\pth{c-\delta} \quad \text{by definition of $g_K\pth{c-\delta}$}\\
&> h_{k_3}^{\prime}\pth{c}-\epsilon \quad \text{by continuity of $h_{k_3}^{\prime}\pth{\cdot}$}\\
&\ge h_{k_1}^{\prime}\pth{c}-\epsilon \quad \text{by definition of $C\pth{c, K}$ and the fact that $k_3\in C\pth{c, K}$}\\
&=g_K\pth{c}-\epsilon \quad \text{by definition of $g_K\pth{c}$}.
\end{align*}

If $k_3\notin C\pth{c, K}$, then there exists $k^*$ such that $k^*\notin C(c-\delta, K)$ and $k^*\in C(c, K)$. If this is not true, then for each $k$ such that $k\in C(c, K)$, it also holds that $k\in C(c-\delta, K)$, i.e., $C\pth{c, K}\subseteq C\pth{c-\delta, K}$.  By assumption, $k_3\notin C\pth{c, K}$. Thus, we get $C\pth{c, K}\cup \{k_3\}\subseteq C\pth{c-\delta, K}.$ Similar as (\ref{cardinality}), we will arrive at a contradiction, and the claim follows.

With this observation, we have
\begin{align*}
g_K\pth{x}&\ge g\pth{c-\delta}\quad \text{by monotonicity of $g_K\pth{\cdot}$}\\
&=h_{k_3^{\prime}}\pth{c-\delta} \quad \text{by definition of $g_K\pth{c-\delta}$}\\
&\ge h_{k^*}^{\prime}\pth{c-\delta} \quad \text{by the fact that $k^*\notin C\pth{c-\delta, K}$}\\
&> h_{k^*}^{\prime}\pth{c}-\epsilon \quad \text{by continuity of $h_{k^*}^{\prime}\pth{\cdot}$}\\
&\ge h_{k_1}^{\prime}\pth{c}-\epsilon \quad \text{by the fact that $k^*\in C\pth{c, K}$}\\
&=g_K\pth{c}-\epsilon \quad \text{by definition of $g_K\pth{c}$}.
\end{align*}

Thus, $x>c-\delta \Rightarrow g_K\pth{x}> g_K\pth{c}-\epsilon$.\\

Therefore, we have shown that  $\forall~\epsilon>0$, $\exists~ \delta>0$ such that
\begin{align*}
|x-c|<\delta \Rightarrow |g_K\pth{x}-g_K\pth{c}|<\epsilon,
\end{align*}
i.e., $g_K\pth{\cdot}$ is continuous.
\eproof
\end{proof}

\section{Proof of Lemma \ref{alternative well-definedOutPut}}\label{appendix: alternative well-definedOutPut}
\begin{proof}
If there exists $x\in\mathbb{R}$ that satisfies equation (\ref{alg1.3}) in
Algorithm 2, then the algorithm will not return $\bot$. Thus,
to prove this lemma, it suffices to show that there exists $x\in\mathbb{R}$ that satisfies equation (\ref{alg1.3}).
Consider the multiset of admissible functions $\{h_1(x),h_2(x),\cdots,h_n(x)\}$ obtained by a non-faulty
agent in Step 1 of Algorithm 2.
Define $X_i =\arg\max_{x\in\mathbb{R}} h_i(x)$.
Let $\max X_i$ and $\min X_i$ denote the largest and smallest values in $X_i$, respectively.
Sort the above $n$ functions $h_i(x)$ in an {\em increasing} order of their $\max X_i$ values, breaking
ties arbitrarily. Let $i_0$ denote the $f+1$-th agent in this sorted order (i.e., $i_0$ has
the $f+1$-th smallest value in the above sorted order).
Similarly,
sort the functions $h_i(x)$ in an {\em decreasing} order of $\min X_i$ values, breaking
ties arbitrarily. Let $j_0$ denote the $f+1$-th agent in this sorted order (i.e., $j_0$ has
the $f+1$-th largest value in the above sorted order).

Consider $x_1\in X_{i_0}$ and $x_2\in X_{j_0}$. By the choice of $x_1$ and the definition of $i_0$, at most $f$ values in $\{h_1^{\prime}(x_1), h_2^{\prime}(x_1), \cdots, h_n^{\prime}(x_1)\}$ can be positive. Recall that $$g_K(x)\triangleq K^{th} ~\text{largest value in the set}~ \{h_1^{\prime}(x), h_2^{\prime}(x), \cdots, h_n^{\prime}(x)\}.$$ Thus we have $g_K(x_1)\le 0$ for $K=f+1, \ldots, n$. Consequently, $\sum_{K=f+1}^{n-f} g_K\pth{x_1}\le 0$.

Similarly, it can be shown that $g_K(x_2)\ge 0$ for $k=1,\ldots, n-f$, and $\sum_{K=f+1}^{n-f} g_K\pth{x_2}\ge 0$.\\

If $\sum_{K=f+1}^{n-f} g_K\pth{x_1}= 0$ or $\sum_{K=f+1}^{n-f} g_K\pth{x_2}=0$, then $x_1$ or $x_2$, respectively,
satisfy equation (\ref{alg1.3}), proving the lemma.

Let us now consider the case when $\sum_{K=f+1}^{n-f} g_K\pth{x_1}< 0$ and $\sum_{K=f+1}^{n-f} g_K\pth{x_2}>0$.
By Proposition \ref{Continous and Monotone of the K-th Rank Functions},
we have that $\sum_{K=f+1}^{n-f} g_K\pth{\cdot}$ is non-decreasing and continuous. Then it follows that $x_1< x_2$, and there exists $\tx\in [x_1, x_2]$ such that
$\sum_{K=f+1}^{n-f} g_K\pth{\tx}=0$, i.e., $\tx$ satisfies equation (\ref{alg1.3}), proving the lemma.

\eproof
\end{proof}

\section{Proofs of Theorems \ref{t_algo2} and \ref{t_algo2-1}}\label{appendix:t_algo2}

By Lemma \ref{alternative well-definedOutPut}, we know that Algorithm $2$ returns a value in $\mathbb{R}$.
Let $\tx$ be the output of Algorithm $2$ for the set of functions $\{h_1(x),h_2(x),\cdots,h_n(x)\}$
gathered in Step 1 of the algorithm.
Sort the above $n$ functions $h_i(x)$ in a {\em non-increasing} order of their $h_i^{\prime}(\tx)$ values, breaking
ties arbitrarily. Let $F_1^*$ denote the first $f$ agents in this sorted order (i.e., agents in $F_1^*$ have the $f$ largest values in the above sorted order); and let $F_2^*$ denote the last $f$ agents in this sorted order (i.e., agents in $F_2^*$ have the $f$ smallest values in the above sorted order).

 Denote $\calR^*=\calV-F_1^*-F_2^*$.
We have
\begin{align}
\nonumber
\sum_{i\in \calR^*} h_i^{\prime}\pth{\tx}&=\sum_{i\in [n]-F_1^*-F_2^*} h_i^{\prime}\pth{\tx}\\
&=\sum_{K=f+1}^{n-f}g_K(\tx)=0 ~~~~~~~~~~~~~~~~~~~~\text{by (\ref{alg1.3})}
\label{alternative R0}
\end{align}

%

%
%
The remaining proof is identical to the proof of Theorems \ref{t_algo1} and \ref{t_algo11} with $\bar{F}_1$ replaced by $F_1^*$, and $\bar{F}_2$ replaced by $F_2^*$.

\section{Proof of Lemma \ref{valid hull non-iter2}}\label{app: valid hull non-iter2}

\begin{proof}
Let $x_1, x_2\in \widetilde{Y}$ such that $x_1\not=x_2$. By definition of $\widetilde{Y}$, there exist valid functions $p_1(x)=\sum_{i\in \calN}\alpha_i h_i(x)\in \widetilde{\calC}$ and $p_2(x)=\sum_{i\in \calN}\beta_i h_i(x)\in \widetilde{\calC}$ such that $x_1\in \argmin~ p_1(x)$ and $x_2\in \argmin~ p_2(x)$, respectively. Note that it is possible that $p_1(\cdot)=p_2(\cdot)$, and that $p_i(\cdot)=\widetilde{p}(\cdot)$ for $i=1$ or $i=2$.\\

Given $0\le \alpha\le 1$, let $x_{\alpha}=\alpha x_1+(1-\alpha) x_2$. We consider two cases:
\begin{itemize}
\item[(i)] ~$x_{\alpha}\in \argmin ~p_1(x)\cup \argmin ~p_2(x) \cup \argmin ~ \widetilde{p}(x)$, and
\item[(ii)] $x_{\alpha}\notin \argmin ~p_1(x)\cup \argmin ~p_2(x) \cup \argmin ~ \widetilde{p}(x)$
\end{itemize}

\paragraph{{\bf Case (i)}:~$x_{\alpha}\in \argmin ~p_1(x)\cup \argmin ~p_2(x) \cup \argmin ~ \widetilde{p}(x)$}

When $x_{\alpha}\in \argmin ~p_1(x)\cup \argmin ~p_2(x) \cup \argmin ~ \widetilde{p}(x)$, by definition of $\widetilde{Y}$, we have
$$ x_{\alpha}\in \argmin ~p_1(x)\cup \argmin ~p_2(x) \cup \argmin ~ \widetilde{p}(x)\subseteq \widetilde{Y}.$$
Thus, $x_{\alpha}\in \widetilde{Y}$.

\paragraph{{\bf Case (ii)}:~$x_{\alpha}\notin \argmin ~p_1(x)\cup \argmin ~p_2(x) \cup \argmin ~ \widetilde{p}(x)$}

By symmetry, WLOG, assume that $x_1<x_2$. By definition of $x_{\alpha}$, it holds that $x_1<x_{\alpha}<x_2$. By assumption of case (ii), it must be that $x_{\alpha}> \max \pth{\argmin p_1(x)}$ and $x_{\alpha}< \min \pth{\argmin p_1(x)}$, which imply that $p_1^{\prime}(x_{\alpha})>0$ and $p_2^{\prime}(x_{\alpha})<0$.

There are two possibilities for $\widetilde{p}^{\prime}(x_{\alpha})$
(the gradient of $\widetilde{p}(x_{\alpha})$):
$\widetilde{p}^{\prime}(x_{\alpha})<0$ or
$\widetilde{p}^{\prime}(x_{\alpha})>0$.
Note that $\widetilde{p}^{\prime}(x_{\alpha})\neq 0$
because $x_\alpha\not\in\argmin ~ \widetilde{p}(x)$.

~

When $\widetilde{p}^{\prime}(x_{\alpha})<0$, there exists $0\le \zeta\le 1$ such that
$$
 \zeta ~p_1^{\prime}(x_{\alpha}) + (1-\zeta)~\widetilde{p}^{\prime}(x_{\alpha})=0.$$
By definition of $p_1(x)$ and $\widetilde{p}(x)$, we have
 \begin{align*}
 0=\zeta ~p_1^{\prime}(x_{\alpha}) + (1-\zeta)~\widetilde{p}^{\prime}(x_{\alpha})&=\zeta ~\pth{\sum_{i\in \calN}\alpha_ih_i^{\prime}(x_{\alpha})} + (1-\zeta)\pth{\frac{1}{|\calN|}\sum_{i\in \calN}h_i^{\prime}(x_{\alpha})}\\
 &=\sum_{i\in \calN}\pth{ \alpha_i\zeta +(1-\zeta)\frac{1}{|\calN|}}h_i^{\prime}(x_{\alpha}).
\end{align*}
Thus, $x_{\alpha}$ is an optimum of function
\begin{align}
\label{valid obj}
\sum_{i\in \calN}\pth{ \alpha_i\zeta +(1-\zeta)\frac{1}{|\calN|}}h_i(x).
\end{align}
Let $\calI$ be the collection of indices defined by
$$\calI\triangleq \{~i: ~ i\in \calN, ~\text{and}~ \alpha_i\zeta +(1-\zeta)\frac{1}{|\calN|}~~\ge~~ \frac{1}{2(|\calN|-f)}~\}.$$
Next we show that $|\calI|\ge |\calN|-f$. Let $\calI_1$ be the collection of indices defined by
$$\calI_1\triangleq \{~i: ~ i\in \calN, ~\text{and}~ \alpha_i\ge \frac{1}{2(|\calN|-f)}~\}.$$
Since $p_1(x)\in \widetilde{\calC}$, then $|\calI_1|\ge |\calN|-f$. In addition, since $n>3f$, $|\calN|< 2(|\calN|-f)$. 
Then, for each $j\in \calI_1$, we have
$$\alpha_i\zeta +(1-\zeta)\frac{1}{|\calN|}\ge \zeta \frac{1}{2(|\calN|-f)}+(1-\zeta)\frac{1}{|\calN|}> \zeta \frac{1}{2(|\calN|-f)}+(1-\zeta)\frac{1}{2(|\calN|-f)}= \frac{1}{2(|\calN|-f)},$$
i.e., $j\in \calI$. Thus, $\calI_1\subseteq \calI$.

Since $|\calI_1|\ge |\calN|-f$, we have $|\calI|\ge |\calN|-f$. So function (\ref{valid obj}) is a valid function. Thus, $x_{\alpha}\in \widetilde{Y}$. \\

Similarly, we can show that the above result holds when $\widetilde{p}^{\prime}(x_{\alpha})>0$.

Therefore, set $\widetilde{Y}$ is convex.

\eproof
\end{proof}

\section{Proof of Lemma \ref{gradient valid non-iter2}}\label{app: gradient valid non-iter2}

\begin{proof}
Let $\calR^*[t-1]$ denote the set of nodes from whom the remaining $n-2f$ values were received in iteration $t$, and let us denote by $\calL[t-1]$ and $\calS[t-1]$ the set of nodes from whom the largest $f$ values and the smallest $f$ values were received in iteration $t$. Due to the fact that each value is transmitted using Byzantine broadcast, $\calR^*[t-1]$, $\calL[t-1]$ and $\calS[t-1]$ do not depend on $j$, for $j\in \calN$. Thus, $\calR^*[t-1]$, $\calL[t-1]$ and $\calS[t-1]$ are well-defined.\\

Let $i^*, j^*\in \calR^*[t-1]$ such that $g_{i^*}[t-1]=\check{g}[t-1]$ and $g_{j^*}[t-1]=\hat{g}[t-1]$.
Recall that $|\calF|=\phi$. Let $\calL^*[t-1]\subseteq \calL[t-1]-\calF$ and $\calS^*[t-1]\subseteq \calS[t-1]-\calF$ such that
$$|\calL^*[t-1]|=f-\phi+|\calR^*[t-1]\cap \calF|,$$
and
$$|\calS^*[t-1]|=f-\phi+|\calR^*[t-1]\cap \calF|.$$

We consider two cases: (i) $\hat{g}[t-1]>\check{g}[t-1]$ and (ii) $\hat{g}[t-1]=\check{g}[t-1]$.

\paragraph{{\bf Case (i)}: $\hat{g}[t-1]>\check{g}[t-1]$.}

By definition of $\calL^*[t-1]$ and $\calS^*[t-1]$, we have
$$\frac{1}{f-\phi+|\calR^*[t-1]\cap \calF|}\sum_{i\in \calS^*[t-1]}g_i[t-1]\le g[t-1]\le \frac{1}{f-\phi+|\calR^*[t-1]\cap \calF|}\sum_{j\in \calL^*[t-1]}g_j[t-1].$$
Thus, there exists $0\le \xi\le 1$ such that
\begin{align}
\label{extrem nonfaulty}
\nonumber
g[t-1]&=\xi\pth{\frac{1}{f-\phi+|\calR^*[t-1]\cap \calF|}\sum_{i\in \calS^*[t-1]}g_i[t-1]}+(1-\xi)\pth{\frac{1}{f-\phi+|\calR^*[t-1]\cap \calF|}\sum_{j\in \calL^*[t-1]}g_j[t-1]}\\
&=\frac{\xi}{f-\phi+|\calR^*[t-1]\cap \calF|}\sum_{i\in \calS^*[t-1]}g_i[t-1]+\frac{1-\xi}{f-\phi+|\calR^*[t-1]\cap \calF|}\sum_{j\in \calL^*[t-1]}g_j[t-1].
\end{align}
By symmetry, WLOG, assume $\xi\ge \frac{1}{2}$. \\

Let $k\in \calR^*[t-1]-\calF$. By symmetry, WLOG, assume $g_k[t-1]\le g[t-1]$. Since $|\calL[t-1]\cup \{j^*\}|=f+1$, there exists a non-faulty agent $j^{\prime}_k\in \calL[t-1]\cup \{j^*\}$. Thus, $g_{j^{\prime}_k}[t-1]\ge \hat{g}[t-1]>g[t-1]$, and there exists $0\le \xi_k\le 1$ such that
\begin{align}
\label{middle nonfaulty}
\frac{1}{2}\pth{\hat{g}[t-1]+\check{g}[t-1]}=g[t-1]=\xi_k g_k[t-1]+(1-\xi_k) g_{j^{\prime}_k}[t-1].
\end{align}
Since $\check{g}[t-1]\le g_k[t-1]$ and $g_{j^{\prime}_k}[t-1]\ge \hat{g}[t-1]$, it can be shown that $\frac{1}{2}\le \xi_k \le 1$.
Therefore, we have
\begin{align}
\label{rewritten 1}
\nonumber
g[t-1]&=\frac{|\calN|-f}{|\calN|-f}g[t-1]\\
\nonumber
&=\frac{1}{|\calN|-f}\pth{\sum_{k\in \calR^*[t-1]-\calF}g[t-1]}+\frac{f-\phi+|\calR^*[t-1]\cap \calF|}{|\calN|-f}g[t-1]\\
\nonumber
&=\frac{1}{|\calN|-f}\sum_{k\in \calR^*[t-1]-\calF}\pth{\xi_k g_k[t-1]+(1-\xi_k) g_{j^{\prime}_k}[t-1]}\\
&\quad +\frac{\xi}{|\calN|-f}\sum_{i\in \calS^*[t-1]}g_i[t-1]+\frac{1-\xi}{|\calN|-f}\sum_{j\in \calL^*[t-1]}g_j[t-1].
\end{align}
In (\ref{rewritten 1}), for each $k\in \calR^*[t-1]-\calF$, it holds that $\frac{\xi_k}{|\calN|-f}\ge \frac{1}{2|\calN|-f}$. For each $k\in \calS^*[t-1]$, it holds that $\frac{\xi}{|\calN|-f}\ge \frac{1}{2|\calN|-f}$. In addition, we have
\begin{align*}
|\pth{\calR^*[t-1]-\calF}\cup \pth{\calS^*[t-1]}|&=|\calR^*[t-1]-\calF|+|\calS^*[t-1]|\\
&=|\calR^*[t-1]|-|\calR^*[t-1]\cap \calF|+|\calS^*[t-1]|\\
&=n-2f-|\calR^*[t-1]\cap \calF|+f-\phi+|\calR^*[t-1]\cap \calF|\\
&=n-\phi-f=|\calN|-f.
\end{align*}

Thus, in (\ref{rewritten 1}), at least $|\calN|-f$ non-faulty agents are assigned with weights lower bounded by $\frac{1}{2(|\calN|-f)}$. Thus $g[t-1]$ is a gradient of a valid function in $\calC$ at $x[t-1]$, i.e., there exists $p(x)\in \calC$ such that $g[t-1]=p^{\prime}(x[t-1]).$

\paragraph{{\bf Case (ii)}: $\hat{g}[t-1]=\check{g}[t-1]$.}
Let $k\in \calR^*[t-1]-\calF$. Since $\hat{g}[t-1]\ge g_k[t-1]\ge \check{g}[t-1]$ and $\hat{g}[t-1]=\check{g}[t-1]$, it holds that $\hat{g}[t-1]=g_k[t-1]=\check{g}[t-1]$.
Consequently, we have
$$g[t-1]=\frac{1}{2}\pth{\hat{g}[t-1]+\check{g}[t-1]}=g_k[t-1].$$
So we can rewrite $g[t-1]$ as follows.
\begin{align}
\label{rewritten 2}
\nonumber
g[t-1]&=\frac{|\calN|-f}{|\calN|-f}g[t-1]\\
\nonumber
&=\frac{1}{|\calN|-f}\pth{\sum_{k\in \calR^*[t-1]-\calF}g[t-1]}+\frac{f-\phi+|\calR^*[t-1]\cap \calF|}{|\calN|-f}g[t-1]\\
\nonumber
&=\frac{1}{|\calN|-f}\sum_{k\in \calR^*[t-1]-\calF}g_k[t-1]\\
&\quad +\frac{\xi}{|\calN|-f}\sum_{i\in \calS^*[t-1]}g_i[t-1]+\frac{1-\xi}{|\calN|-f}\sum_{j\in \calL^*[t-1]}g_j[t-1].
\end{align}
Thus, in (\ref{rewritten 2}), at least $|\calN|-f$ non-faulty agents are assigned with weights lower bounded by $\frac{1}{2(|\calN|-f)}$. Thus $g[t-1]$ is a gradient of a valid function in $\calC$ at $x[t-1]$, i.e., there exists $p(x)\in \calC$ such that $g[t-1]=p^{\prime}(x[t-1]).$

Case (i) and Case (ii) together proves the lemma.

\eproof
\end{proof}

\end{document}